\keywords{Chordal claw-free graphs, descriptive complexity, canonization, isomorphism problem, logarithmic space, polynomial time, fixed-point logic}
\tikzstyle{every picture}=[>=stealth']
\tikzstyle{vertex}=[draw,circle,minimum size=4ex,inner sep=1pt]
\tikzstyle{vertex2}=[draw,rounded corners,inner sep=1pt,minimum size=3ex]
\tikzstyle{vertex3}=[draw,circle,minimum size=3ex,inner sep=1pt]
\tikzstyle{vertex4}=[draw,circle,line width=0.2mm,inner sep=0mm,minimum size=4mm]
\tikzstyle{dotvertex}=[fill,circle,inner sep=3pt]
\newcommand{\alias}[2]{%
  \expandafter\let\csname #1\expandafter\endcsname\csname #2\endcsname
  \expandafter\let\csname end#1\expandafter\endcsname\csname end#2\endcsname
}
\newcommand{\CC}{\mathcal{C}}
\newcommand{\CD}{\mathcal{D}}
\newcommand{\CE}{\mathcal{E}}
\newcommand{\CL}{\mathcal{L}}
\newcommand{\CM}{\mathcal{M}}
\newcommand{\CP}{\mathcal{P}}
\newcommand{\CW}{\mathcal{W}}
\newcommand{\N}{\mathbb{N}}
\newcommand{\tup}[1]{\bar{#1}}
\newcommand{\num}[2][]{\left\langle#2\right\rangle_{\! #1}}
\newcommand{\CHCL}{\textup{CCF}}
\newcommand{\CHCLcon}{\textup{con-CCF}}
\newcommand{\CHCLconcomb}{\textup{(con-)CCF}} 
\newcommand{\isdef}{\mathrel{\mathop:}=}
\newcommand{\set}[1]{\{#1\}}
\newcommand{\card}[1]{\lvert{#1}\rvert}
\newcommand{\dcup}{\mathop{ \dot{\cup}}}
\newcommand{\modout}{\ensuremath{/\!}}
\newcommand{\Dom}{\textup{Dom}}
\newcommand{\dom}{\textup{dom}}
\newcommand{\ar}{\operatorname{ar}}
\newcommand{\free}{\operatorname{free}}
\newcommand{\Domain}[2]{#1^{#2}}
\newcommand{\Logic}{\logic{L}}
\newcommand{\logic}[1]{\textup{\small\textsf{#1}}}
\newcommand{\FO}{\logic{FO}}
\newcommand{\FOC}{\logic{FO{$+$}C}}
\newcommand{\DTC}{\logic{DTC}}
\newcommand{\DTCC}{\logic{DTC{$+$}C}}
\newcommand{\STC}{\logic{STC}}
\newcommand{\STCC}{\logic{STC{$+$}C}}
\newcommand{\TCC}{\logic{TC{$+$}C}}
\newcommand{\FP}{\logic{FP}}
\newcommand{\FPC}{\logic{FP{$+$}C}}
\newcommand{\LREC}{\logic{LREC}}
\newcommand{\plusC}{\text{(}\logic{{+}C}\text{)}} 
\newcommand{\logicf}[1]{\textup{\scriptsize\textsf{#1}}}
\newcommand{\FOf}{\logicf{FO}}
\newcommand{\STCf}{\logicf{STC}}
\newcommand{\IFPf}{\logicf{IFP}}
\newcommand{\IFPCf}{\logicf{IFP{$+$}C}}
\newcommand{\LFPf}{\logicf{LFP}}
\newcommand{\FPf}{\logicf{FP}}
\newcommand{\FPCf}{\logicf{FP{$+$}C}}
\newcommand{\LRECf}{\logicf{LREC}}
\newcommand{\stc}{\operatorname{stc}}
\newcommand{\lrec}{\operatorname{lrec}}
\newcommand{\stcx}[3]{[\stc_{\,{#1},{#2}}{#3}]}
\newcommand{\lreceq}[6]{[\lrec_{#1,#2,#3}\;#4,\;#5,\;#6]}
\newcommand{\phimax}{\textup{mc}}
\newcommand{\graphG}{\mathtt{G}}
\newcommand{\graphC}{\mathtt{C}}
\newcommand{\graphV}{\mathtt{V}}
\newcommand{\graphE}{\mathtt{E}}
\newcommand{\union}{\cup}
\newcommand{\Card}[1]{\left\lvert{#1}\right\rvert}
\newcommand{\Set}[1]{\left\{#1\right\}}
\newcommand{\true}{\top}
\newcommand{\cclassname}[1]{\textup{#1}}
\newcommand{\PTIME}{\cclassname{PTIME}}
\newcommand{\NP}{\cclassname{NP}}
\newcommand{\LOGSPACE}{\cclassname{LOGSPACE}}
\newcommand{\order}[1]{\textup{#1}}
\newcommand{\LO}{\order{LO}}
\newcommand{\counter}{\mathtt{count}}
\newcommand{\abovecut}{\mathtt{inparent}}
\newcommand{\new}{\mathtt{in0children}}
\newcommand{\isforkchild}{\mathtt{isforkchild2}}
\newcommand{\isforkclique}{\mathtt{isforkclique}}
\newcommand{\inbothchildren}{\mathtt{in2children}}
\newcommand{\raute}{{\textup{\begin{tiny}\#\end{tiny}}}}
\newcommand{\limplies}{\rightarrow}
\newcommand{\anc}{\mathrm{anc}}
\begin{document}

\title[Capturing LOGSPACE and PTIME on Chordal Claw-Free Graphs]{Capturing Logarithmic Space and Polynomial Time \texorpdfstring{\\}{} on Chordal Claw-Free Graphs}
\titlecomment{{\lsuper*}This article is an extended version of~\cite{chordclaw2017}}

\author[B.~Gru{\ss}ien]{Berit Gru{\ss}ien}
\address{Humboldt-Universität zu Berlin, Unter den Linden 6, 10099 Berlin, Germany}
\email{grussien@informatik.hu-berlin.de}

\begin{abstract}
\noindent
We show that the class of chordal claw-free graphs admits $\LRECf_=$-definable canonization.
$\LRECf_=$ is a logic that extends first-order logic with counting by an operator that allows it
to formalize a limited form of recursion. This operator can be evaluated in logarithmic space.
It follows that there exists a logarithmic-space canonization algorithm, and therefore a logarithmic-space isomorphism test,
for the class of chordal claw-free graphs.
As a further consequence, $\LRECf_=$ captures logarithmic space on this graph class.
Since $\LRECf_=$ is contained in fixed-point logic with counting, we also obtain
that fixed-point logic with counting captures polynomial time on the class of chordal claw-free graphs.
\end{abstract}

\maketitle

\section{Introduction}\label{sec:intro}

Descriptive complexity is a field of computational complexity theory that provides logical characterizations for the standard complexity classes.
The starting point of descriptive complexity was a theorem of Fagin in 1974~\cite{fag74},
which states that existential second-order logic characterizes, or \emph{captures}, the complexity class $\NP$.
Later, similar logical characterizations were found for further complexity classes.
For example, Immerman proved that deterministic transitive closure logic $\DTC$ captures $\LOGSPACE$~\cite{imm87},
and independently of one another, Immerman~\cite{imm86} and Vardi~\cite{var82} showed that fixed-point logic $\FP$ captures $\PTIME$\footnote{
More precisely, Immerman and Vardi's theorem holds for least fixed-point logic ($\LFPf$) and the equally expressive
  inflationary fixed-point logic ($\IFPf$). Our indeterminate \FPf\ refers to either of these two logics.
}.
However, these two results have a draw-back: They only hold on ordered structures, that is, on structures with a distinguished binary relation which is a
linear order on the universe of the structure.
On structures that are not necessarily ordered, there exist only partial results towards capturing $\LOGSPACE$ or $\PTIME$.

A negative partial result towards capturing  $\LOGSPACE$
follows from Etessami and Immerman's result that (directed) tree isomorphism is not
definable in transitive closure logic with counting $\TCC$~\cite{eteimm00}.
This implies that tree isomorphism is neither definable in deterministic nor symmetric transitive closure logic with counting ($\DTCC$ and $\STCC$),
although it is decidable in $\LOGSPACE$~\cite{Lindell:Tree-Canon}.
Hence, $\DTCC$ and $\STCC$ are not strong enough to capture $\LOGSPACE$ even on the class of trees.
That is why,
in 2011 a new logic with logarithmic-space data complexity was introduced~\cite{GGH+11,GGHL12}.
This logic, $\LREC_=$, is an extension of first-order logic with counting by an operator that allows a limited form of recursion.
$\LREC_=$ strictly contains $\STCC$ and $\DTCC$.
In~\cite{GGH+11,GGHL12}, the authors proved that $\LREC_=$ captures $\LOGSPACE$ on the class of (directed) trees and on the class of interval graphs.
In this paper we now show that $\LREC_=$ captures $\LOGSPACE$ also on the class of chordal claw-free graphs, i.e.,
the class of all graphs that do not contain a cycle of length at least 4 (chordal)
or the complete bipartite graph $K_{1,3}$ (claw-free) as an induced subgraph.
More precisely, this paper's main technical contribution states that the class of chordal claw-free graphs admits $\LREC_=$-definable canonization.
This does not only imply that $\LREC_=$ captures $\LOGSPACE$ on chordal claw-free graphs,
but also that there exists a logarithmic-space canonization algorithm for the class of chordal claw-free graphs.
Hence, the isomorphism problem for this graph class is solvable in logarithmic space.

For polynomial time there also exist partial characterizations.
Fixed-point logic with counting $\FPC$ captures $\PTIME$, for example, on planar graphs~\cite{Grohe98planar},
on all classes of graphs of bounded treewidth~\cite{GroheM99boundedtreewidth}
and on $K_5$-mi\-nor free graphs~\cite{Grohe08a}.
Note that all these classes can be defined by a list of forbidden minors.
In fact, Grohe showed in 2010 that $\FPC$ captures $\PTIME$ on all graph classes with excluded minors~\cite{gro10}.
Instead of graph classes with excluded minors, one can also consider graph classes with excluded induced subgraphs, i.e., graph classes $\CC$ that are closed
under taking induced subgraphs.
For some of these graph classes $\CC$,
e.g., chordal graphs~\cite{Grohe10linegraphs}, comparability graphs~\cite{laubner11diss} and co-comparability graphs~\cite{laubner11diss},
capturing $\PTIME$ on $\CC$
is as hard as capturing $\PTIME$ on the class of all graphs for any ``reasonable'' logic.%
\footnote{
Note that $\FPCf$ does not capture $\PTIME$ on the class of all graphs~\cite{caifurimm92}.
Hence, it does not capture $\PTIME$ on the class of chordal graphs, comparability graphs or co-comparability graphs either.
}
This gives us reason to consider subclasses of chordal graphs, comparability graphs and co-comparability graphs more closely.
There are results showing that $\FPC$ captures $\PTIME$ on
interval graphs (chordal co-comparability graphs)~\cite{Laubner10},
on permutation graphs (comparability co-comparability graphs)~\cite{G17lics} and
on chordal comparability graphs~\cite{diss}.
Further, Grohe proved that $\FPC$ captures $\PTIME$ on chordal line graphs~\cite{Grohe10linegraphs}.
At the same time he conjectured that this is also the case for the class of chordal claw-free graphs, which is an extension of the class of chordal line graphs.
Our main result implies that Grohe's conjecture is true:
Since $\LREC_=$ is contained in $\FPC$, it yields that
there exists an $\FPC$-can\-on\-iza\-tion of the class of chordal claw-free graphs.
Hence, $\FPC$ captures $\PTIME$ also on the class of chordal claw-free graphs.

Our main result is based on a study of chordal claw-free graphs.
Chordal graphs are the intersection graphs of subtrees of a tree~\cite{buneman,gavril,walter},
and a clique tree of a chordal graph corresponds to a minimal representation of the graph as such an intersection graph.
We prove that chordal claw-free graphs are (claw-free) intersection graphs of paths in a tree,
and that for each connected chordal claw-free graph the clique tree is unique.

\subsection*{Structure}

The preliminaries in Section~\ref{chp:prelims} will be followed by a Section~\ref{sec:cliquetree}
where we analyze the structure of clique trees of chordal claw-free graphs, and, e.g.,
show that connected chordal claw-free graphs have a unique clique tree.
In Section~\ref{sec:CliqueTreeDefinability}, we transform the clique tree of a connected chordal claw-free graph into a directed tree, and
color each maximal clique with information about its intersection with other maximal cliques
by using a special coloring with a linearly ordered set of colors.
We obtain what we call the supplemented clique tree, and show that it is definable in $\STCC$ by means of a parameterized transduction.
We know that there exists an $\LREC_=$-can\-on\-iza\-tion of colored directed trees if
the set of colors is linearly ordered~\cite{GGH+11,GGHL12}.
In Section~\ref{sec:canonization-claw-summary}, we apply this $\LREC_=$-can\-on\-iza\-tion to the supplemented clique tree
and obtain the canon of this colored directed tree.
Due to the type of coloring, the information about the maximal cliques is also contained
in the colors of the canon of the supplemented clique tree.
This information and the linear order on the vertices of the canon of the supplemented clique tree allow us to
define the maximal cliques of a canon of the connected chordal claw-free graph,
from which we can easily construct the canon of the graph.
By combining the canons of the connected components, we obtain a canon for each chordal claw free graph.
Finally, we present consequences of this canonization result in Section~\ref{sec:implications} and conclude in Section~\ref{sec:conclusion}.

\section{Basic Definitions and Notation}\label{chp:prelims}

We write $\N$ for the set of all non-negative integers.
For all ${n,n' \in \N}$, we define $[n,n']:=\{m\in \N\mid n\leq m\leq n'\}$ and $[n]:=[1,n]$.
We often denote tuples $(a_1,\dots,a_k)$ by $\bar{a}$.
Given a tuple $\tup{a} = (a_1,\dotsc,a_k)$, let $\tilde{a} \isdef \set{a_1,\dotsc,a_k}$.
Let $n\geq 1$. Let $\tup{a}^i = (a_1^i,\dotsc,a_{k_i}^i)$ be a tuple of length $k_i$ for each $i\in[n]$.
We denote the tuple $(a_1^1,\dotsc,a_{k_1}^1,\,\dots\, ,a_1^n,\dotsc,a_{k_n}^n)$ by $(\tup{a}^1,\dots,\tup{a}^n)$.
Mappings $f\colon A \to B$ are extended to tuples $\tup{a} = (a_1,\dotsc,a_k)$ over $A$
via $f(\tup{a}) \isdef (f(a_1),\dotsc,f(a_k))$.
Let $\approx$ be an equivalence relation on a set $S$. Then $a\modout_\approx$ denotes the equivalence class of $a\in S$ with respect to $\approx$.
For $\tup{a}=(a_1,\dots,a_n)\in S^n$ and $R\subseteq S^n\!$,
we let $\tup{a}\modout_\approx:=(a_1\modout_\approx,\dots,a_n\modout_\approx)$ and $R\modout_\approx:=\{\tup{a}\modout_\approx\mid \tup{a}\in R\}$.
A~\emph{partition} of a set $S$ is a set $\CP$ of disjoint non-empty subsets of $S$ where $S=\bigcup_{A\in\CP} A$.
For a set $S$, we let ${\binom{S}{2}}$ be the set of all 2-element subsets of $S$.

\subsection{Graphs and LO-Colorings}

A \emph{graph} is a pair $(V,E)$ consisting of a non-emp\-ty finite set $V$ of \emph{vertices}
and a set $E\subseteq {\binom{V}{2}}$ of \emph{edges}.
Let $G=(V,E)$ and $G'=(V'\!,E')$ be graphs.
The \emph{union} $G\cup G'$ of $G$ and $G'$ is the graph $(V\cup V',E\cup E')$.
For a subset $W\subseteq V$ of vertices, $G[W]$ denotes the \emph{induced subgraph}
of $G$ with vertex set $W\!$.
\emph{Connectivity} and \emph{connected components} are defined in the usual way.
We denote the \emph{neighbors} of a vertex $v\in V$ by $N(v)$.
A set $B\subseteq V$ is a \emph{clique} if ${\binom{B}{2}}\subseteq E$.
A maximal clique, or \emph{max clique}, is a clique that is not properly contained in any other clique.

A graph is \emph{chordal}
if all its cycles of length at least $4$ have a chord,
which is an edge that connects two non-consecutive vertices of the cycle.
A \emph{claw-free} graph is a graph that does not have a \emph{claw}, i.e., a graph isomorphic to the complete bipartite graph $K_{1,3}$, as an induced subgraph.
We denote the class of (connected) chordal claw-free graphs by $\CHCLconcomb$.

A subgraph $P$ of $G$ is a \emph{path} of $G$ if $P=(\{v_0,\dots,v_k\},\{\{v_0,v_1\},\dots,\{v_{k-1},v_k\}\})$ for
distinct vertices $v_0,\dots,v_k$ of $G$.
We also denote the path $P$ by the sequence $v_0,\dots,v_k$ of vertices.
We let $v_0$ and $v_k$ be the \emph{ends} of $P\hspace{-1pt}$.
A connected acyclic graph is a \emph{tree}.
Let $T=(V,E)$ be a tree.
A \emph{subtree} of $T$ is a connected subgraph of $T$. A vertex $v\in V$ of degree $1$ is called a \emph{leaf}.

A pair $(V, E)$ is a \emph{directed graph} or \emph{digraph} if $V$ is a non-emp\-ty finite set and $E\subseteq V^2\!$.
A \emph{path} of a digraph $G\hspace{-1.5pt}=\hspace{-1.5pt}(V,\hspace{-1pt}E)$ is a directed subgraph
$P\hspace{-1.5pt}=\hspace{-1.5pt}(\{\hspace{-0.5pt}v_0,\dots,v_k\hspace{-0.5pt}\},\{(\hspace{-0.5pt}v_0,v_1\hspace{-0.5pt}),\dots,
(\hspace{-0.5pt}v_{k-1},v_k\hspace{-0.5pt})\})$ of $G$ where the vertices $v_0,\dots,v_k\in V$ are distinct.
A connected  acyclic digraph where the in-degree of each vertex is at most~$1$ is a \emph{directed tree}.
Let $T=(V,E)$ be a directed tree.
The vertex of in-degree $0$ is the \emph{root} of $T$.
If $(v,w)\in E$, then~$w$ is a \emph{child} of $v$, and $v$ the \emph{parent} of $w$.
Let $w,w'$ be children of $v\in V\!$. Then $w$ is a \emph{sibling} of $w'$ if $w\not=w'\!$.
If there is a path from $v\in V$ to
$w\in V$ in $T$, then $v$ is an \emph{ancestor} of $w$.

Let $G=(V,E)$ be a digraph and $f\colon V\to C$ be a mapping from the vertices of $G$ to a finite set $C$.
Then $f$ is a \emph{coloring} of $G$, and the elements of $C$ are called \emph{colors}.
In this paper we color the vertices of a digraph with binary relations on a linearly ordered set.
We call digraphs with such a coloring \emph{\LO-col\-ored digraphs}.
More precisely, an \LO-col\-ored digraph is a tuple $G=(V,E,M,\trianglelefteq,L)$ with the following four properties:

\begin{enumerate}
 \item The pair $(V,E)$ is a digraph. We call $(V,E)$ the \emph{underlying digraph} of~$G$.
 \item The set of \emph{basic color elements} $M$ is a non-emp\-ty finite set with $M\cap V=\emptyset$.
 \item The binary relation $\trianglelefteq\ \subseteq M^2$ is a linear order on $M$.
 \item The ternary relation $L\subseteq V \times M^2$ assigns to every vertex $v\in V$ an \emph{$\LO$-color}
 \newline  ${L_v\hspace{-1pt} :=\hspace{-1pt} \{(d,d')\hspace{-1pt}\mid\hspace{-1pt} (v,d,d')\hspace{-1pt}\in\hspace{-1pt} L\}}$.
\end{enumerate}

\noindent
We can use the linear order $\trianglelefteq$ on $M$ to obtain a linear order
on the colors $\{L_v\mid v \in V\}$
of~$G$.
Thus, an \LO-col\-ored digraph is a special kind of colored digraph with a linear order on its colors.

\subsection{Structures}

A \emph{vocabulary} is a finite set $\tau$ of
relation symbols. Each relation symbol $R\in\tau$ has a fixed arity $\ar(R)\in \N$.
A
\emph{$\tau$-struc\-ture} $A$
consists of a non-emp\-ty finite set $U(A)$, its \emph{universe},
and for each relation symbol $R\in \tau$ of a relation $R(A)\subseteq {U(A)}^{\ar(R)}$.

An \emph{isomorphism} between $\tau$-struc\-tures $A$ and $B$ is a bijection
$f\colon U(A)\to U(B)$ such that for all $R\in \tau$ and all
$\tup{a}\in {U(A)}^{\ar(R)}$ we have $\tup{a}\in R(A)$ if and only if $f(\tup{a})\in R(B)$.
We write $A\cong B$ to indicate that $A$ and $B$ are \emph{isomorphic}.

Let $E$ be a binary relation symbol.
Each graph corresponds to an $\{E\}$-struc\-ture $G=(V,E)$ where the universe $V$ is the vertex set and $E$ is an
irreflexive and symmetric binary relation, the edge relation.
Similarly, a digraph is represented by an $\{E\}$-struc\-ture $G=(V,E)$ where $V$ is the vertex set and the edge relation $E$ is an
irreflexive binary relation.
To represent an \LO-col\-ored digraph ${G=(V,E,M,\trianglelefteq,L)}$ as a logical structure, we extend the
$5$-tu\-ple by a set $U$ to a $6$-tu\-ple $(U,V,E,M,\trianglelefteq,L)$, and we require that $U=V\dcup M$  in addition to the properties 1--4.
The set $U$ serves as the universe of the structure, and $V,E,M,\trianglelefteq,L$ are relations on $U$.
We usually do not distinguish between (\LO-col\-ored) digraphs and their representation as logical structures.
It will be clear from the context which form we are referring~to.

\subsection{Logics}\label{sec:logiken}

In this section we introduce first-order logic with counting, symmetric transitive closure logic (with counting) and the logic $\LREC_=$.
We assume basic knowledge in logic, in particular of \emph{first-or\-der logic \textup{($\FO$)}}. \smallskip\medskip

\noindent\emph{First-or\-der logic with counting (\hspace{1.5pt}$\FOC$)}
extends $\FO$ by a counting operator
that allows for counting the cardinality of $\FOC$-de\-fin\-able relations.
It lives in a two-sorted context,
where structures $A$ are equipped with a \emph{number sort}
$N(A) \isdef [0,\card{U(A)}]$.
$\FOC$ has two types of variables:
$\FOC$-vari\-ables are either \emph{structure variables}
that range over the universe $U(A)$ of a structure~$A$,
or \emph{number variables} that range over the number sort $N(A)$.
For each variable $u$,
let $A^{u} \isdef U(A)$ if $u$ is a structure variable,
and $A^{u} \isdef N(A)$ if $u$ is a number variable.
Let ${A}^{(u_1,\dotsc,u_k)} := {A}^{u_1} \times \dotsb \times {A}^{u_k}$.
Tuples $(u_1,\dotsc,u_k)$ and $(v_1,\dotsc,v_\ell)$ of variables
are \emph{compatible} if $k = \ell$,
and for every $i \in [k]$ the variables $u_i$ and $v_i$ are of the same type.
An \emph{assignment in $A$} is a mapping $\alpha$
from the set of variables to $U(A) \cup N(A)$,
where for each variable $u$ we have $\alpha(u) \in {A}^{u}$.
For tuples $\tup{u} = (u_1,\dotsc,u_k)$ of variables
and $\tup{a} = (a_1,\dotsc,a_k) \in {A}^{\tup{u}}$,
the assignment $\alpha[\tup{a}/\tup{u}]$
maps $u_i$ to $a_i$ for each $i \in [k]$,
and each variable $v \not \in \tilde{u}$ to $\alpha(v)$.
By $\varphi(u_1,\dotsc,u_k)$ we denote a formula $\varphi$
with $\free(\varphi) \subseteq \set{u_1,\dotsc,u_k}$, where $\free(\varphi)$ is the set of free variables in $\varphi$.
Given a formula $\varphi(u_1,\dotsc,u_k)$, a structure $A$
and $(a_1,\dotsc,a_k) \in A^{(u_1,\dotsc,u_k)}$,
we write $A \models \varphi[a_1,\dotsc,a_k]$ if $\varphi$ holds in $A$
with $u_i$ assigned to $a_i$ for each $i \in [k]$.
We write $\varphi[A,\alpha;\tup{u}]$
for the set of all tuples $\tup{a} \in \Domain{A}{\tup{u}}$
with $(A,\alpha[\tup{a}/\tup{u}]) \models \varphi$.
For a formula  $\varphi(\tup{u})$ (with $\free(\varphi) \subseteq \tilde{u}$)
we also denote $\varphi[A,\alpha;\tup{u}]$ by $\varphi[A;\tup{u}]$, and
for a formula  $\varphi(\tup{v},\tup{u})$
and $\tup{a}\in A^{\tup{v}}\!$,
we denote
$\varphi[A,\alpha[\tup{a}/\tup{v}];\tup{u}]$ also by $\varphi[A,\tup{a};\tup{u}]$.\vspace{2pt}

$\FOC$ is obtained by extending $\FO$ with the following
formula formation rules:
\begin{itemize}
 \item $\phi:= p \leq q$ is a formula if $p,q$ are number variables.
 We let $\free(\phi):= \{p,q\}$.
 \item $\phi':= \# \tup{u}\,\psi = \tup{p}$ is a formula if $\psi$ is a formula, $\tup{u}$ is a tuple of variables and $\tup{p}$ a tuple of number variables.
 We let $\free(\phi'):= (\free(\psi) \setminus \tilde{u}) \cup \tilde{p}$.
\end{itemize}
To define the semantics, let $A$ be a structure and $\alpha$ be an assignment.
We let
\begin{itemize}
 \item $(A,\alpha)\models p \leq q$ iff $\alpha(p)\leq \alpha(q)$,
 \item $(A,\alpha)\models \# \tup{u}\,\psi = \tup{p}$ iff
 $|\psi[A,\alpha;\tup{u}]|=\num[A]{\alpha(\tup{p})}$,
\end{itemize}
where for tuples $\tup{n} = (n_1,\dotsc,n_k) \in {N(A)}^k$
we let $\num[A]{\tup{n}}$ be the number
\[
  \num[A]{\tup{n}}\ \isdef\
  \sum_{i=1}^k\, n_i \cdot {(\card{U(A)}+1)}^{i-1}.
\]\smallskip

\noindent\emph{Symmetric transitive closure logic (with counting)~$\STC\plusC$}
is an extension of $\FO\plusC$ with $\stc$-op\-er\-a\-tors.
The set of all $\STC\plusC$-for\-mu\-las is obtained by extending
the formula formation rules of $\FO\plusC$ by the following rule:
\begin{itemize}
 \item $\phi:=\stcx{\tup{u}}{\tup{v}}{\psi}(\tup{u}'\!,\tup{v}')$ is a formula if $\psi$ is a formula
 and $\tup{u},\tup{v},\tup{u}'\!,\tup{v}'$ are compatible tuples of structure (and number) variables.
 We let $\free(\phi):=\tilde{u}'\cup\tilde{v}'\cup\big(\free(\psi)\setminus(\tilde{u}\cup\tilde{v})\big)$.
\end{itemize}
Let $A$ be a structure and $\alpha$ be an assignment.
We let
\begin{itemize}
 \item $(A,\alpha)\models \stcx{\tup{u}}{\tup{v}}{\psi}(\tup{u}'\!,\tup{v}')$ iff
$(\alpha(\tup{u}'),\alpha(\tup{v}'))$ is contained in the symmetric transitive closure of
$\psi[A,\alpha;\tup{u},\tup{v}]$.
\end{itemize}\medskip

\noindent
\emph{$\LREC_=$}
is an extension of $\FOC$ with $\lrec$-op\-er\-a\-tors, which allow a limited form of recursion.
The $\lrec$-operator controls the depth of the recursion by a ``resource term''. It thereby makes sure
that the recursive definition can be evaluated in logarithmic space.
A detailed introduction of $\LREC_=$ can be found in~\cite{GGHL12}.
Note that we only use previous results about $\LREC_=$ and do not present any formulas using $\lrec$-op\-er\-a\-tors in this paper.
We obtain $\LREC_=$ by extending
the formula formation rules of $\FOC$ by the following rule:
\begin{itemize}
 \item $\phi:=\lreceq{\tup{u}}{\tup{v}}{\tup{p}}{\varphi_{=}}{\varphi_{\graphE}}{\varphi_{\graphC}}(\tup{w},\tup{r})$ is  a formula if
 $\varphi_{=}$, $\varphi_{\graphE}$ and $\varphi_{\graphC}$ are formulas,
$\bar{u},\bar{v},\bar{w}$ are compatible tuples of variables and
$\bar{p},\bar{r}$ are non-empty tuples of number variables.\\
We let $\free(\phi) :=
  \bigl(\free(\varphi_{=}) \setminus (\tilde{u} \union \tilde{v})\bigr)
  \union
  \bigl(\free(\varphi_{\graphE}) \setminus (\tilde{u} \union \tilde{v})\bigr)
  \union
  \bigl(\free(\varphi_{\graphC}) \setminus (\tilde{u} \union \tilde{p})\bigr)
  \union
  \tilde{w} \union \tilde{r}$.
\end{itemize}
Let $A$ be a structure and $\alpha$ be an assignment. We let
\begin{itemize}
\item $(A,\alpha) \models \lreceq{\tup{u}}{\tup{v}}{\tup{p}}{\varphi_{=}}{\varphi_{\graphE}}{\varphi_{\graphC}}(\tup{w},\tup{r})$ iff
$\bigl(\alpha(\tup{w})/_\sim,\num[A]{\alpha(\tup{r})} \bigr) \in X$,
\end{itemize}
where $X$ and $\sim$  are defined as follows:
Let $\graphV_0 \isdef \Domain{A}{\tup{u}}$
and $\graphE_0 \isdef \varphi_{\graphE}[A,\alpha;\tup{u},\tup{v}]\cap {(\graphV_0)}^2$.
We define $\sim$ to be the reflexive, symmetric, transitive closure
of the binary relation ${\varphi_{=}[A,\alpha;\tup{u},\tup{v}]\cap {{(\graphV_0)}^2}}$.
Now consider the graph $\graphG = (\graphV,\graphE)$ with $\graphV := \graphV_0/_{\hspace{-2pt}\sim}$ and
$\graphE:= \graphE_0/_{\hspace{-2pt}\sim}$.
To every $\tup{a}/_{\hspace{-2pt}\sim} \in \graphV$ we assign the set
$\graphC(\tup{a}/_{\hspace{-2pt}\sim})\, \isdef\,
  \set{\num[A]{\tup{n}} \mid \text{there is an $\tup{a}'\! \in \tup{a}/_{\hspace{-2pt}\sim}$
      with $\tup{n} \in \varphi_{\graphC}[A,\alpha[\tup{a}'\!/\tup{u}];\tup{p}]$}}$
of numbers.
Let
$\tup{a}/_{\hspace{-2pt}\sim}\hspace{1pt} \graphE \isdef
  \set{\tup{b}/_{\hspace{-2pt}\sim}\hspace{-1pt} \in \graphV \mid (\tup{a}/_{\hspace{-2pt}\sim},\tup{b}/_{\hspace{-2pt}\sim}) \in \graphE}$
and
$\graphE\hspace{1pt} \tup{b}/_{\hspace{-2pt}\sim} \isdef \set{\tup{a}/_{\hspace{-2pt}\sim}\hspace{-1pt} \in
\graphV \mid (\tup{a}/_{\hspace{-2pt}\sim},\tup{b}/_{\hspace{-2pt}\sim}) \in \graphE}$.
Then, for all $\tup{a}/_{\hspace{-2pt}\sim}\hspace{-1pt} \in \graphV$ and $\ell \in \N$,
\begin{align*}
  (\tup{a}/_{\hspace{-2pt}\sim},\ell) \in X
  \ :\Longleftrightarrow\
  \ell > 0
  \ \, \text{and}\ \,
  \Card{
    \Set{
      \tup{b}/_{\hspace{-2pt}\sim} \in \tup{a}/_{\hspace{-2pt}\sim}\hspace{1pt} \graphE
      \ \bigg\vert\
      \left(
        \tup{b}/_{\hspace{-2pt}\sim},
        \left\lfloor\frac{\ell-1}{\card{\graphE \hspace{1pt}\tup{b}/_{\hspace{-2pt}\sim}}}\right\rfloor
      \right)
      \in X
    }
  } \in \graphC(\tup{a}/_{\hspace{-2pt}\sim}).
\end{align*}

\noindent
$\LREC_=$ semantically contains $\STCC$~\cite{GGHL12}.
Note that simple arithmetics like addition and multiplication are definable in $\STCC$, and therefore, in $\LREC_=$.
Like $\STCC$-formulas~\cite{rei05}, $\LREC_=$-formulas~\cite{GGHL12} can be evaluated in logarithmic space.

\subsection{Transductions}\label{sec:transduction-allg}

Transductions (also known as \emph{syntactical interpretations})
define certain structures within other structures.
Detailed introductions with a lot of examples can be found in~\cite{groheDC,diss}.
In the following we briefly introduce transductions, consider compositions of tranductions,
and present the new notion of counting transductions.

\begin{definition}[Parameterized Transduction]%
\label{def:paratransduction}\label{def:interpr}
  Let $\tau_1,\tau_2$ be vocabularies, and let $\Logic$ be a logic that extends $\FO$.
  \begin{enumerate}
  \item
    A \emph{parameterized $\Logic[\tau_1,\tau_2]$-trans\-duc\-tion}  is a tuple
    \begin{align*}
     \Theta(\tup{x}) \,=\,
      \Bigl(
        \theta_\dom(\tup{x}),
        \theta_U(\tup{x},\tup{u}),
        \theta_\approx(\tup{x},\tup{u},\tup{u}'),
        {\bigl(
          \theta_R(\tup{x},\tup{u}_{R,1},\dots,\tup{u}_{R,{\ar(R)}})
        \bigr)}_{R \in \tau_2}
      \Bigr)
    \end{align*}
    of $\Logic[\tau_1]$-formulas,
     where
     $\tup{x}$ is a tuple of structure variables, and
	$\tup{u},\tup{u}'$ and $\tup{u}_{R,i}$ for every ${R \in \tau_2}$ and $i \in [\ar(R)]$
	are compatible tuples of  variables.
  \item
    The \emph{domain} of $\Theta(\tup{x})$ is the class $\Dom(\Theta(\tup{x}))$ of all
    pairs $(A,\tup{p})$ such that $A\models \theta_\dom[\tup{p}]$, $\theta_U[A,\tup{p};\tup{u}]$ is not empty
    and $\theta_\approx[A,\tup{p};\tup{u},\tup{u}']$ is an equivalence relation, where $A$ is a $\tau_1$-struc\-ture and $\tup{p}\in A^{\tup{x}}\!$.
    The elements in $\tup{p}$ are called \emph{parameters}.
	\item
	Let $(A,\tup{p})$ be in the domain of $\Theta(\tup{x})$, and let us denote $\theta_\approx[A,\tup{p};\tup{u},\tup{u}']$ by $\approx$.
    We define a $\tau_2$-struc\-ture $\Theta[A,\tup{p}]$ as follows.
    We let
    \[
      U(\Theta[A,\tup{p}]) \,:=\, \theta_U[A,\tup{p};\tup{u}]\modout_{\approx}
    \]
    be the universe of $\Theta[A,\tup{p}]$.
    Further, for each $R \in \tau_2$, we let
    \begin{align*}
    R(\Theta[A,\tup{p}]):= \Big(\theta_R[A,\tup{p};\tup{u}_{R,1},\dots,\tup{u}_{R,{\ar(R)}}] \cap {\theta_U[A,\tup{p};\tup{u}]}^{\ar(R)}\Big)\Big\modout_{\!\approx}.
    \end{align*}
  \end{enumerate}
\end{definition}

\noindent
A parameterized $\logic{L}[\tau_1,\tau_2]$-trans\-duc\-tion  defines a parameterized mapping
from  $\tau_1$-struc\-tures into $\tau_2$-struc\-tures
via $\logic{L}[\tau_1]$-for\-mu\-las.\footnote{
In Section~\ref{sec:directedCTdefintion}, for example, we will define a parameterized $\STCf$-transduction that
maps trees to directed trees.
It uses a leaf $r$ of a tree $T$ as a parameter to root the tree $T$ at $r$.
}
If $\theta_{\dom}:=\true$ or $\theta_\approx := u_1=u_1'\land\dots\land u_k=u_k'$, we omit the respective formula in the presentation of the transduction.
A parameterized $\logic{L}[\tau_1,\tau_2]$-trans\-duc\-tion $\Theta(\tup{x})$ is an \emph{$\logic{L}[\tau_1,\tau_2]$-trans\-duc\-tion}
if $\tup{x}$ is the empty tuple. Let $\tup{x}$ be the empty tuple.
For simplicity, we denote a trans\-duc\-tion $\Theta(\tup{x})$ by $\Theta$, and we write $A\in \Dom(\Theta)$ if $(A,\tup{x})$ is contained in the domain of $\Theta$.

An important property of $\Logic[\tau_1,\tau_2]$-trans\-duc\-tions
is that, for suitable logics $\Logic$, they allow to \emph{pull back} $\Logic[\tau_2]$-for\-mu\-las,
which means that for each $\Logic[\tau_2]$-for\-mu\-la there exists an $\Logic[\tau_1]$-for\-mu\-la that expresses essentially the same.
A logic $\Logic$ is \emph{closed under (parameterized) $\Logic$-trans\-duc\-tions} if
for all vocabularies  $\tau_1,\tau_2$
each (parameterized) $\Logic[\tau_1,\tau_2]$-trans\-duc\-tion allows to pull back $\Logic[\tau_2]$-for\-mu\-las.

Let $\CL$ be the following set of logics:
\[\CL:=\{\FO,\FOC, \STC,\STCC,\LREC_=\}.\]
Each logic $\Logic\in\CL$ is closed under $\Logic$-transductions. Precisely, this means that:
\begin{propC}[\cite{ebbflu99,GGHL12,diss}]%
\label{prop:transduction-lemma}
  Let $\tau_1,\tau_2$ be vocabularies and $\Logic\in\CL$.
  Let $\Theta(\tup{x})$ be a parameterized $\Logic[\tau_1,\tau_2]$-trans\-duc\-tion,
  where $\ell$-tu\-ple $\tup{u}$ is the tuple of domain variables.
  Further, let $\psi(x_1,\dotsc,x_\kappa,p_1,\dotsc,p_\lambda)$
  be an $\Logic[\tau_2]$-for\-mu\-la where
  $x_1,\dotsc,x_\kappa$  are structure variables and $p_1,\dotsc,p_\lambda$ are number variables.
  Then there exists an $\Logic[\tau_1]$-for\-mu\-la
  $\psi^{-\Theta}(\tup{x}, \tup{u}_1,\dotsc,\tup{u}_\kappa,
      \tup{q}_1,\dotsc,\tup{q}_\lambda)$,
  where
  $\tup{u}_1,\dotsc,\tup{u}_\kappa$ are compatible with~$\tup{u}$ and
  $\tup{q}_1,\dotsc,\tup{q}_\lambda$ are $\ell$-tu\-ples of number variables,
  such that for all $(A,\tup{p})\in \Dom(\Theta(\tup{x}))$,
  all $\tup{a}_1,\dotsc,\tup{a}_\kappa \in A^{\tup{u}}$ and
  all $\tup{n}_1,\dotsc,\tup{n}_\lambda \in {N(A)}^\ell$,
  \begin{align*}
    A \models
    \psi^{-\Theta}[\tup{p},\tup{a}_1,\dots,\tup{a}_\kappa,
      \tup{n}_1,\dots,\tup{n}_\lambda]
    \iff\
    & {\tup{a}_1}\modout_{\approx},\dots,{\tup{a}_\kappa}\modout_{\approx}
      \in U(\Theta[A,\tup{p}]), \\
    & \!\num[A]{\tup{n}_1},\dots,\num[A]{\tup{n}_\lambda}
      \in N(\Theta[A,\tup{p}])\text{ and}\\
    & \Theta[A,\tup{p}] \models
      \psi\bigl[
       {\tup{a}_1}\modout_{\approx},\dots,{\tup{a}_\kappa}\modout_{\approx},
        \num[A]{\tup{n}_1},\dots,\num[A]{\tup{n}_\lambda}
      \bigr],
  \end{align*}
  where $\approx$ is the equivalence relation $\theta_\approx[A,\tup{p};\tup{u},\tup{u}']$ on $A^{\tup{u}}$.
\end{propC}

\noindent
The following proposition shows that for each logic $\Logic\in\CL$,
the composition of a parameterized $\Logic$-trans\-duc\-tion and an $\Logic$-trans\-duc\-tion
is again a parameterized $\Logic$-trans\-duc\-tion. Note that this is a consequence of
Proposition~\ref{prop:transduction-lemma}.
\begin{propC}[\cite{diss}]\label{prop:composition}
Let $\tau_1$, $\tau_2$ and $\tau_3$ be vocabularies and let $\Logic\in\CL$.
	Let $\Theta_1\big(\tup{x}\big)$ be a parameterized $\Logic[\tau_1,\tau_2]$-trans\-duc\-tion and
	$\Theta_2$ be an $\Logic[\tau_2,\tau_3]$-trans\-duc\-tion.
	Then there exists a parameterized $\Logic[\tau_1,\tau_3]$-trans\-duc\-tion $\Theta\big(\tup{x}\big)$\hspace{-1pt}
	such that
	for all $\tau_1$-struc\-tures $A$ and all $\tup{p}\in A^{\tup{x}}$\!,
	\begin{align*}
		\big(A,\tup{p}\big)\in \Dom\big(\Theta\big(\tup{x}\big)\big)\iff
		\big(A,\tup{p}\big)\in \Dom\big(\Theta_1\big(\tup{x}\big)\big)\ \text{ and }\
		\Theta_1\big[A,\tup{p}\big]\in \Dom\big(\Theta_2\big),
	\end{align*}
	and for all $\big(A,\tup{p}\big)\in \Dom\big(\Theta\big(\tup{x}\big)\big)$,
	\begin{align*}
		\Theta\big[A,\tup{p}\big]\cong\Theta_2\big[\Theta_1\big[A,\tup{p}\big]\big].
	\end{align*}
\end{propC}

\noindent
In the following we introduce the new notion of parameterized counting transductions for $\STCC$.
The universe of the structure $\Theta^{\raute}[A,\tup{p}]$ defined by a parameterized counting transduction $\Theta^{\raute}(\tup{x})$
always also includes the number sort $N(A)$ of $A$, for all structures~$A$ and tuples $\tup{p}$ of parameters
from the domain of $\Theta^{\raute}(\tup{x})$.
More precisely,
the universe of $\Theta^{\raute}[A,\tup{p}]$ contains all equivalence classes $\{n\}$ where  $n\in N(A)$
and all equivalence classes that the universe of $\Theta^{\raute}[A,\tup{p}]$ would contain
if we interpreted the parameterized counting transduction $\Theta^{\raute}(\tup{x})$ as a parameterized transduction.
Parameterized counting transductions are as powerful as parameterized transductions.
Presenting a parameterized counting transduction instead of a parameterized transduction will contribute to a clearer presentation.

\begin{definition}[Parameterized Counting Transduction]\label{def:CountingTransduction}
  Let $\tau_1,\tau_2$ be vocabularies.
  \begin{enumerate}
  \item
    A  \emph{parameterized $\STCC[\tau_1,\tau_2]$-counting transduction} is a tuple
    \begin{align*}
      \Theta^{\raute}(\tup{x}) &\,=\,
      \Bigl( \theta^{\raute}_{\dom}(\tup{x}),
        \theta^{\raute}_U(\tup{x},\tup{u}),
        \theta^{\raute}_\approx(\tup{x},\tup{u},\tup{u}'),
        {\bigl(
          \theta^{\raute}_R(\tup{x},\tup{u}_{R,1},\dots,\tup{u}_{R,{\ar(R)}})
        \bigr)}_{R \in \tau_2}
      \Bigr)
    \end{align*}
    of $\STCC[\tau_1]$-formulas,
     where
     $\tup{x}$ is a tuple of structure variables,
	$\tup{u},\tup{u}'$ are compatible tuples of variables but not tuples of number variables of length $1$,\footnote{
	We do not allow $\tup{u},\tup{u}'$ to be tuples of number variables of length $1$,
	as the equivalence classes $\{n\}$ for $n\in N(A)$ are always added to the universe of $\Theta^{\raute}[A,\tup{p}]$.
	This will become more clear with the definition of the universe $U(\Theta^{\raute}[A,\tup{p}])$ in (\ref{enumcountttransduction3}).
	}
    and for every $R \in \tau_2$ and $i \in [\ar(R)]$,
    $\tup{u}_{R,i}$~is a tuple of variables
    that is compatible to $\tup{u}$ or a tuple of number variables of length $1$.
\item
    The \emph{domain} of $\Theta^{\raute}\hspace{-0.5pt}(\hspace{-0.5pt}\tup{x}\hspace{-0.5pt})$
    is the class $\Dom(\Theta^{\raute}\hspace{-0.5pt}(\hspace{-0.5pt}\tup{x}\hspace{-0.5pt}))$
    of all pairs $(A,\tup{p})$ such that ${A\hspace{-0.5pt}\models\hspace{-0.5pt} \theta^{\raute}_{\dom}[\tup{p}]}$ and
    $\theta^{\raute}_\approx[A,\tup{p};\tup{u},\tup{u}']$ is an equivalence relation,
    where $A$ is a $\tau_1$-struc\-ture and $\tup{p}\in A^{\tup{x}}\!$.
\item\label{enumcountttransduction3}
    Let $(A,\tup{p})$ be in the domain of counting transduction $\Theta^{\raute}(\tup{x})$ and let us denote the equivalence relation
    $\theta^{\raute}_\approx[A,\tup{p};\tup{u},\tup{u}']\cup \{(n,n)\mid n\in N(A)\}$ by $\approx$.
    We define a $\tau_2$-struc\-ture $\Theta^{\raute}[A,\tup{p}]$ as follows.
    We let
    \[
      U(\Theta^{\raute}[A,\tup{p}]) \,:=\, \big(\theta^{\raute}_U[A,\tup{p};\tup{u}]\, \dcup\ N(A)\big)\modout_{\approx}
    \]
    be the universe of $\Theta^{\raute}[A,\tup{p}]$.
    Further, for each $R \in \tau_2$, we let
    \begin{align*}
    R(\Theta^{\raute}[A,\tup{p}]):= \Big(\theta_R^{\raute}[A,\tup{p};\tup{u}_{R,1},\dots,\tup{u}_{R,{\ar(R)}}] \cap
    {\big(\theta_U^{\raute}[A,\tup{p};\tup{u}]\dcup N(A)\big)}^{\ar(R)}\Big)\Big\modout_{\!\approx}.
    \end{align*}
\end{enumerate}
\end{definition}

\begin{propC}[\cite{diss}]\label{thm:CountingTransduction}
 Let $\Theta^{\raute}(\bar{x})$ be a parameterized $\STCC[\tau_1,\tau_2]$-counting transduction.
 Then there exists a parameterized $\STCC[\tau_1,\tau_2]$-trans\-duc\-tion $\Theta(\bar{x})$ such that
 \begin{itemize}
 	\item $\Dom(\Theta(\tup{x}))=\Dom(\Theta^{\raute}(\tup{x}))$ and
 	\item $\Theta[A,\tup{p}] \cong \Theta^{\raute}[A,\tup{p}]$ for all $(A,\tup{p})\in \Dom(\Theta(\tup{x}))$.
 \end{itemize}
\end{propC}

\subsection{Canonization}\label{sec:canonization}

In this section we introduce ordered structures, (definable) canonization and the capturing of the complexity class $\LOGSPACE$.

Let $\tau$ be a vocabulary with $\leq\ \not \in \tau$.
A $\tau\cup\{\leq\}$-struc\-ture $A'$ is \emph{ordered} if the relation symbol~$\leq$ is interpreted
as a linear order on the universe of~$A'\!$.
Let $A$ be a $\tau$-struc\-ture. An ordered $\tau\cup\{\leq\}$-struc\-ture $A'$ is an \emph{ordered copy} of $A$
if $A'|_{\tau}\cong A$.
Let $\CC$ be a class of $\tau$-struc\-tures.
A mapping $f$ is a \emph{canonization mapping} of $\CC$ if
it assigns every structure $A\in \CC$ to an ordered copy $f(A)=(A_f,\leq_f)$ of $A$
such that for all structures $A,B\in\CC$ we have $f(A)\cong f(B)$ if $A\cong B$.
We call the ordered structure $f(A)$ the \emph{canon} of $A$.

Let $\Logic$ be a logic that extends $\FO$.
Let $\Theta(\bar{x})$ be a parameterized $\Logic[\tau,\tau\cup\{\leq\}]$-trans\-duc\-tion, where $\bar{x}$ is a tuple of structure variables.
We say $\Theta(\bar{x})$ \emph{canonizes} a $\tau$-struc\-ture $A$
if there exists a tuple $\bar{p}\in A^{\bar{x}}$ such  that ${(A,\bar{p})\in \Dom(\Theta(\bar{x}))}$,
and for all tuples $\bar{p}\in A^{\bar{x}}$ with $(A,\bar{p})\in \Dom(\Theta(\bar{x}))$,
the $\tau\cup\{\leq\}$-struc\-ture $\Theta[A,\bar{p}]$ is an ordered copy of $A$.
Note that if the tuple $\tup{x}$ of parameter variables is the empty tuple, $\Logic[{\tau,\tau\cup\{\leq\}}]$-trans\-duc\-tion $\Theta$
canonizes a $\tau$-struc\-ture $A$
if ${A\in \Dom(\Theta)}$ and the ${\tau\cup\{\leq\}}$-struc\-ture $\Theta[A]$ is an ordered copy of~$A$.
A \emph{(parameterized) $\Logic$-can\-on\-iza\-tion} of a class $\CC$ of $\tau$-struc\-tures
is a (parameterized) $\Logic[\tau,\tau\cup\{\leq\}]$-trans\-duc\-tion that canonizes all $A\in \CC$.
A class~$\CC$ of $\tau$-struc\-tures \emph{admits $\Logic$-de\-fin\-a\-ble canonization}
if $\CC$ has a (parameterized) $\Logic$-can\-on\-iza\-tion.

The following proposition and theorem are essential for proving that the class of chordal claw-free graphs admits $\LREC_=$-definable canonization
in Section~\ref{sec:canonization-claw-summary}.

\begin{propC}[\cite{groheDC}\footnote{
In~\cite[Corollary 3.3.21]{groheDC} Proposition~\ref{prop:canonconncomp} is only shown for $\IFPCf$.
The proof of Corollary 3.3.21 uses Lemma 3.3.18, the Transduction Lemma, and that connectivity and simple arithmetics are definable.
As  $\LRECf_=$ is closed under parameterized $\LRECf_=$-trans\-duc\-tions, the Transduction Lemma also holds for $\LRECf_=$~\cite{GGHL12}.
Connectivity and all arithmetics (e.g., addition, multiplication and Fact 3.3.14)
that are necessary to show Lemma 3.3.18 and Corollary 3.3.21 can also be defined in $\LRECf_=$.
Further, Lemma 3.3.12 and 3.3.17, which are used to prove Lemma 3.3.18 can be shown by pulling back simple $\FOf$-formulas under $\LRECf_=$-trans\-duc\-tions.
Hence, Corollary 3.3.21 also holds for $\LRECf_=$.%
}]\label{prop:canonconncomp}
 Let $\CC$ be a class of graphs, and $\CC_{\textup{con}}$ be the class of all connected components of the graphs in $\CC$.
 If $\CC_{\textup{con}}$ admits $\LREC_=$-definable canonization, then~$\CC$~does~as~well.
\end{propC}

\begin{thmC}[\cite{GGHL12,diss}\footnote{
It is shown in~\cite[Remark 4.8]{GGHL12}, and in more detail in~\cite[Section 8.4]{diss}
that the class of all colored directed trees that have a linear order on the colors
admits $\LRECf$-definable canonization.
This can easily be extended to $\LO$-colored directed trees
since an \LO-col\-ored directed tree is a special kind of colored directed tree that has a linear order on its colors.
$\LRECf$ is contained in $\LRECf_=$~\cite{GGHL12}.
}]\label{prop:LRECLOcolored}
The class of $\LO$-colored directed trees admits $\LREC_=$-definable canonization.
\end{thmC}

\noindent
We can use definable canonization of a
graph class
to prove that $\LOGSPACE$ is captured on this graph class.
Let $\Logic$ be a logic and  $\CC$ be a graph class.
$\Logic$ \emph{captures} $\LOGSPACE$ \emph{on}~$\CC$
if for each class $\CD\subseteq \CC$,
there exists an $\Logic$-sen\-tence defining $\CD$
if and only if $\CD$ is $\LOGSPACE$-de\-cid\-able.
A precise definition of what it means that a logic \emph{(effectively strongly) captures}
a complexity class can be found in~\cite[Chapter~11]{ebbflu99}.
A fundamental result was shown by Immerman:
\begin{thmC}[\cite{imm87}\footnote{
Immerman proved this capturing result not only for the class of ordered graphs but for the class of ordered structures.
}]
$\DTC$ captures $\LOGSPACE$ on the class of all ordered graphs.
\end{thmC}

\noindent
Deterministic transitive closure logic $\DTC$ is a logic that is contained in $\LREC_=$~\cite{GGHL12}.
Since $\LREC_=$-formulas can be evaluated in logarithmic space~\cite{GGHL12}, we obtain the following corollary:
\begin{corollary}\label{cor:LRECorderedstructures}
$\LREC_=$ captures $\LOGSPACE$ on the class of all ordered graphs.
\end{corollary}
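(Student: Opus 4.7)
The plan is to prove both directions of the capturing statement by combining Immerman's theorem with the two key inclusions mentioned in the excerpt, namely $\DTC \subseteq \LREC_=$ and the logarithmic-space evaluability of $\LREC_=$-formulas.

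First, I would fix an arbitrary subclass $\CD$ of the class of ordered graphs and show the two directions separately. For the easy direction, suppose $\CD$ is defined by an $\LREC_=$-sentence $\varphi$. Since $\LREC_=$-formulas can be evaluated in logarithmic space~\cite{GGHL12}, the decision problem ``given an ordered graph $G$, does $G\models\varphi$?'' is solvable in $\LOGSPACE$, so $\CD$ is $\LOGSPACE$-decidable. This step is completely mechanical and uses no property special to ordered graphs.

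For the converse direction, suppose $\CD$ is $\LOGSPACE$-decidable. By Immerman's theorem, $\DTC$ captures $\LOGSPACE$ on the class of all ordered graphs, so there is a $\DTC$-sentence $\psi$ that defines $\CD$ within this class. Since $\DTC$ is contained in $\LREC_=$~\cite{GGHL12}, the sentence $\psi$ is (equivalent to) an $\LREC_=$-sentence defining $\CD$. Combining both directions yields the capturing statement.

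There is no real obstacle here; the corollary is essentially an immediate consequence of Immerman's theorem together with the two containment facts cited just before its statement. The only point deserving attention is to remain careful about the notion of ``effectively strongly captures'' in the sense of~\cite[Chapter~11]{ebbflu99}: one should note that the translation from $\DTC$ into $\LREC_=$ and the $\LOGSPACE$-evaluation procedure for $\LREC_=$ are both effective, so the capturing is effective and not merely semantic.
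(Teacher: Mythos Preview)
Your proposal is correct and follows exactly the argument the paper intends: the corollary is derived from Immerman's theorem via the two facts stated immediately before it, namely $\DTC\subseteq\LREC_=$ and the logarithmic-space evaluability of $\LREC_=$. Your explicit unpacking of both directions, together with the remark on effectiveness, is a faithful (and slightly more detailed) rendering of the paper's one-line justification.
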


\noindent
Let us suppose there exists a parameterized $\LREC_=$-can\-on\-iza\-tion of a graph class $\CC$.
Since $\LREC_=$ captures $\LOGSPACE$ on the class of all ordered graphs and
we can pull back each $\LREC_=$-sen\-tence that defines a logarithmic-space property on ordered graphs under this canonization,
the capturing result transfers from ordered graphs
to the class~$\CC$.
\begin{proposition}\label{prop:capturingLREC}
Let $\CC$ be a class of graphs.
	If $\CC$ admits $\LREC_=$-de\-fin\-a\-ble canonization,
	then $\LREC_=$ captures $\LOGSPACE$ on $\CC$.
\end{proposition}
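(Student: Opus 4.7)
The plan is to show the two directions of the capturing biconditional separately. Fix a class $\CC$ of graphs and assume $\CC$ admits $\LREC_=$-definable canonization via a parameterized $\LREC_=[\tau,\tau\cup\{\leq\}]$-transduction $\Theta(\tup{x})$, where $\tau = \{E\}$. I then need to prove that for every $\CD \subseteq \CC$, there is an $\LREC_=$-sentence defining $\CD$ on $\CC$ if and only if $\CD$ is $\LOGSPACE$-decidable.

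For the easy direction, if $\CD$ is defined by an $\LREC_=$-sentence, I would invoke the fact cited right after Theorem~\ref{prop:LRECLOcolored}'s discussion that $\LREC_=$-formulas can be evaluated in logarithmic space, giving $\CD \in \LOGSPACE$ immediately.

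For the nontrivial direction, suppose $\CD$ is $\LOGSPACE$-decidable. First I would consider the class $\CD^{\leq}$ of ordered graphs $(G,\leq)$ whose $\tau$-reduct lies in $\CD$. Ignoring the order component is trivially in logarithmic space, so $\CD^{\leq}$ is $\LOGSPACE$-decidable as a class of ordered graphs. By Corollary~\ref{cor:LRECorderedstructures}, there is an $\LREC_=[\tau\cup\{\leq\}]$-sentence $\psi$ that defines $\CD^{\leq}$ on the class of all ordered graphs. Now I would pull $\psi$ back through the canonization transduction: since $\LREC_= \in \CL$ is closed under parameterized $\LREC_=$-transductions, Proposition~\ref{prop:transduction-lemma} yields an $\LREC_=[\tau]$-formula $\psi^{-\Theta}(\tup{x})$ such that for every $\tau$-structure $A$ and every $\tup{p} \in A^{\tup{x}}$ with $(A,\tup{p}) \in \Dom(\Theta(\tup{x}))$, we have $A \models \psi^{-\Theta}[\tup{p}]$ iff $\Theta[A,\tup{p}] \models \psi$. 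I would then set $\varphi := \exists \tup{x}\, \psi^{-\Theta}(\tup{x})$ and claim this defines $\CD$ on $\CC$.

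To verify the claim, fix any $A \in \CC$. Because $\Theta(\tup{x})$ canonizes $A$, at least one parameter tuple $\tup{p}$ with $(A,\tup{p}) \in \Dom(\Theta(\tup{x}))$ exists, and for every such $\tup{p}$ the structure $\Theta[A,\tup{p}]$ is an ordered copy of $A$. Since $\CD$ is isomorphism-closed (as a class of $\tau$-structures), $\Theta[A,\tup{p}] \in \CD^{\leq}$ iff $A \in \CD$; combined with $\Theta[A,\tup{p}] \models \psi$ iff $\Theta[A,\tup{p}] \in \CD^{\leq}$ and the pull-back equivalence, this gives $A \in \CD$ iff $A \models \varphi$, as required. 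The only subtle point, and what I expect to be the main conceptual hazard rather than a technical obstacle, is making sure the existential quantifier over parameters does the right thing: different choices of $\tup{p}$ can yield non-isomorphic ordered copies of $A$, but since $\psi$ defines the $\emph{isomorphism-closed}$ class $\CD^{\leq}$ on ordered graphs, every ordered copy of $A$ agrees on $\psi$, so the existential quantifier is in fact equivalent to a universal one (over valid parameters) and the definition is unambiguous.
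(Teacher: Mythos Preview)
Your approach is exactly the one the paper sketches in the paragraph preceding the proposition: use Corollary~\ref{cor:LRECorderedstructures} on ordered graphs and pull the resulting sentence back through the canonization transduction via Proposition~\ref{prop:transduction-lemma}. The paper does not spell out a formal proof beyond that paragraph, so your write-up is in fact more detailed than what the paper provides.

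One small technical point to tighten: Proposition~\ref{prop:transduction-lemma} only guarantees the equivalence $A \models \psi^{-\Theta}[\tup{p}] \iff \Theta[A,\tup{p}] \models \psi$ for pairs $(A,\tup{p}) \in \Dom(\Theta(\tup{x}))$; it says nothing about tuples $\tup{p}$ outside the domain. As written, your sentence $\varphi = \exists \tup{x}\, \psi^{-\Theta}(\tup{x})$ could in principle be satisfied by such a spurious $\tup{p}$. The fix is routine: take $\varphi := \exists \tup{x}\,\bigl(\delta(\tup{x}) \wedge \psi^{-\Theta}(\tup{x})\bigr)$, where $\delta(\tup{x})$ is an $\LREC_=$-formula expressing $(A,\tup{p}) \in \Dom(\Theta(\tup{x}))$ (i.e., $\theta_{\dom}$ holds, $\theta_U$ defines a nonempty set, and $\theta_\approx$ defines an equivalence relation). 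Your closing discussion correctly handles the consistency across \emph{valid} parameter choices; you just need to exclude the invalid ones.
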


\section{Clique Trees and their Structure}\label{sec:cliquetree}

Clique trees of connected chordal claw-free graphs play an important role in our canonization of the class of chordal claw-free graphs.
Thus, we analyze the structure of clique trees of connected chordal claw-free graphs in this section.

First we introduce clique trees of chordal graphs.
Then we show that chordal claw-free graphs are intersection graphs of paths of a tree.
We use this property to prove that each connected chordal claw-free graph has a unique clique tree.
Finally, we introduce two different types of max cliques in a clique tree, star cliques and fork cliques, and show that each
max clique of a connected chordal claw-free graph is of one of these types if its degree in the clique tree is at least $3$.

\subsection{Clique Trees of Chordal Graphs}

Chordal graphs are precisely the intersection graphs of subtrees of a tree.
A clique tree of a chordal graph $G$ specifies a minimal representation of $G$ as such an intersection graph.
Clique trees were introduced independently by Buneman~\cite{buneman}, Gavril~\cite{gavril} and Walter~\cite{walter}.
A detailed introduction of chordal graphs and their clique trees can be found in~\cite{blairpeyton}.

Let $G$ be a chordal graph,
and let \emph{$\CM$} be the set of max cliques of $G$.
Further, let $\CM_v$ be the set of all max cliques in $\CM$ that contain a vertex $v$ of~$G$.
A \emph{clique tree} of $G$ is  a tree $T=(\CM,\CE)$
whose vertex set is the set $\CM$ of all max cliques
where for all $v\in V$ the
induced subgraph $T[\CM_v]$ is connected.
Hence, for each $v\in V$ the induced subgraph $T[\CM_v]$ is a subtree of $T$.
Then $G$ is the intersection graph of the subtrees $T[\CM_v]$ of $T$ where $v\in V$.
An example of a clique tree of a chordal graph is shown in Figure~\ref{fig:cliquetreeexample}.
\begin{figure}[htbp]\centering\vspace{-2mm}
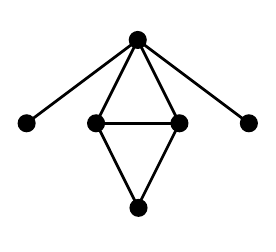 \hspace{3ex}
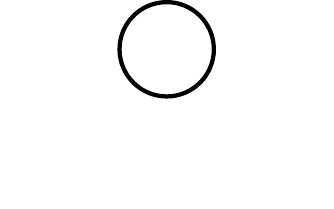
\caption{A chordal graph and a clique tree of the graph}%
\label{fig:cliquetreeexample}
\end{figure}

\noindent
Let $T=(\CM,\CE)$ be a clique tree of a chordal graph $G$.
It is easy to see that the clique tree~$T$ satisfies the \emph{clique intersection property}:
Let ${M_1,M_2,M_3\in \CM}$ be vertices of the tree $T\!$.
If $M_2$ is on the path from $M_1$ to $M_3$, then $M_1\cap M_3\subseteq M_2$.

\subsection{Intersection-Graph Representation of Chordal Claw-Free Graphs}

In the following we consider the class $\CHCL$, i.e., the class of chordal claw-free graphs.
For each vertex $v$ of a chordal claw-free graph, we prove that the set of max cliques $\CM_v$ induces a path in each clique tree.
Consequently, chordal claw-free graphs are intersection graphs of paths of a tree.
Note that not all intersection graphs of paths of a tree are claw-free (see Figure~\ref{fig:cliquetreeexample}).

\begin{lemma}\label{lem:maxclpath}
 Let $T=(\CM,\CE)$ be a clique tree of a chordal claw-free graph $G=(V,E)$.
 Then for all $v\in V$ the induced subtree $T[\CM_v]$ is a path in $T\hspace{-1pt}$.
 \end{lemma}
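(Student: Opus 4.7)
\medskip

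\noindent\textbf{Proof plan.} The plan is to show that $T[\CM_v]$, which is already known to be a subtree of $T$ by definition of clique tree, has maximum degree at most $2$, whence it is a path. I would argue by contradiction: assume some max clique $M \in \CM_v$ has three distinct neighbors $M_1, M_2, M_3$ in $T[\CM_v]$, and then exhibit an induced claw in $G$.

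First I would produce candidate ``leaves'' for the claw. For each $i \in [3]$, since $M_i \neq M$ and both are max cliques, by maximality $M_i \not\subseteq M$, so I can pick $w_i \in M_i \setminus M$. The center of the claw will be $v$; the edges $\{v,w_i\}$ are immediate because $v,w_i \in M_i$ and $M_i$ is a clique. It remains to prove (a) that $v, w_1, w_2, w_3$ are pairwise distinct, and (b) that no edge $\{w_i, w_j\}$ with $i \neq j$ exists in $G$.

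The crucial tool will be the clique intersection property of $T$, applied via connectedness of the subtrees $T[\CM_x]$. I would first show $w_i \notin M_j$ for $i \neq j$: otherwise $M_i, M_j \in \CM_{w_i}$, but the unique path from $M_i$ to $M_j$ in $T$ passes through $M$ (since both are neighbors of $M$), so connectedness of $T[\CM_{w_i}]$ forces $M \in \CM_{w_i}$, i.e., $w_i \in M$, contradicting the choice of $w_i$. This already yields distinctness in (a). For (b), suppose $\{w_i,w_j\}$ is an edge; then some max clique $M^\ast$ contains both, placing $M^\ast$ in $\CM_{w_i}\cap\CM_{w_j}$. The three subtrees $T[\CM_v], T[\CM_{w_i}], T[\CM_{w_j}]$ then pairwise intersect (via $M_i$, $M_j$, and $M^\ast$ respectively), so by the Helly property for subtrees of a tree there is a common max clique $M' \in \CM_v \cap \CM_{w_i} \cap \CM_{w_j}$. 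A short case analysis on the location of $M'$ relative to the star formed by $M$ and $M_1,M_2,M_3$ (namely: $M'=M$, $M'$ lies in the component of $T-M$ containing $M_i$ or $M_j$, or $M'$ lies in some other component) shows in every case that the $T$-path from $M'$ to one of $M_i, M_j$ is forced to pass through $M$, which by connectedness of the relevant $\CM_{w_\bullet}$ would again put $w_i$ or $w_j$ into $M$, a contradiction.

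Together, (a) and (b) show that $G[\{v,w_1,w_2,w_3\}]$ is a claw, contradicting claw-freeness. Therefore $T[\CM_v]$ has maximum degree at most $2$, so it is a path. I expect the main obstacle to be step (b): one must choose the right tool (Helly for subtrees of a tree, or equivalently the clique intersection property applied to a carefully chosen triple of max cliques) and be careful with the case distinction on the position of $M'$; if one prefers to avoid Helly, the same conclusion can be reached by working directly with the path from $M_i$ to $M_j$ through $M$ and exploiting subtree-connectedness of $\CM_{w_i}$ and $\CM_{w_j}$ for any max clique common to $w_i$ and $w_j$.
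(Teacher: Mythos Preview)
Your proof is correct and follows essentially the same route as the paper: assume some $M\in\CM_v$ has three neighbors $M_1,M_2,M_3$ in $T[\CM_v]$, pick $w_i\in M_i\setminus M$, and show that $\{v,w_1,w_2,w_3\}$ induces a claw. For step~(b) the paper uses precisely the direct argument you sketch at the end and skips Helly entirely: since $T[\CM_{w_i}]$ is connected and does not contain $M$, it lies wholly in the component of $T-M$ containing $M_i$, so $\CM_{w_i}\cap\CM_{w_j}=\emptyset$ for $i\neq j$, and no common max clique (hence no edge $\{w_i,w_j\}$) can exist.
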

\begin{proof}
Let $G=(V,E)\in\CHCL$  and let $T=(\CM,\CE)$ be a clique tree of $G$.
 Let us assume there exists a vertex $v\in V$ such that the graph $T[\CM_v]$ is not a path in $T\hspace{-1pt}$.
 As $T[\CM_v]$ is a subtree of~$T\hspace{-1pt}$, there exists a max clique $B\in \CM_v$ such that $B$ has degree at least $3$.
 Let $A_1,A_2,A_3\in \CM_v$ be three distinct neighbors of $B$ in $T[\CM_v]$.
 Since $A_i$ and $B$ are distinct max cliques, there exists a vertex $a_i\in A_i\setminus B$, and
 for each $i\in[3]$, we have $A_i\in \CM_{a_i}$, $B\not\in \CM_{a_i}$ and $T[\CM_{a_i}]$ is connected.
 As $T$ is a tree, $A_1$, $A_2$, and $A_3$ are all in different connected components of $T[\CM\setminus\{B\}]$.
 Therefore, $\CM_{a_i}\cap \CM_{a_{i'}}=\emptyset$ for all $i,{i'}\in[3]$ with $i\not=i'\hspace{-1pt}$.
 Now, $\{v,a_1,a_2,a_3\}$ induces a claw in $G$, which contradicts $G$ being claw-free:
 For all $i\in[3]$, there is an edge between $v$ and $a_i$, because $v,a_i\in A_i$.
 To show that vertices $a_i$ and $a_{i'}$ are not adjacent for $i\not=i'\hspace{-1pt}$, let us assume the opposite.
 If $a_i$ and $a_{i'}$ are adjacent, then there exists a max clique $M$ containing $a_i$ and $a_{i'}$.
 Thus, $\CM_{a_i}\cap \CM_{a_{i'}}\not=\emptyset$, a contradiction.
\end{proof}

\subsection{Uniqueness of the Clique Tree for Connected Chordal Claw-Free Graphs}

The following lemmas help us to show in Corollary~\ref{cor:cliquetreeunique}
that the clique tree of a connected chordal claw-free graph is unique.
Notice, that this is a property that generally does not hold for unconnected graphs.
Given an unconnected chordal (claw-free) graph,
we can connect the clique trees for the connected components in an arbitrary way to obtain a clique tree
of the entire graph.
Further, connected chordal graphs in general also do not have a unique clique tree.
For example, the claw is a connected chordal graph having multiple clique trees (see Figure~\ref{fig:4astar}),
and the $K_{1,4}$ is a connected chordal graph where the clique trees are not even isomorphic (see Figure~\ref{fig:4bstar}).

\begin{figure}[htbp]
\begin{subfigure}[b]{0.4\textwidth}
 \parbox[b][3.8cm][t]{\textwidth}{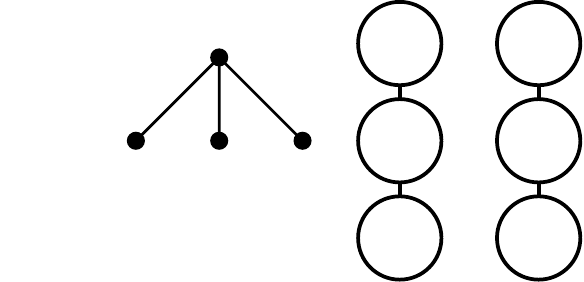}
		\caption{The $K_{1,3}$ and two clique trees}%
		\label{fig:4astar}
\end{subfigure}
\hspace{0.6cm}%
\begin{subfigure}[b]{0.55\textwidth}\centering
 \parbox[b][3.8cm][t]{\textwidth}{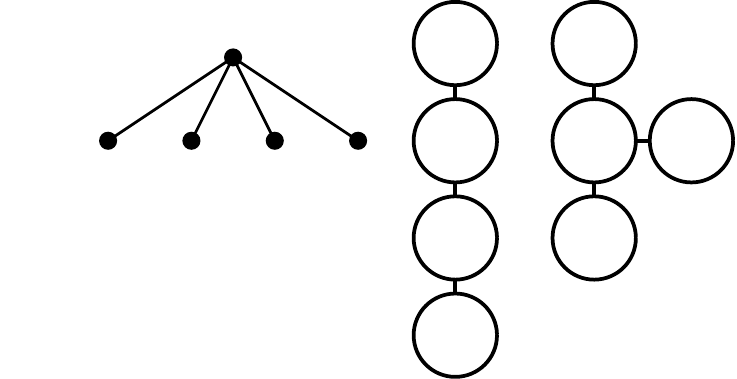}
		\caption{The $K_{1,4}$ and two non-isomorphic clique trees}%
		\label{fig:4bstar}
\end{subfigure}
\caption{Connected chordal graphs where the clique tree is not unique}%
\label{fig:4star}
\end{figure}

\begin{lemma}\label{lem:path3maxcl}
Let $T=(\CM,\CE)$ be a clique tree of a chordal claw-free graph $G=(V,E)$.
Further, let $v\in V$, and  let $A_1,A_2,A_3$ be distinct max cliques in $\CM_v$.
Then $A_2$ lies between $A_1$ and $A_3$ on the path $T[\CM_v]$ if and only if
$A_2\subseteq A_1\cup A_3$.
\end{lemma}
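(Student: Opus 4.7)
The $(\Leftarrow)$ direction is quick. Assume $A_2 \subseteq A_1 \cup A_3$ but $A_2$ does not lie between $A_1$ and $A_3$ on $T[\CM_v]$; by symmetry I may assume $A_1$ lies between $A_2$ and $A_3$ on the path $T[\CM_v]$, and hence on the unique $T$-path from $A_2$ to $A_3$ since $T[\CM_v]$ is a subpath of $T$. The clique intersection property yields $A_2 \cap A_3 \subseteq A_1$, so $A_2 = A_2 \cap (A_1 \cup A_3) \subseteq A_1$, contradicting $A_1 \neq A_2$ together with the maximality of $A_2$.

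For $(\Rightarrow)$, I would argue by contradiction: suppose $A_2$ lies between $A_1$ and $A_3$ on $T[\CM_v]$ yet some $w \in A_2 \setminus (A_1 \cup A_3)$ exists. The plan is to build an induced claw in $G$ with center $v$ and leaves $w, a_1, a_3$ for well-chosen $a_1 \in A_1$ and $a_3 \in A_3$. Since $A_1$ is a max clique and $w \notin A_1$, some $a_1 \in A_1$ is non-adjacent to $w$ (otherwise $A_1 \cup \{w\}$ would be a strictly larger clique), and automatically $a_1 \notin A_2$ because any vertex of $A_2$ other than $w$ is adjacent to $w \in A_2$. Pick $a_3 \in A_3 \setminus A_2$ non-adjacent to $w$ in the same way.

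The crux is verifying $a_1 \not\sim a_3$. If they were adjacent, some max clique $M$ would contain both, giving $M \in \CM_{a_1} \cap \CM_{a_3}$. But $T[\CM_{a_1}]$ is connected, contains $A_1$, and avoids $A_2$ (since $a_1 \notin A_2$), so it lies entirely in the component of $T \setminus \{A_2\}$ containing $A_1$; analogously $T[\CM_{a_3}]$ lies in the component containing $A_3$. These components are distinct because $A_2$ sits on the unique $T$-path from $A_1$ to $A_3$ (a subpath of $T[\CM_v]$), a contradiction. To finish one checks that $v, w, a_1, a_3$ are pairwise distinct (for $a_1 \neq a_3$ use $A_1 \cap A_3 \subseteq A_2$ via the clique intersection property), that $v$ is adjacent to each of the other three through $A_1, A_2, A_3$, and that $\{w, a_1, a_3\}$ is independent by construction, producing the forbidden $K_{1,3}$. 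The main obstacle is selecting $a_1, a_3$ so that $\{w, a_1, a_3\}$ is genuinely independent; invoking the maximality of $A_i$ with $w \notin A_i$ to force $a_i \not\sim w$ is the clean trick that sidesteps case analysis and also yields $a_i \notin A_2$ for free.
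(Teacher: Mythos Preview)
Your proof is correct and follows the same overall strategy as the paper: the $(\Leftarrow)$ direction is identical, and for $(\Rightarrow)$ both arguments produce a claw centered at $v$ with one leaf $w\in A_2\setminus(A_1\cup A_3)$ and the other two leaves chosen from $A_1$ and $A_3$, with non-adjacency established via the component structure of $T\setminus\{A_2\}$.

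The one noteworthy difference is in how you select $a_1,a_3$. The paper picks $a_i\in A_i\setminus A_i'$, where $A_i'$ is the neighbor of $A_i$ on the path $T[\CM_v]$ toward $A_2$, and then defers all three non-adjacencies to the argument of Lemma~\ref{lem:maxclpath}. You instead invoke the maximality of $A_i$ together with $w\notin A_i$ to get $a_i\not\sim w$ directly, which also forces $a_i\notin A_2$; this buys two of the three non-adjacencies for free and leaves only $a_1\not\sim a_3$ to the tree-component argument. It is a small but pleasant shortcut.
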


\begin{proof}
Let $G=(V,E)\in \CHCL$ and $T=(\CM,\CE)$ be a clique tree of $G$.
Further, let $v\in V$, and  let $A_1,A_2,A_3\in \CM_v$ be distinct max cliques.
First, suppose $A_2\subseteq A_1\cup A_3$, and let us assume that, w.l.o.g., $A_1$ lies between $A_2$ and $A_3$.
Then $A_2\cap A_3\subseteq A_1$ according to the clique intersection property.
Further, $A_2\subseteq A_1\cup A_3$ implies that $A_2\setminus A_3\subseteq A_1$.
It follows that $A_2\subseteq A_1$, which is a contradiction to $A_1$ and $A_2$ being distinct max cliques.

Now let max clique $A_2$ lie between $A_1$ and $A_3$ on the path $T[\CM_v]$,
and let us assume that there exists a vertex $a_2\in A_2\setminus(A_1\cup A_3)$.
Let $P=B_1,\dots,B_l$ be the path $T[\CM_v]$ (Lemma~\ref{lem:maxclpath}).
W.l.o.g., assume that $A_i=B_{j_i}$ for all $i\in[3]$ where $j_1,j_2,j_3\in [l]$ with $j_1<j_2<j_3$.
Further, let $A_1':=B_{j_1+1}$ and $A_3':=B_{j_3-1}$, and let $a_1\in A_1\setminus A_1'$ and  $a_3\in A_3\setminus A_3'$.
Similarly to the proof of Lemma~\ref{lem:maxclpath}, we obtain that $\{v,a_1,a_2,a_3\}$ induces a claw in $G$, a contradiction.
\end{proof}

\begin{corollary}\label{cor:noMiddleIntersection}
	For all distinct vertices $v,w\in V\!$, the graph $T[\CM_v\setminus \CM_w]$ is connected.\footnote{We define the empty graph as connected.}
\end{corollary}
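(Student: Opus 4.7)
The plan is to reduce the claim to Lemma~\ref{lem:path3maxcl} via a single forbidden configuration. By Lemma~\ref{lem:maxclpath}, $T[\CM_v]$ is a path, say $B_1,\dots,B_l$, and $T[\CM_w]$ is also a path (hence a subtree). Since the intersection of two subtrees of a tree is either empty or again a subtree, $\CM_v\cap\CM_w$ induces a connected subgraph of $T$; being contained in the path $T[\CM_v]$, it is either empty or a contiguous subpath $B_i,\dots,B_j$ with $1\le i\le j\le l$.

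If $\CM_v\cap\CM_w=\emptyset$ or $\CM_v\cap\CM_w=\CM_v$, then $T[\CM_v\setminus\CM_w]$ is either the whole path $T[\CM_v]$ or empty, and we are done. Otherwise, the main step is to show that the subpath $B_i,\dots,B_j$ touches an endpoint of $B_1,\dots,B_l$, i.e., $i=1$ or $j=l$. Suppose for contradiction that $1<i$ and $j<l$. Put $A_1:=B_{i-1}$, $A_2:=B_i$, $A_3:=B_{j+1}$; all three are distinct max cliques in $\CM_v$, and on the path $T[\CM_v]$ the clique $A_2$ lies between $A_1$ and $A_3$. Lemma~\ref{lem:path3maxcl} then gives $A_2\subseteq A_1\cup A_3$. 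But by construction $w\in A_2$ while $w\notin A_1$ and $w\notin A_3$ (since $B_{i-1}$ and $B_{j+1}$ are outside $\CM_v\cap\CM_w$), contradicting $A_2\subseteq A_1\cup A_3$.

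Hence the subpath $B_i,\dots,B_j$ must be a prefix or a suffix of $B_1,\dots,B_l$, so $\CM_v\setminus\CM_w$ induces a (possibly empty) subpath of $T[\CM_v]$, which is connected. I do not expect a real obstacle here; the only point to be careful about is noting that $\CM_v\cap\CM_w$ is contiguous in $T[\CM_v]$, which follows immediately from the standard fact that subtrees of a tree have connected intersection together with Lemma~\ref{lem:maxclpath}.
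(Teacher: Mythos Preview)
Your proof is correct and uses essentially the same idea as the paper: find three max cliques on the path $T[\CM_v]$ with the middle one containing $w$ and the outer two not, then apply Lemma~\ref{lem:path3maxcl} to derive a contradiction. The only difference is that the paper's argument is slightly more direct: it assumes $T[\CM_v\setminus\CM_w]$ is disconnected and immediately picks indices $i<j<k$ with $A_i,A_k\notin\CM_w$ and $A_j\in\CM_w$, without first establishing that $\CM_v\cap\CM_w$ is a contiguous subpath---that preliminary step in your proof is correct but unnecessary.
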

\begin{proof}
	Let $v,w\in V$ be distinct vertices. Let $P=A_1,\dots,A_l$ be the path $T[\CM_v]$,
	and let us assume  $T[\CM_v\setminus \CM_w]$ is not connected.
	Then there exist $i,j,k\in[l]$ with $i<j<k$ such that $A_i,A_k\in \CM_v\setminus \CM_w$ and $A_j\in \CM_w$.
	By Lemma~\ref{lem:path3maxcl} we have $A_j\subseteq A_i\cup A_k$.
	Thus, vertex $w\in A_j$ is also contained in $A_i$ or $A_k$, a contradiction.
\end{proof}

\begin{lemma}\label{lem:pfadegleich}
	Let $T_1=(\CM,\CE_1)$ and $T_2=(\CM,\CE_2)$ be clique trees of a chordal claw-free graph $G=(V,E)$.
	Then for every
	$v\in V$ we have $T_1[\CM_v]=T_2[\CM_v]$.
\end{lemma}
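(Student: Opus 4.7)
The plan is to observe that each of the two subgraphs $T_1[\CM_v]$ and $T_2[\CM_v]$ is already a path (by Lemma~\ref{lem:maxclpath}), and that both paths have the same vertex set $\CM_v$. It therefore suffices to show that the two paths induce the same edge relation on $\CM_v$. To do this, I would recover each path from its betweenness relation, and then verify that this betweenness relation depends only on $G$, not on the choice of clique tree.

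More concretely, for a path $P$ on vertex set $W$, two vertices $A, A' \in W$ are adjacent in $P$ if and only if there is no $A'' \in W\setminus\{A,A'\}$ lying strictly between $A$ and $A'$ on $P$. Thus $P$ is fully determined by its ternary betweenness relation. Applying this to $T_i[\CM_v]$ for $i\in\{1,2\}$, it suffices to prove that for any three distinct max cliques $A_1,A_2,A_3\in\CM_v$, the vertex $A_2$ lies between $A_1$ and $A_3$ on $T_1[\CM_v]$ if and only if it does so on $T_2[\CM_v]$.

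But this equivalence is exactly what Lemma~\ref{lem:path3maxcl} gives: in any clique tree $T$ of $G$, the clique $A_2$ lies between $A_1$ and $A_3$ on $T[\CM_v]$ iff $A_2\subseteq A_1\cup A_3$. The right-hand condition is a purely combinatorial property of the three max cliques $A_1,A_2,A_3$ of $G$ and makes no reference to the chosen clique tree. Hence the betweenness relations of $T_1[\CM_v]$ and $T_2[\CM_v]$ coincide, which, by the reconstruction observation above, forces $T_1[\CM_v]=T_2[\CM_v]$.

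I do not expect any real obstacle here: all the work has been done in the preceding lemmas. The only subtle point to spell out is the elementary fact that a path on a given finite vertex set is determined up to equality by its betweenness relation, so that the edge sets of $T_1[\CM_v]$ and $T_2[\CM_v]$ must agree and not merely be isomorphic.
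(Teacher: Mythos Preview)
Your proposal is correct and takes essentially the same approach as the paper: both arguments rest on Lemma~\ref{lem:maxclpath} (each $T_i[\CM_v]$ is a path) and Lemma~\ref{lem:path3maxcl} (betweenness on that path is characterized by the tree-independent condition $A_2\subseteq A_1\cup A_3$). The paper merely unpacks your ``betweenness determines the path'' observation into an explicit contradiction argument---assuming an edge $\{A,B\}$ present in $T_1[\CM_v]$ but absent in $T_2[\CM_v]$, finding an intermediate $C$ in $T_2$, and deriving $A\subseteq C$ via the clique intersection property and one direction of Lemma~\ref{lem:path3maxcl}---whereas you invoke the full biconditional of Lemma~\ref{lem:path3maxcl} directly.
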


\begin{proof}
	Let $G=(V,E)\in\CHCL$ and let $T_1=(\CM,\CE_1)$ and $T_2=(\CM,\CE_2)$ be clique trees of~$G$.
	Let $v\in V\!$.
	According to Lemma~\ref{lem:maxclpath}, $T_1[\CM_v]$ and $T_2[\CM_v]$ are paths in $T_1$ and $T_2$, respectively.
	Let us assume there exist distinct max cliques $A,B\in \CM_v$
	such that, $A,B$ are adjacent in $T_1[\CM_v]$ but not adjacent in $T_2[\CM_v]$.
	As $A$ and $B$ are not adjacent in $T_2[\CM_v]$,
	there exists a max clique $C\in \CM_v$ that lies between $A$ and $B$ on the path $T_2[\CM_v]$.
	Thus, $A\cap B\subseteq C$ according to the clique intersection property.
	Since max cliques $A$ and $B$ are adjacent in $T_1[\CM_v]$, either $A$ lies between $B$ and $C$, or  $B$ lies between $A$ and $C$ on the path $T_1[\CM_v]$.
	W.l.o.g., suppose that $A$ lies between $B$ and $C$ on the path $T_1[\CM_v]$.
	Then $A\subseteq B\cup C$ by Lemma~\ref{lem:path3maxcl}.
	Thus, we have $A\setminus B \subseteq C$.
	Since $A\cap B\subseteq C$, this yields that $A\subseteq C$, which is a contradiction to $A$ and $C$ being distinct max cliques.
\end{proof}

\begin{lemma}\label{lem:cliqueTreeGleichUnionDerPfade}
	 Let $T=(\CM,\CE)$ be a clique tree of a connected chordal graph
	 $G=(V\!,E)$. Then
	 \[T=\bigcup_{v\in V} T[\CM_v].\]
\end{lemma}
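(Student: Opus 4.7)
The plan is to verify that $T$ and $T' \isdef \bigcup_{v\in V} T[\CM_v]$ coincide as graphs on the vertex set $\CM$. The vertex sets agree because every max clique $M\in\CM$ is nonempty, so $M\in\CM_v$ for any $v\in M$, and hence $M$ appears in $T'$. The inclusion $T'\subseteq T$ on edges is immediate, as each $T[\CM_v]$ is by definition a subgraph of $T$. So the only real content is the reverse inclusion: every edge of $T$ lies in some $T[\CM_v]$, which is equivalent to showing that for every edge $\{A,B\}\in\CE$ we have $A\cap B \neq \emptyset$.

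I would prove this by contradiction, using connectivity of $G$. Assume some edge $\{A,B\}\in\CE$ satisfies $A\cap B=\emptyset$. Deleting this edge splits $T$ into two subtrees $T_A$ and $T_B$ containing $A$ and $B$ respectively; let $\CM^A,\CM^B$ be their vertex sets and set
\[
  V_A \isdef \bigcup_{M\in\CM^A} M, \qquad V_B \isdef \bigcup_{M\in\CM^B} M.
\]
Both sets are nonempty (they contain $A$ and $B$ respectively), and $V_A\cup V_B = V$ because every vertex of $G$ lies in some max clique.

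The key step is to show $V_A\cap V_B=\emptyset$. If some $v$ lay in the intersection, then $\CM_v$ would contain a max clique $M_A\in\CM^A$ and a max clique $M_B\in\CM^B$. The unique path from $M_A$ to $M_B$ in $T$ must cross the removed edge, so it passes through both $A$ and $B$. Since $T[\CM_v]$ is a subtree of $T$ that contains $M_A$ and $M_B$, it must contain this entire path, forcing $A,B\in\CM_v$, i.e.\ $v\in A\cap B$, contradicting the assumption $A\cap B=\emptyset$.

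Thus $V_A$ and $V_B$ partition $V$ into two nonempty sets. No edge of $G$ can run between them, because every edge of $G$ is contained in some max clique, which belongs entirely to either $\CM^A$ or $\CM^B$. This contradicts the connectivity of $G$. The main obstacle is simply to invoke the subtree/connectedness property of $T[\CM_v]$ at the right moment; once the two-sided partition is set up, connectivity of $G$ closes the argument without any further work, and notice that claw-freeness is not used here.
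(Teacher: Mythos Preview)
Your proof is correct. The argument that every clique-tree edge $\{A,B\}$ satisfies $A\cap B\neq\emptyset$ is sound, and the equivalence you state at the outset is valid because $T[\CM_v]$ is the \emph{induced} subgraph on $\CM_v$, so once $A,B\in\CM_v$ the edge $\{A,B\}$ automatically belongs to $T[\CM_v]$.

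The paper takes a different route. Rather than arguing edge by edge, it shows that $T'=\bigcup_{v\in V}T[\CM_v]$ is a connected spanning subgraph of the tree $T$ and hence must equal $T$. Connectivity of $T'$ is obtained ``positively'': each $T[\CM_v]$ is connected, and for any edge $\{u,v\}$ of $G$ the sets $\CM_u$ and $\CM_v$ intersect (in a max clique containing both $u$ and $v$), so the pieces $T[\CM_v]$ glue together along the connected graph $G$. Your approach instead isolates a single edge of $T$ and, assuming $A\cap B=\emptyset$, produces an explicit vertex partition of $G$ with no crossing edges. What your argument buys is the stronger intermediate fact that adjacent max cliques in a clique tree of a connected chordal graph always intersect nontrivially---a standard property of clique trees that the paper's proof uses only implicitly. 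The paper's argument is a bit shorter and avoids the contradiction, but both are equally elementary and neither needs claw-freeness, as you observe.
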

\begin{proof}
	Let $G=(V,E)$ be a connected chordal graph and $T=(\CM,\CE)$ be a clique tree of $G$.
	Clearly, the graphs $T$ and $T':=\bigcup_{v\in V} T[\CM_v]$ have the same vertex set, and $T'$ is a subgraph of the tree $T$.
	In order to prove that $T=T'$, we show that $T'$ is connected.

	For all vertices $v\in V\!$, the graph $T'[\CM_v]$ is connected because $T[\CM_v]$ is connected.
	For each edge $\{u,v\}\in E$ of the graph $G$, there exists a max clique that contains $u$ and $v$,
	and therefore, we have $\CM_u\cap \CM_v\not=\emptyset$.
	Hence,  $T'[\CM_u \cup \CM_v]$ is connected for every edge $\{u,v\}\in E$.
	Since $G$ is connected, it follows that $T'[\hspace{1pt}\bigcup_{v\in V} \CM_v]$ is connected.
	Clearly, $\bigcup_{v\in V} \CM_v=\CM$. Consequently, the graph $T'$ is connected.
\end{proof}

\noindent
As a direct consequence of Lemma~\ref{lem:pfadegleich} and
Lemma~\ref{lem:cliqueTreeGleichUnionDerPfade} we obtain the following corollary.
It follows that each connected chordal claw-free graph has a unique clique tree.
\begin{corollary}\label{cor:cliquetreeunique}
	Let $T_1$ and $T_2$ be clique trees of a connected chordal claw-free graph $G$. Then $T_1=T_2$.
\end{corollary}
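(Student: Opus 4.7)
The plan is to simply combine the two preceding lemmas. Both clique trees $T_1$ and $T_2$ have the same vertex set, namely the set $\CM$ of all max cliques of $G$, so the only content of the claim is that their edge sets coincide. Since $G$ is a connected chordal (claw-free) graph, Lemma~\ref{lem:cliqueTreeGleichUnionDerPfade} applies to both clique trees and yields
\[
 T_1 \;=\; \bigcup_{v\in V} T_1[\CM_v] \qquad\text{and}\qquad T_2 \;=\; \bigcup_{v\in V} T_2[\CM_v].
\]
It therefore suffices to show that the two unions on the right are equal as graphs.

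For this step I would invoke Lemma~\ref{lem:pfadegleich}: for every vertex $v\in V$ we have $T_1[\CM_v] = T_2[\CM_v]$, since $T_1$ and $T_2$ are clique trees of the same chordal claw-free graph $G$. Taking the union of these identities over all $v\in V$ immediately gives
\[
 T_1 \;=\; \bigcup_{v\in V} T_1[\CM_v] \;=\; \bigcup_{v\in V} T_2[\CM_v] \;=\; T_2,
\]
which is the desired conclusion.

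Since both lemmas are already established, there is no real obstacle here; the argument is just a two-line assembly of Lemma~\ref{lem:pfadegleich} and Lemma~\ref{lem:cliqueTreeGleichUnionDerPfade}. The only thing worth being careful about is noting that connectedness of $G$ is essential for applying Lemma~\ref{lem:cliqueTreeGleichUnionDerPfade} (indeed, for an unconnected graph the clique tree is far from unique, since the clique subtrees of distinct components may be joined arbitrarily), whereas Lemma~\ref{lem:pfadegleich} holds without the connectedness hypothesis. This matches the scope of the corollary as stated.
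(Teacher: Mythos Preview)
Your proposal is correct and matches the paper's approach exactly: the paper states the corollary as a direct consequence of Lemma~\ref{lem:pfadegleich} and Lemma~\ref{lem:cliqueTreeGleichUnionDerPfade}, which is precisely the two-line assembly you give. Your remark about where connectedness is needed is also accurate.
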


\subsection{Star Cliques and Fork Cliques}

In the following let $G=(V,E)$ be a connected chordal claw-free graph and let $T_G=(\CM,\CE)$ be its clique tree.

Let $B$ be a max clique of $G$.
If for all $v\in B$ max clique $B$ is an end of path $T_G[\CM_v]$, we call $B$ a \emph{star clique}.
Thus, $B$ is a star clique if, and only if, every vertex in $B$ is contained in at most one neighbor of $B$ in $T_G$.
A picture of a star clique can be found in Figure~\ref{fig:10starclique}.
Clearly, every max clique of degree $1$, i.e., every leaf, of clique tree $T_G$ is a star clique.

A max clique $B$ of degree $3$ is called a \emph{fork clique} if
for every $v\in B$ there exist two neighbors $A,A'$ of $B$ with $A\not= A'$ such that
$\CM_v=\{B,A,A'\}$, and
for all neighbors $A,A'$ of~$B$ with $A\not= A'$ there exists a vertex $v\in B$ with $\CM_v=\{B,A,A'\}$.
Figure~\ref{fig:15forkclique} shows a sketch of a fork clique.
Note that two fork cliques cannot be adjacent.

\begin{figure}[htbp]\centering

\begin{subfigure}[b]{0.32\textwidth}\centering
  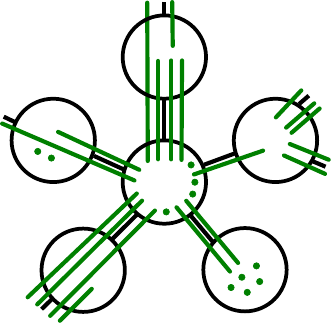
		\caption{A star clique}%
		\label{fig:10starclique}
\end{subfigure}
\hspace{0.014\textwidth}%
\begin{subfigure}[b]{0.32\textwidth}\centering
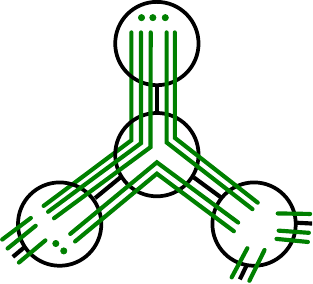
		\caption{A fork clique}%
		\label{fig:15forkclique}
\end{subfigure}
\caption{A star clique and a fork clique. Each picture shows a part of a clique tree $T_G$.
For $v\in V$ each path $T_G[\CM_v]$ is depicted as a green line.}
\end{figure}

\noindent
The following lemma and corollary provide more information about the structure of the clique tree of a connected chordal claw-free graph.

\begin{lemma}\label{lem:deg3starforkclique}
	Let $B\in\CM$. If the degree of $B$ in clique tree $T_G$ is at least $3$, then $B$ is a star clique or a fork clique.
\end{lemma}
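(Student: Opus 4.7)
The argument proceeds by contradiction: I will construct an induced $K_{1,3}$ in $G$ whenever $B$ is neither a star clique nor a fork clique. The basic facts in play are that for every $v\in B$ the subgraph $T_G[\CM_v]$ is a path through $B$ (Lemma~\ref{lem:maxclpath}), so that $v$ lies in at most two of $B$'s neighbors in $T_G$; for distinct neighbors $A_i,A_j$ of $B$, $A_i\cap A_j\subseteq B$ by the clique intersection property; and $A\cap B\ne\emptyset$ for every neighbor $A$ of $B$ (else removing the edge $\{A,B\}$ from $T_G$ would disconnect $G$).

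Assume $B$ is not a star clique and fix $v_0\in B$ lying in two distinct neighbors $A_1,A_2$ of $B$. The central step is a \emph{branch claim}: for any third neighbor $A_3$ of $B$, one has $A_3\cap B\subseteq A_1\cup A_2$. Otherwise, a witness $w\in(A_3\cap B)\setminus(A_1\cup A_2)$ and vertices $a_i\in A_i\setminus B$ (non-empty by maximality of $A_i$) form an induced claw $\{v_0,w,a_1,a_2\}$ centered at $v_0$: the edges $v_0w$, $v_0a_1$, $v_0a_2$ come from the cliques $B,A_1,A_2$; and the non-edges $\{w,a_i\}$, $\{a_1,a_2\}$ follow because $\CM_w$ (whose neighbors of $B$ on the path exclude $A_1,A_2$) and $\CM_{a_i}$ lie in disjoint branches of $T_G\setminus\{B\}$. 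Combining the branch claim with $A_3\cap B\ne\emptyset$ produces a second witness vertex in $A_1\cap A_3\cap B$ or $A_2\cap A_3\cap B$; a further claw argument---using a vertex $b\in B\setminus A_1$, which is forced to lie outside $A_2\cup A_3$ should $A_2\cap A_3$ be empty---then shows that all three pairwise intersections $A_1\cap A_2$, $A_1\cap A_3$, $A_2\cap A_3$ are non-empty.

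The degree of $B$ must then equal $3$. If a fourth neighbor $A_4$ existed, the branch claim applied to the three witnesses $v_{ij}\in A_i\cap A_j\cap B$ gives
\[
A_4\cap B\ \subseteq\ (A_1\cup A_2)\cap(A_1\cup A_3)\cap(A_2\cup A_3)\ =\ \bigcup_{1\le i<j\le 3}(A_i\cap A_j),
\]
so any vertex in $A_4\cap B$ would lie in $A_4$ together with two of $A_1,A_2,A_3$, placing it in three neighbors of $B$ and contradicting the path property. Hence $A_4\cap B=\emptyset$, contradicting the non-emptiness of intersections of adjacent cliques.

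It remains to verify the fork conditions. The realization of all three pairs is supplied by the witnesses $v_{ij}$. That every $v\in B$ lies in exactly two neighbors of $B$ follows from Lemma~\ref{lem:path3maxcl}: each triple $A_i,B,A_j$ on $T_G[\CM_{v_{ij}}]$ has $B$ in the middle and therefore yields $B\subseteq A_i\cup A_j$; intersecting the three inclusions gives $B\subseteq\bigcup_{i<j}(A_i\cap A_j)$, forcing every $b\in B$ to lie in at least two (and, by the path property, exactly two) of $A_1,A_2,A_3$. The remaining strict equality $\CM_v=\{B,A,A'\}$ rules out a fourth max clique on $v$'s path: if $\CM_v$ extended past some $A_i$ to a max clique $M$, Lemma~\ref{lem:path3maxcl} applied to $\{M,A_i,B\}$ yields the strong containment $A_i\subseteq M\cup B$, and then---picking $m\in M\setminus A_i$ together with the complementary witness $u\in A_j\cap A_k\cap B$ (which necessarily lies outside both $A_i$ and $M$)---a further claw construction produces an induced $K_{1,3}$ at $v$. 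This last step, threading a fourth claw vertex through the tight containments produced by Lemma~\ref{lem:path3maxcl}, is the main technical obstacle.
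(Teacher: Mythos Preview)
Your overall strategy---assume $B$ is not a star clique, find witnesses $v_{ij}\in A_i\cap A_j\cap B$ for all three pairs, force $\deg(B)=3$, then verify the fork conditions---is sound and parallels the paper's route through fork triangles. The branch claim and the degree-$3$ argument are correct (though your justification that adjacent cliques intersect is terse: the cleanest reason is Lemma~\ref{lem:cliqueTreeGleichUnionDerPfade}, which guarantees each tree edge $\{A,B\}$ lies on some path $T_G[\CM_v]$, so $v\in A\cap B$).

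The genuine gap is the final step, where you must show $|\CM_v|=3$ for every $v\in B$. You assert that if $\CM_v$ extends past $A_i$ to some $M$, then ``a further claw construction produces an induced $K_{1,3}$ at $v$'' using $m\in M\setminus A_i$ and $u=v_{jk}\in A_j\cap A_k\cap B$. But no such claw exists with these two leaves: any third leaf $w$ adjacent to $v$ lies in $\bigcup\CM_v\subseteq M\cup A_i\cup B\cup A_j\cup\cdots$; to avoid adjacency with $u$ you need $w\notin A_j\cup B\cup A_k$, forcing $w\in M\cup A_i$; but $A_i\subseteq M\cup B$ (your own application of Lemma~\ref{lem:path3maxcl}) then forces $w\in M$, making $w$ adjacent to $m$. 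The containments you derive are too tight to leave room for a fourth independent leaf. Trying other centers (e.g.\ $v_{ik}$, which your containment $B\subseteq M\cup A_j$ forces into $M$) runs into the same obstruction.

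This is exactly the difficulty the paper isolates as Lemma~\ref{lem:ForkImpliesLength3}: once two paths $T_G[\CM_v]$ and $T_G[\CM_{v_{jk}}]$ fork at $B$, both must have length~$3$. The paper's proof of that lemma is not a single claw but a two-case argument that shuttles between Lemma~\ref{lem:path3maxcl} and Corollary~\ref{cor:noMiddleIntersection}, passing through an auxiliary vertex to reduce the second case to the first. Your proposal needs either that lemma or an equivalent multi-step argument here; the promised single claw does not materialize.
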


\begin{corollary}\label{lem:forkCliqueNeighbors}
	Let $B\in\CM$ be a fork clique. Then every neighbor of $B$ in clique tree $T_G$ is a star clique.
\end{corollary}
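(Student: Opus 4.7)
The strategy is proof by contradiction: I suppose that some neighbor $A$ of the fork clique $B$ fails to be a star clique, which yields a vertex $v \in A$ and two distinct neighbors $M, M'$ of $A$ in $T_G$ with $v \in M \cap M'$. By Lemma~\ref{lem:maxclpath}, $T_G[\CM_v]$ is a path; since $M, A, M' \in \CM_v$ and both $M, M'$ are adjacent to $A$ in $T_G$, this path must traverse $M - A - M'$ with $A$ as an interior vertex.

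The first step is to rule out $v \in B$. If $v \in A \cap B$, then the fork-clique property pins down $\CM_v = \{B, A, A'\}$ for some other neighbor $A'$ of $B$. As $A$ and $A'$ are non-adjacent siblings under $B$ in $T_G$, the path $T_G[\CM_v]$ must be $A - B - A'$, so $A$ has only one neighbor ($B$) in it, contradicting $M \neq M'$. Hence $v \notin B$, so $B \notin \CM_v$, and the whole path $T_G[\CM_v]$ lies in the component of $T_G - B$ containing $A$; in particular $M, M' \neq B$.

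The heart of the argument, and the main obstacle, is then to construct an induced claw. I pick a second neighbor $A_2$ of $B$ and, via the fork-clique property applied to the pair $(A, A_2)$, obtain $v_{12} \in A \cap B$ with $\CM_{v_{12}} = \{B, A, A_2\}$. Together with $a_M \in M \setminus A$ and $a_{M'} \in M' \setminus A$, the quadruple $\{v, v_{12}, a_M, a_{M'}\}$ should induce a $K_{1,3}$ centered at $v$. The three edges $v$--$v_{12}$, $v$--$a_M$, $v$--$a_{M'}$ are immediate from the common max cliques $A$, $M$, $M'$. The delicate non-adjacencies among the three leaves come from locating $\CM_{a_M}, \CM_{a_{M'}}, \CM_{v_{12}}$ in three distinct components of $T_G - A$: the components containing $M$, $M'$, and $B$ respectively (the latter contains $A_2$ as well). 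This in turn uses $M, M' \neq B$ from the previous step, together with the exact form of $\CM_{v_{12}}$, so any two of $v_{12}, a_M, a_{M'}$ share no common max clique; distinctness of the four vertices then follows (e.g., from the clique intersection property $M \cap M' \subseteq A$). Claw-freeness of $G \in \CHCL$ yields the contradiction.
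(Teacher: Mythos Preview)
Your proof is correct, but it takes a genuinely different route from the paper's.

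The paper's argument is very short: it first notes that two fork cliques cannot be adjacent, so $A$ is not a fork clique; then Lemma~\ref{lem:deg3starforkclique} forces $A$ to have degree exactly~$2$, say with neighbors $B$ and $M$. For $A$ to fail to be a star clique there would have to be some $v \in A \cap B \cap M$, but the fork-clique property of $B$ gives $\CM_v = \{B, A, A'\}$ with $A'$ a neighbor of $B$; since $M \neq B$ is a neighbor of $A$, it cannot equal $A'$, so $v \notin M$ --- contradiction.

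Your argument avoids Lemma~\ref{lem:deg3starforkclique} entirely: you allow $A$ to have arbitrary degree, locate a vertex $v \in A$ with two neighbors $M, M' \neq B$ of $A$ in $\CM_v$, and then build an explicit claw $\{v, v_{12}, a_M, a_{M'}\}$ using only Lemma~\ref{lem:maxclpath}, the clique-intersection property, and the fork-clique definition. This makes the corollary logically independent of the fork-triangle machinery that underlies Lemma~\ref{lem:deg3starforkclique}, at the price of a longer verification (three edges, three non-edges, and distinctness). The paper's route is shorter precisely because it cashes in the structural work already invested in Lemma~\ref{lem:deg3starforkclique}; yours is more self-contained.
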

\begin{proof}
	Let us assume max clique $A$ is a neighbor of fork clique $B$, and $A$ is not a star clique. Then the degree of $A$ is at least $2$.
	As $A$ cannot be a fork clique, Lemma~\ref{lem:deg3starforkclique} implies that $A$ has degree $2$.
	Since $B$ is a fork clique, there does not exist a vertex $v\in A$ that is contained in $B$ and the other neighbor of $A$.
	Thus, $A$ is a star clique, a contradiction.
\end{proof}

\medskip

\noindent
In the remainder of this section we prove Lemma~\ref{lem:deg3starforkclique}.

Let $P$ and $Q$ be two paths in $T_G$.
We call $(A',A,\{A_P,A_Q\})\in V^2\times {\binom{V}{2}}$ a \emph{fork} of $P$ and $Q$,
if $P[\{A',A,A_P\}]$ and $Q[\{A',A,A_Q\}]$ are induced subpaths of length $3$ of $P$ and $Q$, respectively,
and neither $A_P$ occurs in $Q$ nor $A_Q$ occurs in $P\!$.
Figure~\ref{fig:5fork} shows a fork of paths $P$ and $Q$.
We say $P$ and $Q$ \emph{fork (in $A$)} if there exists a fork $(A',A,\{A_P,A_Q\})$ of $P$ and $Q$.

\begin{figure}[htbp]\centering
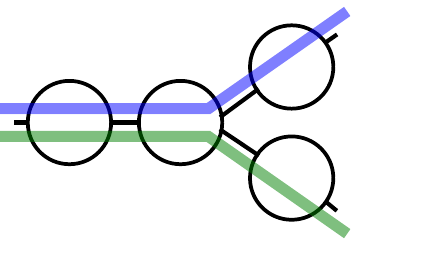\vspace{-1mm}
\caption{A fork of $P$ and $Q$}%
\label{fig:5fork}
\end{figure}

\begin{lemma}\label{lem:ForkImpliesLength3}
	Let $v,w\in V\!$.
	If the paths $T_G[\CM_v]$ and $T_G[\CM_w]$ fork, then $T_G[\CM_v]$ and $T_G[\CM_w]$ are paths of length $3$.
\end{lemma}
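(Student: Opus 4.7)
The plan is to concentrate on the middle clique $A$ of the fork and use the dichotomy that any clique tree vertex of degree at least $3$ is either a star clique or a fork clique. The fork structure gives three distinct neighbors of $A$ in $T_G$: the predecessor $A'$, the successor $A_P$ on $P$, and the successor $A_Q$ on $Q$. Since $A_P \neq A_Q$ and neither equals $A'$, the degree of $A$ in $T_G$ is at least $3$, so by Lemma~\ref{lem:deg3starforkclique} the clique $A$ is either a star clique or a fork clique.

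First I would eliminate the star clique option. Because $v$ lies on the path $T_G[\CM_v]$ that traverses $A',A,A_P$, the vertex $v$ is contained in $A$ and in \emph{two} distinct neighbors of $A$, namely $A'$ and $A_P$. This contradicts the definition of a star clique, which requires every vertex of $A$ to lie in at most one neighbor of $A$. Hence $A$ must be a fork clique, which by definition forces $A$ to have degree exactly $3$ in $T_G$ — so its neighbors are precisely $A'$, $A_P$, $A_Q$ — and forces, for every $u \in A$, the existence of a unique pair $\{B,B'\}$ of neighbors of $A$ with $\CM_u = \{A,B,B'\}$.

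Applying this last property to $v$: we already know $\{A',A,A_P\} \subseteq \CM_v$, and note that $A_Q \notin \CM_v$ because $A_Q$ does not occur on $P = T_G[\CM_v]$. Among the three admissible pairs of neighbors of $A$, the only pair compatible with $v$'s membership in both $A'$ and $A_P$ is $\{A',A_P\}$, so $\CM_v = \{A',A,A_P\}$ and $T_G[\CM_v]$ is the three-vertex path $A',A,A_P$. Repeating the symmetric argument for $w$ yields $\CM_w = \{A',A,A_Q\}$, so $T_G[\CM_w]$ has the required length as well.

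The main obstacle I expect is bookkeeping rather than conceptual: I must check carefully that all three neighbors $A',A_P,A_Q$ of $A$ forced by the fork are genuinely distinct (this is guaranteed by the fork definition, since $A_P \neq A_Q$ and $A'$ is the predecessor on both paths), and that neither $A_Q$ occurs on $P$ nor $A_P$ occurs on $Q$ (again part of the fork definition), so the fork-clique characterization pins down $\CM_v$ and $\CM_w$ exactly and no additional cliques can extend the paths.
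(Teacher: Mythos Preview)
Your argument is circular. You invoke Lemma~\ref{lem:deg3starforkclique} (the star/fork dichotomy for degree-$\geq 3$ vertices), but in the paper the proof of Lemma~\ref{lem:deg3starforkclique} is deferred and built up from a chain of auxiliary results, the first of which is precisely the present Lemma~\ref{lem:ForkImpliesLength3}. Concretely, the paper proves Lemma~\ref{lem:deg3starforkclique} via Lemma~\ref{lem:forkclique} and Corollary~\ref{cor:forkCliqueDegree3}, and Lemma~\ref{lem:forkclique} in turn appeals to Lemma~\ref{lem:ForkImpliesLength3}. So you cannot use the star/fork dichotomy here without an independent proof of it, which you have not supplied.

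The paper's own proof is elementary and uses only the earlier tools: Lemma~\ref{lem:path3maxcl} (the middle clique of three consecutive cliques on a path $T_G[\CM_v]$ is contained in the union of the outer two), Corollary~\ref{cor:noMiddleIntersection} (for distinct $v,w$, $T_G[\CM_v\setminus\CM_w]$ is connected), and Lemma~\ref{lem:maxclpath}. It assumes $T_G[\CM_v]$ has length at least $4$ and distinguishes whether the extra clique extends the path on the $A'$-side or the $A_P$-side; in each case one picks a vertex $u$ in a suitable set difference of adjacent cliques and, by repeated application of Lemma~\ref{lem:path3maxcl}, forces $u$ to lie in three cliques that do not form a path, contradicting Lemma~\ref{lem:maxclpath}. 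You should rework your argument along these lines, or else provide a self-contained proof of the star/fork dichotomy that does not pass through this lemma.
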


\begin{proof}
	Let $v,w\in V\!$. Clearly, if $T_G[\CM_v]$ and $T_G[\CM_w]$ fork, then they must be paths of length at least~$3$.
	It remains to prove that their length is at most $3$.
	For a contradiction, let us assume the length of $T_G[\CM_v]$ is at least $4$.
	Let $(A_1,B,\{A_2,A_2'\})$ be a fork of $T_G[\CM_v]$ and $T_G[\CM_w]$
	where $A_2\in \CM_v\setminus \CM_w$ and $A_2'\in \CM_w\setminus \CM_v$.

\begin{figure}[bhtp]
\begin{subfigure}[b]{0.32\textwidth}\centering
 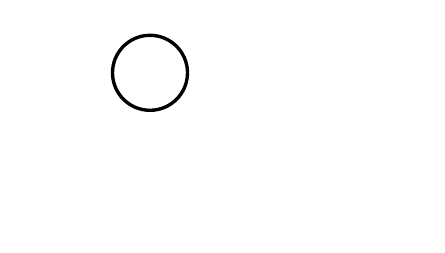
		\caption{}%
		\label{fig:6abewforklength}
\end{subfigure}
\hspace{0.01\textwidth}%
\begin{subfigure}[b]{0.32\textwidth}\centering
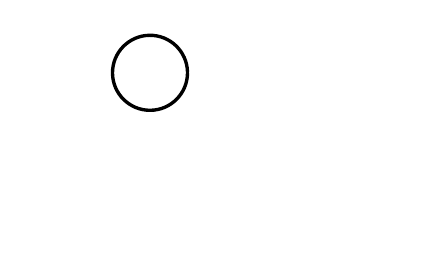
		\caption{}%
		\label{fig:6bbewforklength}
\end{subfigure}
\hspace{0.01\textwidth}%
\begin{subfigure}[b]{0.32\textwidth}\centering
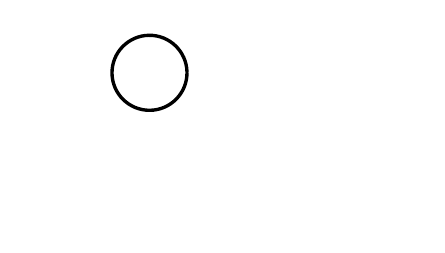
		\caption{}%
		\label{fig:6cbewforklength}
\end{subfigure}
\caption{Illustrations for the proof of Lemma~\ref{lem:ForkImpliesLength3}}%
\label{fig:6bewforklength}
\end{figure}

	First let us assume there exists a max clique $A_0\in \CM_v$ such that $P=A_0,A_1,B,A_2$
	is a subpath of $T_G[\CM_v]$ of length $4$.
	According to Corollary~\ref{cor:noMiddleIntersection}, the graph $T_G[\CM_v\setminus\CM_w]$ is connected.
	Thus, we have $A_0\in \CM_w$ (see Figure~\ref{fig:6abewforklength}).
	Now $A_0$ and $A_1$ are distinct max cliques.
	Therefore, there exists a vertex $u\in A_1\setminus A_0$.
	As $P$ is a subpath of $T_G[\CM_v]$ and $P'=A_0,A_1,B,A_2'$ is a subpath of $T_G[\CM_w]$,
	vertex $u$ is not only contained in $A_1$ but also in $B$, $A_2$ and $A_2'$ by Lemma~\ref{lem:path3maxcl} (see Figure~\ref{fig:6bbewforklength}).
	As a consequence, $T_G[\CM_u]$ is not a path, a contradiction to Lemma~\ref{lem:maxclpath}.

	\pagebreak[1]

	Next, let us assume there exists a max clique $A_3\in \CM_v$ such that $P=A_1,B,A_2,A_3$
	is a subpath of $T_G[\CM_v]$ of length $4$.
	Further, $P'=A_1,B,A_2'$ is a subpath of $T_G[\CM_w]$.
	As $A_1$ and $B$ are max cliques, there exists a vertex $u\in B\setminus A_1$.
	By Lemma~\ref{lem:path3maxcl}, vertex $u$ is also contained in $A_2$, $A_3$ and $A_2'$ as shown in Figure~\ref{fig:6cbewforklength}.
	Now let us consider the paths $T_G[\CM_v]$ and $T_G[\CM_u]$.
	$Q=A_3,A_2,B,A_1$ is a subpath of $T_G[\CM_v]$, and $Q'=A_3,A_2,B,A_2'$ is a subpath of $T_G[\CM_u]$.
	Clearly, $(A_2,B,\{A_1,A_2'\})$ is a fork of $T_G[\CM_v]$ and $T_G[\CM_u]$.
	According to the previous part of this proof, we obtain a contradiction.
\end{proof}

\noindent
The max cliques $A_1,A_2,A_3\in \CM$ form a \emph{fork triangle} around a max clique $B\in\CM$
if  $A_1$, $A_2$ and $A_3$ are distinct neighbors of $B$ and there exist vertices ${u,v,w\in V}$ such that
$\CM_u=\{A_1,B,A_2\}$, $\CM_v=\{A_2,B,A_3\}$ and ${\CM_w=\{A_3,B,A_1\}}$.
We say that max clique $B\in\CM$ has a \emph{fork triangle}
if there exist max cliques $A_1,A_2,A_3\in\CM$ that form a fork triangle around $B$.
Figure~\ref{fig:7forkclique} depicts a fork triangle around a max clique $B$.
Clearly, if a max clique $B$ has a fork triangle, then $B$ is a vertex of degree at least $3$ in $T_G$.

\begin{figure}[htbp]\centering
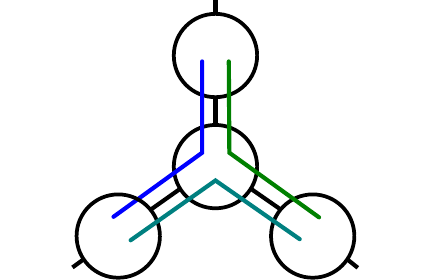\vspace{-1mm}
\caption{A fork triangle}%
\label{fig:7forkclique}
\end{figure}

\begin{lemma}\label{lem:forkclique}
	Let $v,w\in V\!$, and let $B\in \CM$ be a max clique.
	If $T_G[\CM_u]$ and $T_G[\CM_v]$ fork in~$B$, then $B$ has a fork triangle.
\end{lemma}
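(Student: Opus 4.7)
The plan is to use Lemma~\ref{lem:ForkImpliesLength3} to pin down the structure of the two forking paths, observe that two of the three witnesses needed for a fork triangle are already in hand, and then construct the missing witness through two successive claw-freeness arguments.

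Concretely, by Lemma~\ref{lem:ForkImpliesLength3} both $T_G[\CM_v]$ and $T_G[\CM_w]$ have length $3$; writing the fork as $(A_1, B, \{A_2, A_3\})$ this gives $\CM_v = \{A_1, B, A_2\}$ and $\CM_w = \{A_1, B, A_3\}$. Thus $A_1, A_2, A_3$ are three distinct neighbors of $B$ in $T_G$, and $v, w$ already serve as the witnesses for the pairs $(A_1, A_2)$ and $(A_3, A_1)$ in the definition of a fork triangle; only the witness for $(A_2, A_3)$ is missing. The goal is therefore to produce a vertex $r$ with $\CM_r = \{A_2, B, A_3\}$ exactly.

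The first step will be to establish $B \subseteq A_1 \cup A_2$. If some vertex $u \in B \setminus (A_1 \cup A_2)$ existed, I would pick $x \in A_1 \setminus B$ and $y \in A_2 \setminus B$ (which are non-empty because $A_1$ and $A_2$ are max cliques distinct from $B$) and verify that $\{v, x, u, y\}$ induces a claw centered at $v$: the edges $vx, vu, vy$ are immediate from $v \in A_1 \cap B \cap A_2$, while the three non-edges follow because the subtrees $\CM_x, \CM_y, \CM_u$ lie in pairwise separated regions of $T_G$ relative to $B$ (the clique intersection property supplies the separation). Running the symmetric argument with $w$ in place of $v$ yields $B \subseteq A_1 \cup A_3$. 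Intersecting, $B \subseteq A_1 \cup (A_2 \cap A_3)$, and since $B$ and $A_1$ are distinct max cliques we may choose $r \in B \setminus A_1$; any such $r$ automatically lies in $A_2 \cap A_3$, so $\{A_2, B, A_3\} \subseteq \CM_r$.

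The main obstacle will be upgrading this inclusion to equality, since a priori the path $T_G[\CM_r]$ could extend the segment $A_2, B, A_3$ at one of its ends. For this a second claw-freeness argument is needed. If some $C \in \CM_r \setminus \{A_2, B, A_3\}$ existed, then by Lemma~\ref{lem:maxclpath} $C$ would be adjacent in $T_G$ either to $A_2$ or to $A_3$ (but not $B$), say to $A_2$ up to symmetry. I would then pick $y \in C \setminus A_2$ and $z \in A_3 \setminus B$, and use the clique intersection property, applied to the appropriate paths in $T_G$, to conclude $y \notin A_1 \cup B \cup A_3$ and $z \notin A_1 \cup A_2$. Then $\{r, v, z, y\}$ should form a claw centered at $r$: the edges $rv, rz, ry$ come from the max cliques $A_2, A_3, C$, and the three non-edges $vz, vy, zy$ follow because $\CM_v$, $\CM_z$ and $\CM_y$ sit in pairwise disjoint subtrees of $T_G$ obtained by removing $B$ (and, where needed, $A_2$) from the relevant paths. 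This contradiction will close the proof, and the triple $(v, r, w)$ then witnesses the desired fork triangle around $B$.
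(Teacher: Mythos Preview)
Your proposal is correct and follows the same core idea as the paper: pick a vertex of $B$ outside the ``common'' neighbor of the fork and show it supplies the missing third witness. The difference is in how much work you do by hand. Where you prove $B \subseteq A_1 \cup A_2$ and $B \subseteq A_1 \cup A_3$ by constructing explicit claws, the paper simply invokes Lemma~\ref{lem:path3maxcl}: since $B$ lies between $A_1$ and $A_2$ on the path $T_G[\CM_v]$, that lemma gives $B \subseteq A_1 \cup A_2$ immediately (and symmetrically for $A_3$). Likewise, your second claw argument to rule out an extra $C \in \CM_r$ is unnecessary: once you know $\{A_2,B,A_3\} \subseteq \CM_r$ and $r \notin A_1$, the paths $T_G[\CM_r]$ and $T_G[\CM_v]$ fork in $B$ via $(A_2,B,\{A_1,A_3\})$, so Lemma~\ref{lem:ForkImpliesLength3} forces $|\CM_r|=3$ in one line. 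Your direct arguments are valid, but they essentially re-derive special cases of these two lemmas; the paper's proof is about four lines as a result.
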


\begin{proof}
	Let $v,w\in V\!$, let $B\in \CM$ be a max clique, and let $T_G[\CM_u]$ and $T_G[\CM_v]$ fork in~$B$.
	Then $T_G[\CM_u]$ and $T_G[\CM_v]$ are paths of length $3$ by Lemma~\ref{lem:ForkImpliesLength3}.
	Let $\CM_u=\{A_2,B,A_1\}$ and  $\CM_v=\{A_2,B,A_3\}$ with $A_1\not=A_3$.
	Since $B$ and $A_2$ are max  cliques, there exists a vertex $w\in B\setminus A_2$.
	Now, we can apply Lemma~\ref{lem:path3maxcl} to the paths $T_G[\CM_u]$ and $T_G[\CM_v]$,
	and obtain that $w\in A_1$ and $w\in A_3$.
	As $T_G[\CM_w]$ and $T_G[\CM_u]$ fork, the path $T_G[\CM_w]$ must be of length $3$ by Lemma~\ref{lem:ForkImpliesLength3}.
	Thus, $\CM_w=\{A_3,B,A_1\}$.
	Hence,  $A_1,A_2,A_3$ form a fork triangle around~$B$.
\end{proof}

\begin{lemma}\label{lem:forkcliquelength3middle}
	Let $z\in V\!$. If max clique $B\in \CM_z$  has a fork triangle,
	then $|\CM_z|=3$ and $B$ is in the middle of path $T_G[\CM_z]$.
\end{lemma}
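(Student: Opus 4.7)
Fix a fork triangle $A_1,A_2,A_3$ around $B$ with witnesses $u,v,w$ satisfying $\CM_u=\{A_1,B,A_2\}$, $\CM_v=\{A_2,B,A_3\}$ and $\CM_w=\{A_3,B,A_1\}$. Since $B$ lies between any two of $A_1,A_2,A_3$ in $T_G$, the clique intersection property gives $A_i\cap A_j\subseteq B$ for $i\neq j$. My plan is to first determine $\{A_1,A_2,A_3\}\cap\CM_z$ exactly, and then invoke Lemma~\ref{lem:ForkImpliesLength3} to pin down $\CM_z$.

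I claim exactly two of $A_1,A_2,A_3$ belong to $\CM_z$. All three cannot, for otherwise $B$ would have three distinct neighbors in the induced graph $T_G[\CM_z]$, contradicting the fact that $T_G[\CM_z]$ is a path (Lemma~\ref{lem:maxclpath}). The main obstacle is to rule out the case that at most one $A_i$ belongs to $\CM_z$, which I plan to handle via a claw argument. Assuming w.l.o.g.\ that $A_2,A_3\notin\CM_z$, pick any $a_2\in A_2\setminus B$ and $a_3\in A_3\setminus B$; these exist since $A_2,A_3$ are max cliques distinct from $B$. I will show that $\{v,z,a_2,a_3\}$ induces a claw in $G$ centered at $v$. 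The edges from $v$ to each of $z,a_2,a_3$ come directly from $v\in B\cap A_2\cap A_3$. For the three non-edges, the key observation is uniform: because $a_i\notin B$, the path $T_G[\CM_{a_i}]$ lies entirely in the connected component of $T_G-B$ that contains $A_i$; because $A_i\notin\CM_z$, the path $T_G[\CM_z]$ cannot enter this component either, since doing so would require passing through the edge $\{B,A_i\}$ and hence including $A_i$. This forces $\CM_z\cap\CM_{a_i}=\emptyset$, so $z$ and $a_i$ share no common max clique and are therefore non-adjacent. The same disjoint-component argument applied to $a_2$ and $a_3$ (whose max-clique paths lie in different components of $T_G-B$) gives that $a_2$ and $a_3$ are non-adjacent, completing the claw and the contradiction.

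To finish, assume w.l.o.g.\ $A_1,A_2\in\CM_z$ and $A_3\notin\CM_z$. Then $T_G[\CM_z]$ contains the consecutive triple $A_1,B,A_2$, while $T_G[\CM_v]$ is the path $A_3,B,A_2$. The tuple $(A_2,B,\{A_1,A_3\})$ is a fork of these two paths: $A_1\notin\CM_v$ and $A_3\notin\CM_z$. Lemma~\ref{lem:ForkImpliesLength3} then forces $T_G[\CM_z]$ to consist of exactly three vertices, so $\CM_z=\{A_1,B,A_2\}$ and $B$ lies in the middle of the path $T_G[\CM_z]$, as required.
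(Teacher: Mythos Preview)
Your proof is correct and follows the same overall case analysis as the paper: you split on how many of $A_1,A_2,A_3$ lie in $\CM_z$, handle the case of all three by the path property, and handle the case of exactly two via a fork with one of the witness paths and Lemma~\ref{lem:ForkImpliesLength3}. The only genuine difference is in ruling out the case $|\CM_z\cap\{A_1,A_2,A_3\}|\le 1$. The paper observes that then $\CM_z$ separates one of the three witness paths $T_G[\CM_u],T_G[\CM_v],T_G[\CM_w]$, contradicting Corollary~\ref{cor:noMiddleIntersection}; you instead build an explicit claw $\{v,z,a_2,a_3\}$ from scratch. Your argument is self-contained and avoids appealing to Corollary~\ref{cor:noMiddleIntersection}, at the cost of reproving its content in this special situation; the paper's route is shorter because that corollary has already done the work.
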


\begin{proof}
	Let $z\in V\!$, and let $B\in \CM_z$ have a fork triangle.
	Then, there exist $u,v,w\in V$ and distinct neighbor max cliques $A_1,A_2,A_3$ of $B$ such that
	${\CM_u=\{A_1,B,A_2\}}$, ${\CM_v=\{A_2,B,A_3\}}$ and ${\CM_w=\{A_3,B,A_1\}}$.
	Let $\CW$ be the set $\{A_1,A_2,A_3\}$ of max cliques that form a fork triangle around $B$.
	Let us consider $|\CM_z\cap \CW|$.
	If $|\CM_z\cap \CW|\leq 1$, then $\CM_z$ is a separating set of
	at least one of the paths  $T_G[\CM_u]$, $T_G[\CM_v]$ or $T_G[\CM_w]$
	as shown in Figure~\ref{fig:9aBewforkclique} and~\ref{fig:9bBewforkclique},
	and we have a contradiction to Corollary~\ref{cor:noMiddleIntersection}.
	Clearly, we cannot have $|\CM_z\cap \CW|=3$, since $T_G[\CM_z]$ must be a path.
	It remains to consider $|\CM_z\cap \CW|= 2$, which is illustrated in Figure~\ref{fig:9cBewforkclique}.
	In this case, $T_G[\CM_z]$ forks with one of the paths $T_G[\CM_u]$, $T_G[\CM_v]$ or $T_G[\CM_w]$ in $B$,
	and must be of length $3$ according to Lemma~\ref{lem:ForkImpliesLength3}.
	Obviously, $B$ is in the middle of the path $T_G[\CM_z]$.
\end{proof}

\begin{figure}[bhtp]
\begin{subfigure}[b]{0.32\textwidth}\centering
 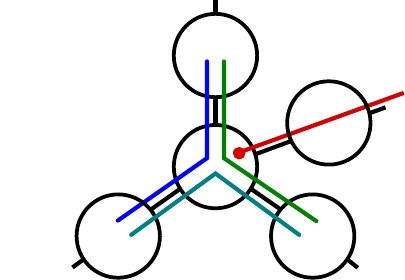
		\caption{$|\CM_z\cap\CW|=0$}%
		\label{fig:9aBewforkclique}
\end{subfigure}
\hspace{0.012\textwidth}%
\begin{subfigure}[b]{0.32\textwidth}\centering
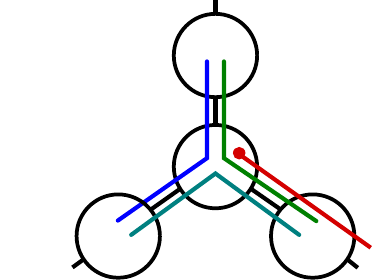
		\caption{$|\CM_z\cap\CW|=1$}%
		\label{fig:9bBewforkclique}
\end{subfigure}
\hspace{0.012\textwidth}%
\begin{subfigure}[b]{0.32\textwidth}\centering
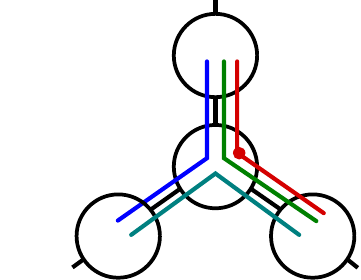
		\caption{$|\CM_z\cap\CW|=2$}%
		\label{fig:9cBewforkclique}
\end{subfigure}
\caption{Illustrations for the proof of Lemma~\ref{lem:forkcliquelength3middle}}%
\label{fig:9Bewforkclique}
\end{figure}

\begin{lemma}\label{lem:deg4starclique}
	If max clique $B\in \CM$ has a fork triangle, then the degree of $B$ in $T_G$ is $3$.
\end{lemma}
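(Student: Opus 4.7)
The plan is to argue by contradiction: I assume $B$ has the fork triangle with neighbors $A_1, A_2, A_3$ and witnesses $u, v, w \in B$, and additionally a fourth neighbor $A_4$ in $T_G$, and exhibit an induced claw in $G$. First I produce a vertex $y \in A_4 \cap B$. This intersection is non-empty because $G$ is connected: if $A_4 \cap B$ were empty, removing the edge $\{A_4, B\}$ from $T_G$ would cut $T_G$ into two components, and since every $T_G[\CM_z]$ is connected (Lemma~\ref{lem:maxclpath}) each $\CM_z$ would lie on one side, partitioning $V$ with no edges of $G$ between the two sides. Applying Lemma~\ref{lem:forkcliquelength3middle} to $y$ then pins down $\CM_y = \{A', B, A_4\}$ for some neighbor $A'$ of $B$ distinct from $A_4$.

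Next I set up a uniform way to read off adjacency in $G$ from the tree. By the clique intersection property, any two distinct neighbors $X, Y$ of $B$ satisfy $X \cap Y \subseteq B$, so for $a \in X \setminus B$ the path $T_G[\CM_a]$ lies entirely in the component of $T_G - B$ containing $X$ and misses every other neighbor of $B$. Vertices $a_i \in A_i \setminus B$ exist because each $A_i \neq B$ is a max clique, and any two such $a_i, a_j$ with $i \neq j$ are non-adjacent in $G$; more generally any vertex $z$ whose $\CM_z$ avoids the component of $X$ is non-adjacent to every $a \in X \setminus B$.

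Finally I split on whether $A' \in \{A_1, A_2, A_3\}$. If $A' = A_1$ (up to relabeling), pick $a_1 \in A_1 \setminus B$, $a_4 \in A_4 \setminus B$ and the fork-triangle witness $v$ with $\CM_v = \{A_2, B, A_3\}$: then $y \in A_1 \cap B \cap A_4$ is adjacent to each of $a_1, a_4, v$, while these three live in the three distinct branches of $T_G - B$ through $A_1$, $A_4$, and $\{A_2, A_3\}$ respectively, so they are pairwise non-adjacent; this gives a claw on $\{y, a_1, a_4, v\}$ centered at $y$. If $A' \notin \{A_1, A_2, A_3\}$, I center the claw at the witness $u$ with $\CM_u = \{A_1, B, A_2\}$ instead: taking $a_1 \in A_1 \setminus B$, $a_2 \in A_2 \setminus B$, the triple $a_1, a_2, y$ is adjacent to $u$ via $A_1$, $A_2$, $B$, but pairwise non-adjacent since their $\CM$-paths sit in the three distinct branches through $A_1$, $A_2$, and $A_4$. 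The main obstacle is ensuring that the third leaf of each claw really falls in a branch of $T_G - B$ disjoint from the branches of the other two leaves, and this is precisely what the constraint $|\CM_y| = 3$ supplied by Lemma~\ref{lem:forkcliquelength3middle} guarantees.
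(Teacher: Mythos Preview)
Your argument is correct, but it diverges from the paper's proof in how the contradiction is obtained. Both proofs begin the same way: assume a fourth neighbor ($C$, your $A_4$) and find a vertex $z$ (your $y$) with $B,A_4\in\CM_z$, then invoke Lemma~\ref{lem:forkcliquelength3middle} to get $|\CM_z|=3$ with $B$ in the middle. From there, however, the paper observes that at least two of $A_1,A_2,A_3$ lie outside $\CM_z$ (say $A_2,A_3$), so that $\CM_v\setminus\CM_z=\{A_2,A_3\}$ is disconnected in $T_G$, contradicting Corollary~\ref{cor:noMiddleIntersection}. You instead build an explicit claw in $G$, with a case split on whether the third element of $\CM_y$ is among the $A_i$. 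Your approach is more hands-on and self-contained (it essentially re-derives the mechanism behind Corollary~\ref{cor:noMiddleIntersection} in this specific situation), whereas the paper's proof is shorter because it outsources the claw construction to that corollary. The case split you carry out is unnecessary once you appeal to Corollary~\ref{cor:noMiddleIntersection}, but your version has the virtue of making the forbidden induced subgraph visible.
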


\begin{proof}
	Let $B\in \CM$ have a fork triangle.
	Thus, there exists vertices $u,v,w\in V$ and distinct neighbor max cliques $A_1,A_2,A_3$ of $B$ such that
	${\CM_u=\{A_1,B,A_2\}}$, ${\CM_v=\{A_2,B,A_3\}}$ and ${\CM_w=\{A_3,B,A_1\}}$.
	Let us assume $B$ is of degree at least $4$.
	Let $C$ be a neighbor of $B$ in $T_G$ that is distinct from $A_1$, $A_2$ and $A_3$.
	According to Lemma~\ref{lem:cliqueTreeGleichUnionDerPfade}
	there must be a vertex $z\in V$ such that $B,C\in \CM_z$ (for an illustration see Figure~\ref{fig:12BewDegree4}).
	By Lemma~\ref{lem:forkcliquelength3middle}, we have $|\CM_z|=3$.
	W.l.o.g., let $A_2$ and $A_3$ be not contained in $\CM_z$.
	Then $T_G[\CM_v\setminus \CM_z]$ is not connected,
	and we obtain a contradiction to Corollary~\ref{cor:noMiddleIntersection}.
\end{proof}

\begin{figure}[htbp]
\centering
	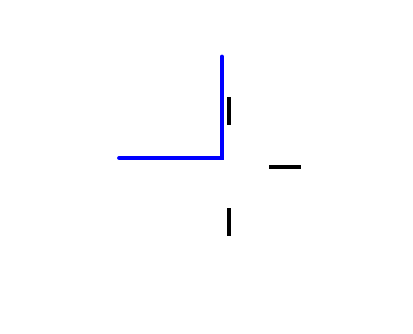
	\caption{Illustration for the proof of Lemma~\ref{lem:deg4starclique}}%
	\label{fig:12BewDegree4}
\end{figure}

\begin{corollary}\label{cor:forkCliqueDegree3}
	If a max clique $B\in\CM$ has a fork triangle, then $B$ is a fork clique.
\end{corollary}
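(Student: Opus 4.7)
The plan is to verify directly the two conditions in the definition of a fork clique for $B$, leveraging the two immediately preceding lemmas which already do almost all of the heavy lifting.

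First, I would establish that $B$ has degree exactly $3$ in $T_G$. Since $B$ has a fork triangle, $B$ has degree at least $3$, and by Lemma~\ref{lem:deg4starclique} its degree cannot exceed $3$. Hence the set of neighbors of $B$ in $T_G$ is exactly $\{A_1,A_2,A_3\}$, where $A_1,A_2,A_3$ are the max cliques given by the fork triangle.

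Next, the second clause of the fork clique definition (``for all distinct neighbors $A,A'$ of $B$ there is a vertex $v\in B$ with $\CM_v=\{B,A,A'\}$'') is now immediate: the three unordered pairs of distinct neighbors of $B$ are exactly $\{A_1,A_2\}$, $\{A_2,A_3\}$ and $\{A_3,A_1\}$, and these are realized by the vertices $u,v,w$ of the fork triangle, each of which lies in $B$ (since $B\in\CM_u\cap\CM_v\cap\CM_w$).

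For the first clause, I would fix an arbitrary vertex $x\in B$ and apply Lemma~\ref{lem:forkcliquelength3middle} to $x$: since $B\in\CM_x$ and $B$ has a fork triangle, we obtain $|\CM_x|=3$ and $B$ lies in the middle of the path $T_G[\CM_x]$. Thus $\CM_x=\{A,B,A'\}$ for two distinct neighbors $A,A'$ of $B$ in $T_G$, and by the first step these neighbors must both lie in $\{A_1,A_2,A_3\}$. This verifies the remaining condition, and so $B$ is a fork clique. The only step requiring any thought is recognizing that Lemma~\ref{lem:forkcliquelength3middle} applies uniformly to every $x\in B$ (not just to the three distinguished vertices $u,v,w$), but this is built into its statement; there is no real obstacle here, as the corollary is essentially a bookkeeping consequence of the two preceding lemmas.
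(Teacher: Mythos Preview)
Your proposal is correct and follows essentially the same approach as the paper: establish degree exactly $3$ via Lemma~\ref{lem:deg4starclique}, read off the second clause of the fork-clique definition directly from the fork-triangle vertices, and derive the first clause for an arbitrary $x\in B$ from Lemma~\ref{lem:forkcliquelength3middle}. The paper's proof is slightly terser but structurally identical.
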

\begin{proof}
 Let $B$ be a max clique that has a fork triangle.
 Then the degree of $B$ is $3$ by Lemma~\ref{lem:deg4starclique}.
 As $B$ has a fork triangle, there exists a vertex $v\in B$ with $\CM_v=\{B,A,A'\}$ for all neighbor max cliques $A,A'$ of $B$ with $A\not= A'$.
 Further, it follows from Lemma~\ref{lem:forkcliquelength3middle} that
 for every $v\in B$ there exist two neighbor max cliques $A,A'$ of $B$ with $A\not= A'$ such that
$\CM_v=\{B,A,A'\}$.
\end{proof}

\noindent
Now we can prove Lemma~\ref{lem:deg3starforkclique} and show that each max clique of degree at least $3$ in the clique tree $T_G$ is a star clique or a fork clique.

\begin{proof}[Proof of Lemma~\ref{lem:deg3starforkclique}]
	Let $B$ be a max clique of degree at least $3$. Suppose $B$ is not a star clique.
	Then there exists a vertex $u\in B$ and two neighbor max cliques $A_1,A_2$ of $B$ in $T_G$ that also contain vertex $u$.
	Let $C$ be a neighbor of $B$ with $C\not=A_1$ and $C\not=A_2$.
	Since $\{B,C\}$ is an edge of $T_G$,
	there must be a vertex $w\in V$ such that $B,C\in \CM_w$
	according to Lemma~\ref{lem:cliqueTreeGleichUnionDerPfade} (see Figure~\ref{fig:11BewDegree3} for an illustration).
	By Corollary~\ref{cor:noMiddleIntersection}, the graph $T_G[\CM_u\setminus \CM_w]$ must be connected.
	Thus, we have $A_1\in \CM_w$ or $A_2\in \CM_w$.
	Hence, $T_G[\CM_u]$ and $T_G[\CM_w]$ fork in $B$, and Lemma~\ref{lem:forkclique} implies that $B$ has a fork triangle.
	It follows from Corollary~\ref{cor:forkCliqueDegree3} that $B$ is a fork clique.
\end{proof}

\begin{figure}[htbp]\centering
	 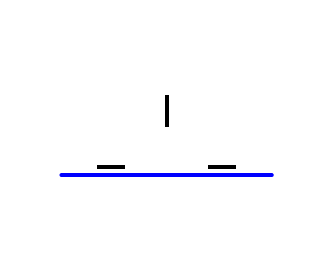\vspace{-0.5mm}
	 \caption{Illustration for the proof of Lemma~\ref{lem:deg3starforkclique}}%
	 \label{fig:11BewDegree3}
\end{figure}

\section{The Supplemented Clique Tree}\label{sec:CliqueTreeDefinability}

In this section we define the supplemented clique tree of a connected chordal claw-free graph $G$.
We obtain the supplemented clique tree by transferring the clique tree $T_G$ into a directed tree and
including some of the structural information about each max clique into the directed clique tree
by means of an $\LO$-coloring.
We show that there exists a parameterized $\STCC$-trans\-duc\-tion that defines for each
connected chordal claw-free graph and every tuple of suitable parameters an isomorphic copy of the corresponding supplemented clique tree.
In order to do this, we first present (parameterized) transductions for the clique tree and the directed clique tree.
Throughout this section we let $\tup{x},\tup{y}$ and $\tup{y}'$ be triples of structure variables.

\subsection{Defining the Clique Tree in \FO}
In a first step we present an $\FO$-transduction $\Theta=(\theta_{U}(\tup{y}),\theta_{\approx}(\tup{y},\tup{y}'),\theta_E(\tup{y},\tup{y}'))$
that defines for each
connected chordal claw-free graph $G$ a tree isomorphic to the clique tree of $G$.

For now, let $G=(V,E)$ be a chordal claw-free graph, and let $\CM$ be the set of max cliques of $G$.
A triple $\tup{b}=(b_1,b_2,b_3)\in V^3$ \emph{spans} a max clique $A\in \CM$ if $A$ is the only max clique that contains the vertices $b_1$, $b_2$ and $b_3$.
Thus, $\tup{b}$ spans max clique $A\in \CM$ if and only if $\CM_{b_1}\cap \CM_{b_2} \cap \CM_{b_3}=\{A\}$.
We call $\tup{b}\in V^3$ a \emph{spanning triple} of $G$ if $\tup{b}$ spans  a max clique.
We use spanning triples to represent max cliques.
Note that this concept was already used in~\cite{Laubner10} and~\cite{GGHL12} to represent max cliques of interval graphs.

\begin{lemma}\label{lem:MaxClSpan3V}
Every max clique of a chordal claw-free graph is spanned by a triple of vertices.
\end{lemma}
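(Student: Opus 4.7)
First I would reduce to the connected case: if $G$ is disconnected, replace $G$ by the connected component containing $A$, which is again chordal and claw-free, and note that $\CM_v$ for $v\in A$ is unchanged since every max clique containing $v$ lies in $v$'s component. Thus by Corollary~\ref{cor:cliquetreeunique} we may assume a unique clique tree $T_G$. The governing observation is that for $v\in A$ the path $T_G[\CM_v]$ passes through $A$ (Lemma~\ref{lem:maxclpath}), so any further max clique $A'\in\CM_v$ is reached from $A$ through a unique neighbor $A_i$ of $A$ in $T_G$, and in particular $A_i\in\CM_v$, i.e.\ $v\in A_i$. Hence to obtain $\CM_{b_1}\cap\CM_{b_2}\cap\CM_{b_3}=\{A\}$ it suffices to arrange that for every neighbor $A_i$ of $A$ in $T_G$, at least one of $b_1,b_2,b_3$ lies in $A\setminus A_i$.

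With this reduction in hand I would case-split on the degree of $A$ in $T_G$. If the degree is $0$ or $1$, maximality of $A$ yields $v\in A$ not in its (at most one) neighbor, so $\CM_v=\{A\}$ and $(v,v,v)$ spans $A$. If the degree is $2$ with neighbors $A_1,A_2$, maximality gives $v_i\in A\setminus A_i$, and by the observation $(v_1,v_2,v_1)$ spans $A$.

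If the degree is at least $3$, Lemma~\ref{lem:deg3starforkclique} offers a dichotomy. For $A$ a \emph{star clique}, I would pick three distinct neighbors $A_1,A_2,A_3$ and a vertex $b_i\in A\cap A_i$ for each $i\in[3]$; these intersections are non-empty because in a connected $G$ every edge of $T_G$ must correspond to a non-empty intersection of its endpoints (otherwise removing that edge of $T_G$ would still satisfy the clique-tree property and would split $G$). The defining property of a star clique makes the sets $A\cap A_j$ pairwise disjoint, so $b_i\notin A_j$ for $j\ne i$, and any further neighbor $A_k$ of $A$ is avoided by each $b_i$ as well; thus every neighbor of $A$ is avoided by some $b_j$ and the triple spans $A$. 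For $A$ a \emph{fork clique}, the definition directly supplies, for each pair $\{i,j\}\subseteq[3]$, a vertex $v_{ij}\in A$ with $\CM_{v_{ij}}=\{A,A_i,A_j\}$; the triple $(v_{12},v_{23},v_{13})$ then spans $A$ because the three three-element sets meet in exactly $\{A\}$.

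The main obstacle is the degree-$\ge 3$ case, where one must invoke the structural dichotomy Lemma~\ref{lem:deg3starforkclique} and the auxiliary observation that in a connected chordal claw-free graph every neighbor of $A$ in $T_G$ shares a vertex with $A$; the lower-degree cases and the translation between "triple spans $A$" and "every neighbor of $A$ is avoided by some $b_j$" are then routine applications of Lemma~\ref{lem:maxclpath}.
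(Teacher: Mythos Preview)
Your argument is correct, but it takes a substantially heavier route than the paper's.

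The paper's proof avoids the degree case-split and the star/fork dichotomy entirely by a simple trick: it fixes an arbitrary vertex $v\in B$ and includes $v$ itself in the spanning triple. Any max clique $A\neq B$ containing the triple must then contain $v$, hence lie on the single path $T[\CM_v]$ (Lemma~\ref{lem:maxclpath}). So one only needs to block the (at most two) neighbours of $B$ \emph{on that path}, which is done by choosing $u\in B\setminus B_{i-1}$ and $w\in B\setminus B_{i+1}$ where $B=B_i$ on the path; the clique intersection property then yields the contradiction. This uses only Lemma~\ref{lem:maxclpath} and the clique intersection property, works for an arbitrary clique tree of an arbitrary (not necessarily connected) chordal claw-free graph, and needs neither Corollary~\ref{cor:cliquetreeunique} nor Lemma~\ref{lem:deg3starforkclique}.

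Your approach instead aims to block \emph{every} neighbour of $A$ in $T_G$, which forces the reduction to the connected case, the degree case-split, and in the degree~$\ge 3$ case the structural dichotomy of Lemma~\ref{lem:deg3starforkclique}. That is logically sound, and your handling of the star and fork cases is clean. What the paper's argument buys is economy: it sidesteps all of this by observing that once one coordinate of the triple is $v$, the problem becomes one-dimensional along $T[\CM_v]$.
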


\begin{proof}
Let $T=(\CM,\CE)$ be a clique tree of a chordal claw-free graph $G$.
Let $B\in \CM$ and let $v\in B$.
By Lemma~\ref{lem:maxclpath}, the induced subgraph $T[\CM_v]$ is a path $P=B_1,\dots,B_l$.
Suppose $B=B_i$.
If $i>1$, let $u$ be a vertex in $B\setminus B_{i-1}$,
and let $w$ be a vertex in $B\setminus B_{i+1}$ if $i<l$.
We let $u=v$ if $i=1$, and we let $w=v$ if $i=l$.
Then $(u,v,w)$ spans max clique $B$:
Clearly, $u,v,w\in B$.
It remains to show, that there does not exist a max clique $A\in \CM$ with $A\not=B$ and $u,v,w\in A$.
Let us suppose such a max clique $A$ exists.
Since $v\in A$, max clique $A$ is a vertex on path~$P$.
W.l.o.g., suppose $A=B_j$ for $j<i$.
According to the clique intersection property, we have $u\in A\cap B\subseteq B_{i-1}$, a contradiction.
\end{proof}

\noindent
As a direct consequence of Lemma~\ref{lem:MaxClSpan3V}, there exists an at most cubic number of max cliques in a chordal claw-free graph.

The following observations contain properties that help us to define the transduction $\Theta$.
\begin{observation}\label{obs:DiffCharactMaxCl}
	Let $G=(V,E)$ be a chordal claw-free graph. Let $\tup{v}=(v_1,v_2,v_3)\in V^3\!$.
	Then $\tup{v}$ is a spanning triple of $G$ if, and only if,
	$\tilde{v}$ is a clique and
	$\{w_1,w_2\}\in E$ for all vertices ${w_1,w_2\in N(v_1)\cap N(v_2)\cap N(v_3)}$ with $w_1\not=w_2$.
\end{observation}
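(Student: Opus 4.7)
The plan is to prove both directions of the equivalence separately; notably, neither direction appears to require chordality or claw-freeness, so the argument should be purely clique-theoretic and elementary.

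For the forward direction, assume $\tup{v}$ spans the max clique $A$, i.e.\ $\CM_{v_1}\cap \CM_{v_2}\cap \CM_{v_3}=\{A\}$. Then $\tilde{v}\subseteq A$ gives that $\tilde{v}$ is a clique. For any two distinct common neighbors $w_1,w_2\in N(v_1)\cap N(v_2)\cap N(v_3)$, each set $\tilde{v}\cup\{w_i\}$ is a clique (since $w_i$ is adjacent to all $v_j$) and hence extends to some max clique in $\CM_{v_1}\cap \CM_{v_2}\cap \CM_{v_3}$, which by the spanning hypothesis must equal $A$. Therefore $w_1,w_2\in A$, and so $\{w_1,w_2\}\in E$.

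For the backward direction, assume $\tilde{v}$ is a clique and that any two distinct common neighbors of $v_1,v_2,v_3$ are adjacent. Since $\tilde{v}$ is a clique, some max clique $A$ contains $\tilde{v}$; the task is to show that $A$ is unique. Suppose for contradiction that a second max clique $A'\neq A$ also contains $\tilde{v}$. By maximality neither of $A,A'$ is contained in the other, so $A\setminus A'$ and $A'\setminus A$ are both non-empty and $A\cup A'$ strictly contains both, so $A\cup A'$ cannot itself be a clique. Since $A$ and $A'$ are individually cliques, every non-edge in $A\cup A'$ must lie between $A\setminus A'$ and $A'\setminus A$, so there exist $u\in A\setminus A'$ and $u'\in A'\setminus A$ with $\{u,u'\}\notin E$. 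Because $v_1,v_2,v_3\in A\cap A'$ while $u,u'$ sit in the two symmetric differences, both $u$ and $u'$ are distinct from each $v_i$ and lie in $N(v_1)\cap N(v_2)\cap N(v_3)$, contradicting the adjacency hypothesis.

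The argument is essentially bookkeeping; no substantive obstacle arises. The only two points needing care are that $A\cup A'$ strictly extends $A$ (which forces it to fail being a clique, giving the non-edge) and that $u,u'$ are genuine distinct common neighbors of the $v_i$'s (which follows immediately from $v_i\in A\cap A'$ and $u,u'$ lying in the symmetric differences).
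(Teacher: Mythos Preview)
Your proof is correct and follows essentially the same approach as the paper's. The only cosmetic difference is that in the forward direction you argue directly that each $w_i$ lands in the unique max clique $A$ (hence $w_1,w_2$ are adjacent), whereas the paper argues by contradiction, assuming $\{w_1,w_2\}\notin E$ and exhibiting two distinct max cliques $C_1\neq C_2$ containing $\tilde{v}$; the backward direction is identical in substance, and your remark that neither chordality nor claw-freeness is used is accurate.
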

\begin{proof}
Let $G=(V,E)$ be a chordal claw-free graph, and let $\tup{v}=(v_1,v_2,v_3)\in V^3$.
 First, suppose that $\tup{v}$ is a spanning triple.   Then $\tilde{v}$ is a clique.
	Let us assume there exist vertices $w_1,w_2\in N(v_1)\cap N(v_2)\cap N(v_3)$ with $w_1\not=w_2$ such that there is no edge between
	$w_1$ and $w_2$.
	Then $\tilde{v}\cup\{w_1\}$ and $\tilde{v}\cup\{w_2\}$ are cliques but $\tilde{v}\cup\{w_1,w_2\}$ is not a clique.
	Thus, $\tilde{v}\cup\{w_1\}$ is a subset of a max clique $C_1$ with $w_2\not\in C_1$, and
	$\tilde{v}\cup\{w_2\}$ is a subset of a max clique $C_2$ with $w_1\not\in C_2$.
	Consequently, vertices $v_1,v_2,v_3$ are contained in more than one max clique, and therefore, $\tup{v}$ is no spanning triple, a contradiction.

	Next, let us suppose that $\tilde{v}$ is a clique and that
	$\{w_1,w_2\}\in E$ for all vertices $w_1,w_2\in N(v_1)\cap N(v_2)\cap N(v_3)$ with $w_1\not=w_2$.
	Assume that $v_1,v_2,v_3$ are contained in two max cliques $A$ and $B$.
	As $A$ cannot be a subset of $B$, there exists a vertex $w_1\in A\setminus B$.
	Now, $B\cup \{w_1\}$ cannot be a clique.
	Thus, there must exist a vertex $w_2\in B$ that is not adjacent to $w_1$.
	Since $w_1$ is adjacent to all vertices in $A\setminus \{w_1\}$, we have $w_2\in B\setminus A$.
	Consequently, $w_1$ and $w_2$ are vertices in $N(v_1)\cap N(v_2)\cap N(v_3)$ with $w_1\not= w_2$ that are not adjacent, a~contradiction.
\end{proof}

\noindent
From the characterization of spanning triples in Observation~\ref{obs:DiffCharactMaxCl},
it follows that there exists an $\FO$-formula $\theta_U(\tup{y})$
that is satisfied by a chordal claw-free graph $G=(V,E)$ and a triple $\tup{v}\in V^3$ if and only if
$\tup{v}$ is a spanning triple of $G$.

\begin{observation}\label{obs:2}
	Let $G=(V,E)$ be a chordal claw-free graph. Let $A$ be a max clique of $G$, and let the triple $\tup{v}=(v_1,v_2,v_3)\in V^3$ span~$A$.
	Then $w\in A$ if, and only if, $w\in\tilde{v}$ or $\{w,v_j\}\in E$ for all $j\in[3]$.
\end{observation}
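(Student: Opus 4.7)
The plan is to prove both directions directly from the definitions, using the characterization of a spanning triple.

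For the ``only if'' direction, I would argue as follows. Suppose $w\in A$. If $w\in\tilde{v}$, there is nothing to show. Otherwise $w\neq v_j$ for each $j\in[3]$; since $A$ is a clique and $w,v_j\in A$, we have $\{w,v_j\}\in E$ for all $j\in[3]$. This part is essentially a tautological consequence of $A$ being a clique containing all of $v_1,v_2,v_3,w$.

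For the ``if'' direction, this is where the spanning property of $\tup{v}$ does the real work. Assume $w\not\in\tilde{v}$ and $\{w,v_j\}\in E$ for every $j\in[3]$. Since $\tup{v}$ spans $A$, the set $\tilde{v}$ is a subset of $A$ and hence a clique; adjoining $w$ to $\tilde{v}$ still yields a clique because $w$ is adjacent to each $v_j$ and $w\notin\tilde{v}$. Therefore $\tilde{v}\cup\{w\}$ is contained in some max clique $B$. In particular, $v_1,v_2,v_3\in B$, so $B\in \CM_{v_1}\cap \CM_{v_2}\cap\CM_{v_3}$. By the defining property of a spanning triple, $\CM_{v_1}\cap \CM_{v_2}\cap\CM_{v_3}=\{A\}$, hence $B=A$ and thus $w\in A$.

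There is no substantial obstacle here: the argument is a direct unpacking of the definition of ``spans'' together with the fact that a clique that extends $\tilde{v}$ must lie inside the unique max clique containing $v_1,v_2,v_3$. The only subtlety worth mentioning is that the case $w\in\tilde{v}$ must be handled separately in the ``if'' direction so that one legitimately enlarges $\tilde{v}$ to a strictly larger clique $\tilde{v}\cup\{w\}$; after that split, neither chordality nor claw-freeness of $G$ is needed for this particular observation, since the work has already been done by Observation~\ref{obs:DiffCharactMaxCl} and Lemma~\ref{lem:MaxClSpan3V} in establishing that spanning triples exist and characterize their max cliques uniquely.
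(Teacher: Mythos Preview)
Your proof is correct and follows essentially the same route as the paper's: both directions are handled exactly as you do, with the ``if'' direction extending the clique $\{v_1,v_2,v_3,w\}$ to a max clique and invoking the uniqueness built into the definition of a spanning triple. The only minor remark is that your explanation for why the case $w\in\tilde{v}$ must be split off is slightly misstated---the issue is not that one needs a \emph{strictly} larger clique (the argument would go through even if $\tilde{v}\cup\{w\}=\tilde{v}$), but simply that the second disjunct $\{w,v_j\}\in E$ for all $j$ can fail when $w\in\tilde{v}$, so the disjunction in the statement is genuinely needed.
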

\begin{proof}
 Let $A$ be a max clique of a chordal claw-free graph $G=(V\hspace{-1pt},E)$, and let $\tup{v}=(v_1,v_2,v_3) \in V^3$ span~$A$.
 Clearly, if $w\in A$, then $w\in\tilde{v}$ or $\{w,v_j\}\in E$ for all $j\in[3]$.
 Further, $w\in A$ if $w\in\tilde{v}$.
 Thus, we only need to show that $w\in A$ if $\{w,v_j\}\in E$ for all $j\in[3]$.
 Suppose $\{w,v_j\}\in E$ for all $j\in[3]$. Then $\{v_1,v_2,v_3,w\}$ is a clique.
 Let $B$ be a max clique with $\{v_1,v_2,v_3,w\}\subseteq B$.
 Since $A$ is the only max clique that contains $v_1,v_2,v_3$, we have $B=A$.
Hence, $w\in A$.
\end{proof}

\noindent
Observation~\ref{obs:2} yields that there further exists an $\FO$-formula $\varphi_{\phimax}(\tup{y},z)$
that is satisfied for $\tup{v}\in V^3$ and $w\in V$ in a chordal claw-free graph $G=(V,E)$
if, and only if, $\tup{v}$ spans a max clique~$A$ and $w\in A$.
We can use this formula to obtain an $\FO$-formula $\theta_\approx(\tup{y},\tup{y}')$
such that for all chordal claw-free graphs $G=(V,E)$ and all triples $\tup{v},\tup{v}'\in V^3$
we have $G\models\theta_\approx(\tup{v},\tup{v}')$ if, and only if, $\tup{v}$ and $\tup{v}'$ span the same max clique.

In the following we consider connected chordal claw-free graphs $G$.
The next observation is a consequence of Lemma~\ref{lem:cliqueTreeGleichUnionDerPfade} and Lemma~\ref{lem:path3maxcl}.
\begin{observation}\label{obs:3}
Let $G=(V,E)$ be a connected chordal claw-free graph, and ${T_G=(\CM,\CE)}$ be the clique tree of~$G$. Let $A,B\in\CM$.
	Max cliques $A$ and $B$ are adjacent in $T_G$ if, and only if,
	there exists a vertex $v\in V$ such that $v\in A\cap B$ and
	for all $C\in \CM$ with $v\in C$ we have $C\not\subseteq A\cup B$.
\end{observation}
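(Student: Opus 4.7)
The plan is to deduce the observation from the two structural facts about clique trees established earlier: the decomposition $T_G=\bigcup_{v\in V}T_G[\CM_v]$ of Lemma~\ref{lem:cliqueTreeGleichUnionDerPfade}, together with the containment characterization of ``lies between'' on the paths $T_G[\CM_v]$ given by Lemma~\ref{lem:path3maxcl}. The intended reading of the condition is that the only max cliques $C\in\CM$ with $v\in C$ and $C\subseteq A\cup B$ are $A$ and $B$ themselves (otherwise the condition is trivially false, since $A\subseteq A\cup B$).

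For the forward direction, suppose $A$ and $B$ are adjacent in $T_G$. By Lemma~\ref{lem:cliqueTreeGleichUnionDerPfade}, the edge $\{A,B\}$ lies in $T_G[\CM_v]$ for some vertex $v\in V$, so $v\in A\cap B$ and $A,B$ are adjacent in the path $T_G[\CM_v]$ (which is a path by Lemma~\ref{lem:maxclpath}). Now consider any max clique $C\in\CM_v$ with $C\neq A,B$. Since $T_G[\CM_v]$ is a path on which $A$ and $B$ are adjacent, $C$ cannot lie strictly between $A$ and $B$; so by Lemma~\ref{lem:path3maxcl}, $C\not\subseteq A\cup B$. Max cliques $C\in\CM\setminus\CM_v$ do not satisfy $v\in C$, so the condition is trivially satisfied for them.

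For the reverse direction, let $v\in A\cap B$ be a vertex as in the statement. Then $A,B\in\CM_v$, so both are vertices of the path $P:=T_G[\CM_v]$. Assume towards a contradiction that $A$ and $B$ are not adjacent in $T_G$. Then they are in particular not adjacent in $P$, so there exists a max clique $C$ on $P$ that lies strictly between $A$ and $B$; in particular $C\neq A,B$ and $C\in\CM_v$. By Lemma~\ref{lem:path3maxcl} this yields $C\subseteq A\cup B$, which contradicts the hypothesis. Hence $A$ and $B$ are adjacent in $P$, and since $P$ is a subgraph of $T_G$, they are adjacent in $T_G$.

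The main technical point is simply to remember that Lemma~\ref{lem:path3maxcl} converts ``sits between $A$ and $B$ on the path $T_G[\CM_v]$'' into the purely set-theoretic condition ``$C\subseteq A\cup B$''; once that equivalence is in hand, both directions are immediate from Lemma~\ref{lem:cliqueTreeGleichUnionDerPfade}. No separate case analysis on the degree of $A$ or $B$ is required.
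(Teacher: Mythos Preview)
Your proof is correct and follows exactly the paper's approach: invoke Lemma~\ref{lem:cliqueTreeGleichUnionDerPfade} to translate adjacency in $T_G$ into adjacency on some path $T_G[\CM_v]$, and then use Lemma~\ref{lem:path3maxcl} to rewrite ``no max clique strictly between $A$ and $B$ on this path'' as the containment condition $C\not\subseteq A\cup B$. You also correctly flag that the quantification over $C$ must be read as ranging over $C\in\CM_v\setminus\{A,B\}$; the paper's own proof has the same implicit restriction (since Lemma~\ref{lem:path3maxcl} is stated for three \emph{distinct} max cliques).
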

\begin{proof}
 Let $T_G=(\CM,\CE)$ be the clique tree of a connected chordal claw-free graph $G=(V,E)$.
 Let $A,B\in\CM$.
By Lemma~\ref{lem:cliqueTreeGleichUnionDerPfade} there is an edge between two max cliques $A,B\in\CM$
in $T_G$ if, and only if, there exists a vertex $v\in V$ such that
$A,B\in \CM_v$ and there is an edge between $A$ and $B$ on the path $T[\CM_v]$.
Further, it follows from Lemma~\ref{lem:path3maxcl} that max cliques $A,B\in \CM_v$ are adjacent precisely if
there does not exist a max clique $C\in \CM_v$ with $C\subseteq A\cup B$.
\end{proof}

\noindent
It follows from Observation~\ref{obs:3} that there exists an $\FO$-formula $\theta_E(\tup{y},\tup{y}')$
that is satisfied for triples $\tup{v},\tup{v}'\in V^3$ in a connected chordal claw-free graph $G=(V,E)$
if, and only if, $\tup{v}$ and $\tup{v}'$ span adjacent max cliques.

It is not hard to see that  $\Theta=(\theta_{U},\theta_{\approx},\theta_E)$ is an $\FO$-trans\-duc\-tion
that defines for each connected chordal claw-free graph $G$ a tree isomorphic to the clique tree of $G$.

\begin{lemma}
 There exists an $\FO$-trans\-duc\-tion  $\Theta$
 such that $\Theta[G]\cong T_G$ for all $\hspace{0.5pt}G\hspace{-0.5pt}\in\hspace{-0.5pt} \CHCLcon$.
\end{lemma}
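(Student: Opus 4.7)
The plan is essentially assembly: Observations~\ref{obs:DiffCharactMaxCl}, \ref{obs:2}, and~\ref{obs:3} have already produced the three \FO-formulas $\theta_U(\tup{y})$, $\theta_\approx(\tup{y},\tup{y}')$, and $\theta_E(\tup{y},\tup{y}')$ that the transduction $\Theta=(\theta_U,\theta_\approx,\theta_E)$ must package together. Here $\tup{y},\tup{y}'$ are triples of structure variables, so that the universe of $\Theta[G]$ consists of $\approx$-equivalence classes of vertex triples. All that remains is to verify that $\Theta$ is well-defined on every $G\in\CHCLcon$ and that the resulting structure is isomorphic to $T_G$.

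For well-definedness I would first argue $G\in \Dom(\Theta)$. Every max clique of $G$ is spanned by a triple by Lemma~\ref{lem:MaxClSpan3V}, and $G$ has at least one max clique, so $\theta_U[G;\tup{y}]$ is non-empty. The relation $\theta_\approx[G;\tup{y},\tup{y}']$ expresses that $\tup{y}$ and $\tup{y}'$ span the same max clique; restricted to triples satisfying $\theta_U$ this is visibly reflexive, symmetric, and transitive, and $\theta_\approx$ can be set to equality outside the span of $\theta_U$, so it is an equivalence relation on $V^3$. Hence $\Theta[G]$ is a well-defined $\{E\}$-structure.

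To exhibit the isomorphism, define $f\colon U(\Theta[G])\to\CM$ by $f(\tup{v}\modout_\approx) \isdef A$ where $A$ is the unique max clique spanned by any representative of $\tup{v}\modout_\approx$. This is well-defined and injective by the definition of $\theta_\approx$, and surjective by Lemma~\ref{lem:MaxClSpan3V}. For edge preservation, $\theta_E$ was constructed (via Observation~\ref{obs:3}) so that it holds on $(\tup{v},\tup{v}')$ precisely when the spanned max cliques are adjacent in $T_G$; in particular $\theta_E$ is $\approx$-invariant in each coordinate, so the induced edge relation on $U(\Theta[G])$ agrees under $f$ with the edge relation of $T_G$.

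I do not expect a genuine obstacle here. The substantive content has already been absorbed by Lemma~\ref{lem:MaxClSpan3V} (max cliques are spanned by triples, yielding a finite, $\FO$-definable representation of $\CM$) and by the structural lemmas of Section~\ref{sec:cliquetree} that underlie Observation~\ref{obs:3} (so that adjacency in $T_G$ has the claimed local characterization). The only bookkeeping to be careful about is checking that the three formulas are indeed $\approx$-invariant where required and that $\theta_U$, $\theta_\approx$, $\theta_E$ are genuinely first-order---both are immediate from the explicit descriptions given in the observations.
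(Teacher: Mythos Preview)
Your proposal is correct and follows exactly the approach the paper takes: the paper assembles $\Theta=(\theta_U,\theta_\approx,\theta_E)$ from the formulas already produced via Observations~\ref{obs:DiffCharactMaxCl}, \ref{obs:2}, and~\ref{obs:3} (together with Lemma~\ref{lem:MaxClSpan3V}), and dismisses the verification with ``It is not hard to see.'' Your write-up actually supplies more of the routine checking (non-emptiness of $\theta_U$, equivalence-relation property of $\theta_\approx$, the explicit bijection $f$) than the paper does.
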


\subsection{The Directed Clique Tree and its Definition in STC}\label{sec:directedCTdefintion}

Now we transfer the clique tree into a directed tree and show that this directed clique tree can be defined in $\STC$.

Let $R$ be a leaf of the clique tree $T_G$.
We transform $T_G$ into a directed tree by rooting $T_G$ at  max clique $R$.
We denote the resulting  directed clique tree by $T^R_G=(\CM,\CE_R)$.
Since $R$ is a leaf of $T_G$, the following corollary is an immediate consequence of Lemma~\ref{lem:deg3starforkclique}.
\begin{corollary}\label{cor:mind2KinderForkStar}
Let $A$ be a max  clique of a connected chordal claw-free graph $G$.
If $A$ is a vertex with at least two children in $T^R_G$, then  $A$ is a star clique or a fork clique.
\end{corollary}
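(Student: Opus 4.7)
The plan is straightforward: translate the hypothesis about children in the rooted tree $T^R_G$ into a statement about the undirected degree in $T_G$, and then invoke Lemma~\ref{lem:deg3starforkclique}.

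First I would observe that for any vertex $A$ of the rooted directed tree $T^R_G$, the in-degree is $0$ if $A = R$ and exactly $1$ otherwise, while the out-degree equals the number of children. Hence the degree of $A$ in the underlying undirected tree $T_G$ is the number of children of $A$ in $T^R_G$ if $A = R$, and one more than that number otherwise.

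Next I would rule out the case $A = R$. If $R$ had at least two children in $T^R_G$, then $R$ would have degree at least $2$ in $T_G$, contradicting the hypothesis that $R$ is a leaf of $T_G$. So the assumption that $A$ has at least two children forces $A \neq R$, and consequently the degree of $A$ in $T_G$ is at least $2 + 1 = 3$.

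Finally, since $A$ has degree at least $3$ in $T_G$, Lemma~\ref{lem:deg3starforkclique} applies and yields that $A$ is a star clique or a fork clique. There is no real obstacle here; the corollary is essentially a direct unpacking of the rooting construction combined with the already-established lemma.
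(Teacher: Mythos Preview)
Your proof is correct and follows exactly the approach the paper intends: the paper simply remarks that since $R$ is a leaf of $T_G$, the corollary is an immediate consequence of Lemma~\ref{lem:deg3starforkclique}, and your argument spells out precisely this reduction.
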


\noindent
In the following we show that there exists a parameterized $\STC$-trans\-duction
$\Theta'(\tup{x})$
which defines an isomorphic copy of $T_G^R$ for each connected chordal claw-free graph $G$
and triple $\tup{r}\in V^3$ that spans a leaf $R$ of $T_G$.

Clearly, we can define an $\FO$-formula $\theta'_{\text{dom}}(\tup{x})$ such that for all connected chordal claw-free graphs~$G$ and $\tup{r}\in V^3$ we have
$G\models \theta'_{\text{dom}}(\tup{r})$ if, and only if,
$\tup{r}\in V^3$ spans a leaf of $T_G$.
Then $\theta'_{\text{dom}}$ defines the triples of parameters of transduction $\Theta'(\tup{x})$.
Further, we let $\theta'_{U}(\tup{x},\tup{y}):=\theta_{U}(\tup{y})$ and $\theta'_{\approx}(\tup{x},\tup{y},\tup{y}'):=\theta_{\approx}(\tup{y},\tup{y}')$.
Finally, we let $\theta'_E(\tup{x},\tup{y},\tup{y}')$ be satisfied for
triples $\tup{r},\tup{v},\tup{v}'\in V^3$ in a connected chordal claw-free graph $G=(V,E)$
if, and only if, $\tup{r}$, $\tup{v}$ and $\tup{v}'$ span max cliques $R$, $A$ and $A'$, respectively,
and $(A,A')$ is an edge in $T^R_G$.
Note that $(A,A')$ is an edge in $T^R_G$ precisely if $\{A,A'\}$ is an edge in $T_G$ and there exists a path between $R$ and $A$ in $T_G$ after removing $A'$.
Thus, formula $\theta'_E$ can be constructed in $\STC$.
We let $\Theta'(\tup{x}):=(\theta'_{\text{dom}}(\tup{x}),\theta'_{U}(\tup{x},\tup{y}),\theta'_{\approx}(\tup{x},\tup{y},\tup{y}'),
\theta'_E(\tup{x},\tup{y},\tup{y}'))$, and conclude:

\begin{lemma}\label{lem:stcctranssupplem}
 There exists a parameterized $\STC$-trans\-duc\-tion  $\Theta'(\tup{x})$
 such that $\Dom(\Theta'(\tup{x}))$ is the set of all pairs $(G,\tup{r})$ where $G=(V,E)\in \CHCLcon$ and
 $\tup{r}\in V^3$ spans a leaf $R$ of $T_G$, and
 $\Theta'[G,\tup{r}]\cong T_G^R$ for all $(G,\tup{r})\in \Dom(\Theta'(\tup{x}))$ where $\tup{r}$ spans the max clique $R$ of $G$.
\end{lemma}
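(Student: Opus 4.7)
The plan is to instantiate the tuple $\Theta'(\tup{x})=(\theta'_{\dom},\theta'_U,\theta'_\approx,\theta'_E)$ by reusing the $\FO$-formulas $\theta_U$ and $\theta_\approx$ from the previous $\FO$-transduction~$\Theta$ and adding one new $\FO$-formula controlling the parameter together with one new $\STC$-formula for the directed edge relation. Concretely, I would set $\theta'_U(\tup{x},\tup{y}):=\theta_U(\tup{y})$ and $\theta'_\approx(\tup{x},\tup{y},\tup{y}'):=\theta_\approx(\tup{y},\tup{y}')$, so that on input $(G,\tup{r})$ the universe and equivalence relation of $\Theta'[G,\tup{r}]$ coincide with those of $\Theta[G]$, namely the $\approx$-classes of spanning triples, which represent the set $\CM$ of max cliques of $G$.

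For $\theta'_{\dom}(\tup{x})$ I would express in $\FO$ that $\tup{r}$ is a spanning triple (that is, $\theta_U(\tup{r})$) and that the max clique $R$ it spans has at most one $T_G$-neighbour; the latter amounts to saying that any two spanning triples $\tup{y},\tup{y}'$ with $\theta_E(\tup{r},\tup{y})$ and $\theta_E(\tup{r},\tup{y}')$ satisfy $\theta_\approx(\tup{y},\tup{y}')$, which is a single first-order statement over triples of vertices.

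The main step is constructing $\theta'_E(\tup{x},\tup{y},\tup{y}')$ in $\STC$. Writing $R,A,A'$ for the max cliques spanned by $\tup{r},\tup{v},\tup{v}'$, the pair $(A,A')$ is a directed edge of $T_G^R$ iff $\{A,A'\}$ is an undirected edge of $T_G$ and $A$ still lies in the connected component of $R$ after deleting $A'$ from $T_G$. The first condition is already captured by $\theta_E(\tup{y},\tup{y}')$. For the second I would use the $\stc$-operator
\[
  \stcx{\tup{z}}{\tup{z}'}{\bigl(\theta_U(\tup{z})\wedge\theta_U(\tup{z}')\wedge\theta_E(\tup{z},\tup{z}')\wedge\neg\theta_\approx(\tup{z},\tup{y}')\wedge\neg\theta_\approx(\tup{z}',\tup{y}')\bigr)}(\tup{x},\tup{y}),
\]
which asserts the existence of a path in $T_G$ from the $\approx$-class of $\tup{r}$ to the $\approx$-class of $\tup{y}$ avoiding the $\approx$-class of $\tup{y}'$. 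Conjoining this with $\theta_U(\tup{y})$, $\theta_U(\tup{y}')$ and $\theta_E(\tup{y},\tup{y}')$ yields $\theta'_E$.

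The only real obstacle is verifying that the $\stc$-operator, which by definition ranges over raw triples rather than over $\approx$-classes, computes the intended reachability in the quotient tree $T_G$. This works because $\theta_E$ and $\theta_\approx$ are the edge and equivalence relations of the $\FO$-transduction $\Theta$ and are therefore $\approx$-invariant by construction, so the symmetric transitive closure on triples projects correctly onto reachability in $T_G$. The verification that $\Dom(\Theta'(\tup{x}))$ is as claimed and that $\Theta'[G,\tup{r}]\cong T_G^R$ is then routine, using Lemma~\ref{lem:cliqueTreeGleichUnionDerPfade} together with the fact that $R$ is a leaf so that orienting each edge of $T_G$ away from $R$ produces exactly the rooted directed tree $T_G^R$.
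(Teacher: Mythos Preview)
Your proposal is correct and follows essentially the same approach as the paper: reuse $\theta_U$ and $\theta_\approx$ unchanged, define $\theta'_{\dom}$ in $\FO$ to express that $\tup{r}$ spans a leaf, and build $\theta'_E$ by conjoining the undirected edge formula $\theta_E$ with an $\stc$-reachability check from $\tup{r}$ to $\tup{y}$ in $T_G$ with the class of $\tup{y}'$ removed. One small technical point to watch in your explicit formula: when $A=R$ (so $\tup{r}$ and $\tup{v}$ span the same clique) and $A'$ is the unique neighbour of the leaf $R$, the symmetric transitive closure over triples may fail to relate $\tup{r}$ to $\tup{v}$ unless you also disjoin with $\theta_\approx(\tup{x},\tup{y})$ to cover the length-$0$ path; with that adjustment the construction matches the paper exactly.
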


\subsection{The Supplemented Clique Tree and its Definition in STC+C}

We now equip each max clique of the directed clique tree $T^R_G$ with structural information.
We do this by coloring the directed clique tree $T^R_G$ with an $\LO$-coloring.
An $\LO$-color is a binary relation on a linearly ordered set of basic color elements.
Into each $\LO$-color, we encode three numbers.
Isomorphisms of $\LO$-colored directed trees preserve the information that is encoded in the $\LO$-colors.
Thus, an $\LO$-colored directed tree and its canon contain the same  numbers encoded in their $\LO$-colors.
We call this $\LO$-colored directed clique tree a supplemented clique tree.
More precisely, let $G\in\CHCLcon$ and let $R$ be a leaf of the clique tree $T_G$ of $G$,
then the \emph{supplemented clique tree $S^R_G$} is
the 5-tuple $(\CM,\CE_R,[0,|V|],\leq_{[0,|V|]},L)$ where

\begin{itemize}
	\item $(\CM,\CE_R)$ is the directed clique tree $T^R_G$ of $G$,
	\item $\leq_{[0,|V|]}$ is the natural linear order on the set of basic color elements $[0,|V|]$,
	\item $L\subseteq \CM\times {[0,|V|]}^2$ is the ternary color relation where
	\begin{itemize}
		\item $(A,0,n)\in L$ iff $n$ is the number of vertices in $A$ that are not in any child of $A$ in $T^R_G$,
		\item $(A,1,n)\in L$ iff $n$ is the number of vertices that are contained in $A$ and in the parent of $A$ in $T^R_G$ if $A\not=R$,
		and $n=0$ if $A=R$,
		\item $(A,2,n)\in L$ iff $n$ is the number of vertices in $A$
		that are in two children  of $A$ in $T^R_G$.\footnote{
		  Let $A$ be a max clique and $n$ be the number of vertices in $A$ that are in two children of $A$ in $T^R_G$.
		  Notice that according to Corollary~\ref{cor:mind2KinderForkStar},
		  $A$ is a fork clique if and only if $n>0$.
		}
	\end{itemize}
\end{itemize}
\noindent
In its structural representation the supplemented clique tree $S^R_G$
corresponds to the $6$-tuple ${(\CM\dcup\, [0,|V|],\CM,\CE_R,[0,|V|],\leq_{[0,|V|]},L)}$.
\begin{example}
 Figure~\ref{fig:SupplClTree} shows a supplemented clique tree, that is, a directed clique tree with its $\LO$-coloring.
\end{example}

\begin{figure}[htbp]
\centering
\clipbox{-0em -1em 47em -0.4em}{%
    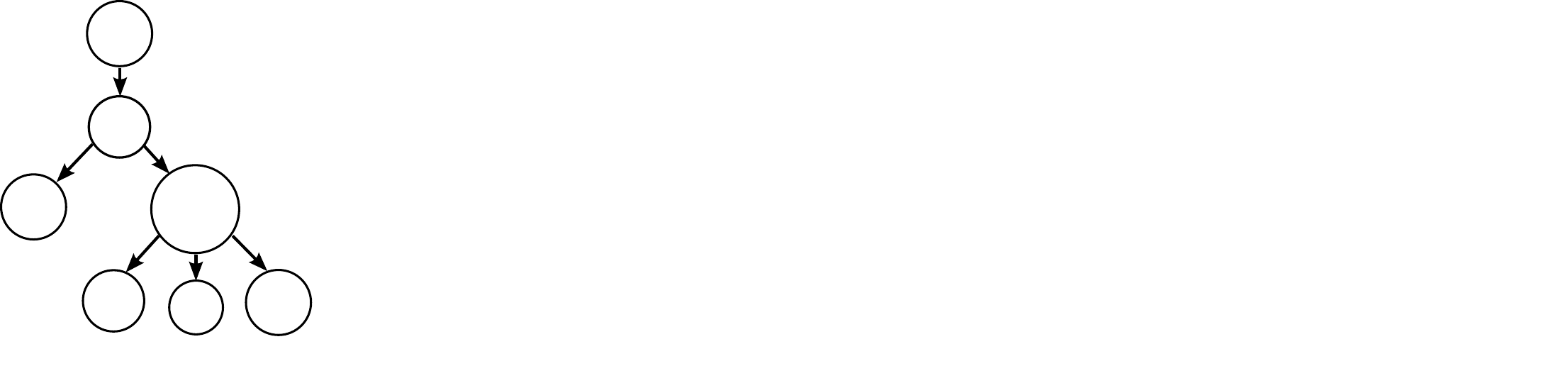
}
\caption{A supplemented clique tree $S^R_G$}%
\label{fig:SupplClTree}
\end{figure}

\noindent
The properties encoded in the colors of the max cliques are expressible in $\STCC$.
Therefore, we can extend the parameterized $\STC$-trans\-duc\-tion $\Theta'(\tup{x})$ to a
parameterized $\STCC$-trans\-duc\-tion $\Theta''(\tup{x})$
that defines an $\LO$-colored digraph isomorphic to $S^R_G$ for every connected chordal claw-free graph $G$
and triple $\tup{r}\in V^3$ that spans a leaf $R$ of $T_G$.
\begin{lemma}\label{lem:transSGR}
 There is a parameterized $\STCC$-trans\-duc\-tion $\Theta''(\tup{x})$ such that
 $\Dom(\Theta''(\tup{x}))$ is the set of all pairs $(G,\tup{r})$ where $G=(V,E)\in \CHCLcon$ and
 $\tup{r}\in V^3$ spans a leaf $R$ of $T_G$, and
 ${\Theta''[G,\tup{r}]\cong S_G^R}$  for all
 $(G,\tup{r})\in \Dom(\Theta''(\tup{x}))$ where $\tup{r}$ spans the max clique $R$ of $G$.
\end{lemma}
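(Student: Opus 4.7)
The plan is to extend the parameterized $\STC$-trans\-duc\-tion $\Theta'(\tup{x})$ from Lemma~\ref{lem:stcctranssupplem} to a parameterized $\STCC$-counting transduction $\Theta^{\raute}(\tup{x})$ in the sense of Definition~\ref{def:CountingTransduction}, exploiting the fact that a counting transduction automatically incorporates the number sort $N(G)=[0,|V|]$ into its output universe. This takes care of the set $M$ of basic color elements for free, and Proposition~\ref{thm:CountingTransduction} will then yield an equivalent parameterized $\STCC$-trans\-duc\-tion $\Theta''(\tup{x})$ with the required properties. From $\Theta'(\tup{x})$ I would inherit $\theta^{\raute}_{\dom}$, the universe formula $\theta^{\raute}_V$ on triple variables, the equivalence $\theta^{\raute}_{\approx}$, and the edge formula $\theta^{\raute}_E$. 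For the unary predicate $M$ on a single number variable $p$ and for the order $\trianglelefteq$ on pairs of number variables I would simply take $\top$ and $p \leq q$, respectively.

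The interesting part is the ternary color relation $L \subseteq \CM \times [0,|V|]^2$. Writing $\tup{y}$ for a triple of structure variables and $p,q$ for single number variables, I would set
\[
\theta^{\raute}_L(\tup{x},\tup{y},p,q)\ \isdef\ \bigvee_{i\in\{0,1,2\}}\Bigl(p = i\ \wedge\ \psi_i(\tup{x},\tup{y},q)\Bigr),
\]
where $\psi_i(\tup{x},\tup{y},q)$ asserts that $q$ is the $i$-th color coordinate of the max clique spanned by $\tup{y}$ in $S^R_G$. Each $\psi_i$ has the shape $\# z\,\chi_i(\tup{x},\tup{y},z) = q$ for a suitable $\STC$-formula $\chi_i$ describing a distinguished subset of that max clique: $\chi_0$ says that $\varphi_{\phimax}(\tup{y},z)$ holds and no triple $\tup{y}'$ with $\theta'_E(\tup{x},\tup{y},\tup{y}')$ satisfies $\varphi_{\phimax}(\tup{y}',z)$, while $\chi_2$ says that $\varphi_{\phimax}(\tup{y},z)$ holds and there exist $\tup{y}'_1,\tup{y}'_2$ with $\theta'_E(\tup{x},\tup{y},\tup{y}'_j)$, $\neg\theta'_\approx(\tup{y}'_1,\tup{y}'_2)$, and $\varphi_{\phimax}(\tup{y}'_j,z)$ for both $j\in\{1,2\}$. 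For the parent component I would take $\psi_1$ to be a disjunction: either $\tup{y}$ has no parent under $\theta'_E$ and $q=0$, or $\tup{y}$ has a parent $\tup{y}'$ and $q = \# z\,(\varphi_{\phimax}(\tup{y},z) \wedge \varphi_{\phimax}(\tup{y}',z))$.

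The main obstacle I expect is the bookkeeping around the quotient: since spanning triples represent max cliques only up to $\approx$, the counting formulas must count vertices $z \in V$ (which are not quotiented) in a way that is invariant under the choice of representatives $\tup{y}$ and $\tup{y}'$. This is automatic once one observes that $\varphi_{\phimax}$ and $\theta'_E$ are themselves $\approx$-invariant by construction, so each $\chi_i$ picks out exactly the intended vertex set regardless of which representative is chosen. Correctness of the three color components then matches the definitions of $L$ in $S^R_G$ directly, and applying Proposition~\ref{thm:CountingTransduction} converts $\Theta^{\raute}(\tup{x})$ into the desired parameterized $\STCC$-trans\-duc\-tion $\Theta''(\tup{x})$.
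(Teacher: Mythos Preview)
Your proposal is correct and matches the paper's proof essentially line for line: build a parameterized $\STCC$-counting transduction from $\Theta'(\tup{x})$ by reusing $\theta'_{\dom},\theta'_U,\theta'_\approx,\theta'_E$, taking $\top$ and $p\leq p'$ for $M$ and $\trianglelefteq$, defining $\theta^{\raute}_L$ as a disjunction over the three color coordinates via counting formulas built from $\varphi_{\phimax}$ and $\theta'_E$, and then invoking Proposition~\ref{thm:CountingTransduction}. One small terminological slip: what you call ``the universe formula $\theta^{\raute}_V$'' is really two things---the universe formula $\theta^{\raute}_U$ of the counting transduction and the relation formula $\theta^{\raute}_V$ for the unary predicate $V$---both of which are set to $\theta'_U$.
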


\begin{proof}
We let $\Theta'(\tup{x}):=(\theta'_{\text{dom}}(\tup{x}),\theta'_{U}(\tup{x},\tup{y}),\theta'_{\approx}(\tup{x},\tup{y},\tup{y}'),
\theta'_E(\tup{x},\tup{y},\tup{y}'))$
be the parameterized $\STC$-trans\-duc\-tion from Lemma~\ref{lem:stcctranssupplem}.\
Then the domain $\Dom(\Theta'(\tup{x}))$ of $\Theta'(\tup{x})$
is the set of all pairs $(G,\tup{r})$ where $G=(V,E)\in \CHCLcon$ and
 $\tup{r}\in V^3$ spans a leaf $R$ of $T_G$, and we have
 $\Theta'[G,\tup{r}]\cong T_G^R$ for all $(G,\tup{r})\in \Dom(\Theta'(\tup{x}))$ where $\tup{r}$ spans the max clique $R$.

We can define a parameterized $\STCC$-counting transduction $\Theta^{\raute}(\bar{x})$ as follows:\\We let
 \begin{align*} \Theta^{\raute}(\bar{x}):= \Bigl(
	\theta^{\raute}_{\dom}(\tup{x}),
        \theta^{\raute}_U(\tup{x},\tup{y}),
        \theta^{\raute}_\approx(\tup{x},\tup{y},\tup{y}'),&
         \theta^{\raute}_V(\tup{x},\tup{y}),
        \theta^{\raute}_E(\tup{x},\tup{y},\tup{y}'),\\
        &\theta^{\raute}_M(\tup{x},p),
        \theta^{\raute}_\trianglelefteq(\tup{x},p,p'),
        \theta^{\raute}_L(\tup{x},\tup{u},p,p'),
      \Bigr),
 \end{align*}
      where
      \begin{align*}
       \theta^{\raute}_{\dom}(\tup{x})  &:=\theta'_{\dom}(\tup{x}) &
       \theta^{\raute}_V(\tup{x},\tup{y})     &:=  \theta'_U(\tup{x},\tup{y})&
       \theta^{\raute}_M(\tup{x},p)   &:=   \true\\\
       \theta^{\raute}_U(\tup{x},\tup{y})   &:=\theta'_U(\tup{x},\tup{y})  &
       \theta^{\raute}_E(\tup{x},\tup{y},\tup{y}')   &:= \theta'_E(\tup{x},\tup{y},\tup{y}') &
       \theta^{\raute}_\trianglelefteq(\tup{x},p,p')   &:=  p\leq p' \\
       \theta^{\raute}_\approx(\tup{x},\tup{y},\tup{y}')    &:= \theta'_\approx(\tup{x},\tup{y},\tup{y}')   &
       && &
       \end{align*}
       and
       \begin{align*}
        \theta^{\raute}_L(\tup{x},\tup{y},p,p')   &:=  \varphi_0(\tup{x},\tup{y},p,p') \lor \varphi_1(\tup{x},\tup{y},p,p')
        \lor \varphi_2(\tup{x},\tup{y},p,p').
      \end{align*}

\noindent
      We let  $\varphi_0(\tup{x},\tup{y},p,p')$, $\varphi_1(\tup{x},\tup{y},p,p')$ and $\varphi_2(\tup{x},\tup{y},p,p')$ be $\STCC$-formulas
      such that for all $G=(V,E)\in\CHCLcon$, all triples $\tup{r}\in V^3$ that span a leaf $R$ of $T_G$,
      all $\tup{v}\in V^3$ and all $m,n\in N(G)$:
      \begin{itemize}
       \item $G\models \varphi_0[\tup{r},\tup{v},m,n]$ iff $m=0$,
       the triple $\tup{v}$ spans a max clique $A$ of $G$,
       and $n$ is the number of vertices in $A$ that are not in any child of $A$ in $T^R_G$.
       \item $G\models \varphi_1[\tup{r},\tup{v},m,n]$ iff $m=1$,
       the triple $\tup{v}$ spans a max clique $A$ of $G$,
       and $n$ is the number of vertices that are contained in $A$ and in the parent of $A$ in $T^R_G$ if $A\not=R$, and $n=0$ if $A=R$.
       \item $G\models \varphi_2[\tup{r},\tup{v},m,n]$ iff $m=2$,
       the triple $\tup{v}$ spans a max clique $A$ of $G$,
       and $n$ is the number of vertices in $A$ that are in at least two children  of $A$ in $T^R_G$.
      \end{itemize}

\noindent
Let $\varphi_{\phimax}(\tup{y},z)$ be the $\FO$-formula from Section~\ref{sec:CliqueTreeDefinability},
which is satisfied for $\tup{v}\in V^3$ and $w\in V$ in a chordal claw-free graph $G=(V,E)$
if, and only if, $\tup{v}$ spans a max clique~$A$ and $w\in A$.
Then $\varphi_0(\tup{x},\tup{y},p,p')$, for example, can be defined as follows:
\begin{align*}
\varphi_0(\tup{x},\tup{y},p,p') :=\ \ &\forall q\ p\leq q\ \land \ \theta'_U(\tup{x},\tup{y}) \ \land \\
 &\# z \Big(\varphi_{\phimax}(\tup{y},z) \land \forall \tup{y}'\, \big(\theta'_E(\tup{x},\tup{y},\tup{y}') \limplies \lnot \varphi_{\phimax}(\tup{y}'\!,z)\big)\Big) =p'\,.
\end{align*}
It should be clear how to define $\varphi_1(\tup{x},\tup{y},p,p')$ and $\varphi_2(\tup{x},\tup{y},p,p')$.

It is not hard to see that  $\Theta^{\raute}(\tup{x})$
is a parameterized $\STCC$-counting transduction
whose domain $\Dom(\Theta^{\raute}(\tup{x}))$ is the set of all pairs $(G,\tup{r})$ where $G=(V,E)\in \CHCLcon$ and
 $\tup{r}\in V^3$ spans a leaf $R$ of $T_G$, and which satisfies
 ${\Theta^{\raute}[G,\tup{r}]\cong S_G^R}$  for all
 $(G,\tup{r})\in \Dom(\Theta^{\raute}(\tup{x}))$ where $\tup{r}$ spans the max clique $R$ of $G$.
Now Lemma~\ref{lem:transSGR} follows directly from Proposition~\ref{thm:CountingTransduction}.
\end{proof}

\section{Canonization}\label{sec:canonization-claw-summary}

In this section we prove that there exists a parameterized $\LREC_=$-can\-on\-iza\-tion of the class of connected chordal claw-free graphs,
which is the main result of this paper.

\begin{theorem}\label{thm:mainresult}
    The class of chordal claw-free graphs admits $\LREC_=$-definable canonization.
\end{theorem}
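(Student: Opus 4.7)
The plan is to assemble the canonization from the machinery built in Sections~\ref{chp:prelims} and~\ref{sec:CliqueTreeDefinability}.
By Proposition~\ref{prop:canonconncomp} it suffices to produce a parameterized $\LREC_=$-canonization for the class $\CHCLcon$ of connected chordal claw-free graphs.
Given $G\in\CHCLcon$ the clique tree $T_G$ is unique by Corollary~\ref{cor:cliquetreeunique}, and Lemma~\ref{lem:MaxClSpan3V} guarantees that some triple $\tup{r}\in V^3$ spans a leaf of $T_G$; this triple will serve as the parameter.
I would first apply the parameterized $\STCC$-trans\-duc\-tion of Lemma~\ref{lem:transSGR} to obtain an isomorphic copy of the supplemented clique tree $S^R_G$, and then compose it (via Proposition~\ref{prop:composition}) with the $\LREC_=$-canonization of $\LO$-colored directed trees from Theorem~\ref{prop:LRECLOcolored}.
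The result is a parameterized $\LREC_=$-trans\-duc\-tion delivering an ordered canon $C$ of $S^R_G$, on which a canonical linear order of the max-clique nodes is available and every node $A$ still carries its three color counts $n_0(A)$, $n_1(A)$, $n_2(A)$.

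The decisive step is a final $\LREC_=$-trans\-duc\-tion that converts $C$ into an ordered copy of~$G$.
By Lemma~\ref{lem:maxclpath} each vertex $v\in V$ corresponds to the path $T_G[\CM_v]$ and so has a unique deepest max clique $A_v$ with respect to the root.
I would allocate, at every canonical max-clique node $A$ of $C$, a block of $n_0(A)$ fresh \emph{vertex tokens} (for the vertices with $A_v=A$) and, at every fork clique, an additional $n_2(A)$ tokens for its triangle vertices, whose precise positions are pinned down by Lemma~\ref{lem:forkcliquelength3middle}.
Using the linear order on the max cliques together with the number sort of $C$ yields a canonical linear order on all tokens.
It then remains to recover, for every token, the list of max cliques that its original vertex occupies, after which the edges of the canonical graph become $\FO$-definable (two tokens are adjacent iff their paths share a node).
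I would compute these paths by walking each token upward from its home clique, using the $n_1$ counts to match the separator between a clique and its parent, and handling branching nodes via the dichotomy of Lemma~\ref{lem:deg3starforkclique} together with Corollaries~\ref{cor:mind2KinderForkStar} and~\ref{lem:forkCliqueNeighbors}, which restrict branching to star- or fork-cliques and ensure that fork cliques are always surrounded by star cliques.

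The main obstacle will be exactly this path-reconstruction: formalizing in $\LREC_=$ a canonical matching of tokens to cliques that depends only on the ordered canon~$C$.
I expect the counting power of $\LREC_=$ to suffice for matching the $n_1$ separator vertices between parent and child along the canonical tree order, and the fact that only two structural patterns appear at branching nodes (star and fork) to keep the case analysis manageable, with the order on children inherited from~$C$ providing the tie-breaker in the fork case.
Once this reconstruction transduction is in place, routine applications of Propositions~\ref{prop:transduction-lemma} and~\ref{prop:composition} combine everything into a parameterized $\LREC_=$-canonization of $\CHCLcon$, and Proposition~\ref{prop:canonconncomp} extends it to all of $\CHCL$, establishing Theorem~\ref{thm:mainresult}.
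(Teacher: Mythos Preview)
Your overall architecture matches the paper exactly: reduce to the connected case via Proposition~\ref{prop:canonconncomp}, build the supplemented clique tree $S_G^R$ with Lemma~\ref{lem:transSGR}, canonize it as an $\LO$-colored tree with Theorem~\ref{prop:LRECLOcolored}, and then convert the ordered canon $K(S_G^R)$ back into an ordered copy of $G$. The divergence is entirely in this last step, and there your proposal has both a missed simplification and a concrete gap.

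The simplification you miss is that $K(S_G^R)$ is an \emph{ordered} structure. The paper exploits Corollary~\ref{cor:LRECorderedstructures} ($\LREC_=$ captures $\LOGSPACE$ on ordered structures): rather than writing $\LREC_=$-formulas for the final transduction, it suffices to exhibit a logspace algorithm that, given $K(S_G^R)$, outputs the max cliques of the canon $K(G)$ (Lemma~\ref{lem:logalgmaxcliques}). That algorithm is a post-order traversal that assigns number intervals to cliques, with a careful case analysis that treats fork cliques and the \emph{second child} of a fork clique specially. This trick sidesteps exactly the ``path-reconstruction'' obstacle you flag.

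The gap is in your token allocation. You define $A_v$ as the unique deepest clique of the path $T_G[\CM_v]$, but this is not well-defined: if $B$ is a fork clique with children $A,A'$ and $\CM_v=\{A,B,A'\}$, both ends of the path sit at the same depth. For such a $v$ we have $v\in A$ with $v$ in no child of $A$, and likewise for $A'$, so $v$ is counted in $n_0(A)$ \emph{and} in $n_0(A')$; if you additionally allocate $n_2(B)$ tokens at $B$ you count $v$ three times. The paper's algorithm handles precisely this by subtracting $\inbothchildren$ when processing the second child of a fork clique (case~2\ref{case-d}), something your scheme does not do. Fixing this would force you into the same case analysis the paper performs, at which point the logspace route is both cleaner and avoids the need to argue that the recursion fits into~$\LREC_=$ directly.
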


\begin{proof}[Proof of Theorem~\ref{thm:mainresult}]
We prove that there exists a parameterized $\LREC_=$-can\-on\-iza\-tion of the class of connected chordal claw-free graphs.
Then Proposition~\ref{prop:canonconncomp} implies that there also exists one for the class of chordal claw-free graphs.

Thus, let us show that there exists a parameterized $\LREC_=$-can\-on\-iza\-tion of $\CHCLcon$.
By Lemma~\ref{lem:transSGR} there exists a parameterized $\STCC$-, and therefore, $\LREC_=$-trans\-duc\-tion $\Theta''(\tup{x})$
such that $\Theta''[G,\tup{r}]$ is isomorphic to the $\LO$-colored directed tree $S_G^R$  for all connected chordal claw-free graphs $G=(V,E)$
and all triples $\tup{r}\in V^3$  that span a leaf~$R$ of~$T_G$.
Further,
there exists an $\LREC_=$-can\-on\-iza\-tion $\Theta^{\LO}$ of the class of $\LO$-colored directed trees according to
Theorem~\ref{prop:LRECLOcolored}.
We show that there also exists
an ${\LREC_=}$-trans\-duc\-tion $\Theta^K$
which defines
for each canon $K(S_G^R)$ of a supplemented clique tree $S_G^R$ of $G\in\CHCLcon$
the canon $K(G)$ of $G$.
Then we can compose the (parameterized) $\LREC_=$-trans\-duc\-tions $\Theta''(\tup{x})$, $\Theta^{\LO}$ and $\Theta^K$
(see Figure~\ref{fig:composition-transductions})
to obtain a parameterized $\LREC_=$-can\-on\-iza\-tion of the class of connected chordal claw-free graphs (Proposition~\ref{prop:composition}).

\begin{figure}[htbp]
    \begin{tikzpicture}[scale=1]
        \tikzstyle{vertex}=[anchor=west, draw,inner sep=3pt,align=left]

        \begin{scope}
            \node[draw,text width=4.0cm,align=center] (1) at (0,0) {$G\in \CHCLcon$, $\tup{r}\in V^3$  ($\tup{r}$ spans a leaf~$R$ of~$T_G$)};
            \node[vertex] (2) at (3.5,-0) {$\Theta''[G,\tup{r}]$};
            \node[anchor=west,inner sep=1.5pt,align=left] at (3.8,-0.6) {\hspace*{1mm} \rotatebox{90}{$\cong$}};
            \node[vertex] (2b) at (3.85,-1.2) {$S_G^R$};
            \node[vertex] (3) at (6.2,-0) {$\Theta^{\LO}[\Theta''[G,\tup{r}]]$};
            \node[anchor=west,inner sep=1.5pt,align=left] at (7.1,-0.6) {\hspace*{1mm} \rotatebox{90}{$=$}};
            \node[vertex] (3b) at (6.8,-1.2) {$K(S_G^R)$};
            \node[vertex] (4) at (9.7,-0) {$\Theta^K[\Theta^{\LO}[\Theta''[G,\tup{r}]]]$};
            \node[anchor=west,inner sep=1.5pt,align=left] at (10.85,-0.6) {\hspace*{1mm} \rotatebox{90}{$=$}};
            \node[vertex] (4b) at (10.6,-1.2) {$K(G)$};
            \draw[->] (1) to node[above]{$\Theta''(\tup{x})$} (2);
            \draw[->] (2) to node[above]{$\Theta^{\LO}$} (3);
            \draw[->] (3) to node[above]{$\Theta^K$} (4);
        \end{scope}
    \end{tikzpicture}
\caption{Overview of the composition of (parameterized) transductions}%
\label{fig:composition-transductions}
\end{figure}
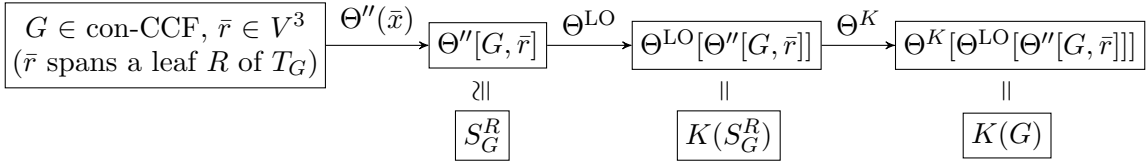

\noindent
We let $\LREC_=[\{V,E,M,\trianglelefteq,L,\leq\},\{E,\leq\}]$-trans\-duc\-tion
$\Theta^K\!=\hspace{-1pt}(\theta_V(p),\theta_E(p,p'),\theta_{\leq}(p,p'))$
define
for each canon $K(S_G^R)=(U_K,V_K,E_K,M_K,\trianglelefteq_K,L_K,\leq_K)$ of a supplemented clique tree of~$G\in\CHCLcon$
an ordered copy $K(G)=(V_K',E_K',\leq_{K}')$ of $G=(V,E)$.
We let $V_K'$ be the set $[|V|]$, and $\leq_{K}'$ be the natural linear order on $[|V|]$.
As the set of basic color elements of $S_G^R$ is $[0,|V|]$,
the set $M_K$ of basic color elements  of the canon $K(S^R_G)$ contains exactly $|V|+1$ elements.
Hence, we can easily define the vertex set of $K(G)$ by counting the number of basic color elements of $K(S_G^R)$.
We let $\varphi_V(p):=\exists q\, \big(\,p\hspace{-0.5pt}\leq\hspace{-0.5pt} q\,\land\, p\hspace{-0.5pt}\not=\hspace{-0.5pt}0\,
\land\, \#x\,\hspace{-0.5pt} M(x)\hspace{-0.75pt}=\hspace{-0.25pt}q\big)$.
Further, we let $\theta_\leq(p,p'):=p\hspace{-0.5pt}\leq\hspace{-0.5pt} p'$.
In order to show that there exists an $\LREC_=$-formula $\theta_E(p,p')$,
which defines the edge relation  of $K(G)$, we exploit the property that $\LREC_=$ captures $\LOGSPACE$ on ordered structures
(Corollary~\ref{cor:LRECorderedstructures}),
and show that there exists a log\-a\-rith\-mic-space algorithm that computes the edge relation of $K(G)$, instead.
According to Lemma~\ref{lem:logalgmaxcliques} there exists a
log\-a\-rith\-mic-space algorithm that computes the set of max cliques of $K(G)$.
As every edge is a subset of some max clique and every two distinct vertices in a max clique are adjacent,
such a log\-a\-rith\-mic-space algorithm can easily be extended to a log\-a\-rith\-mic-space algorithm
that decides whether a pair of numbers is an edge of $K(G)$.
\end{proof}

\begin{lemma}\label{lem:logalgmaxcliques}
 There exists a log\-a\-rith\-mic-space algorithm that, given the canon $K(S_G^R)$ of a supplemented clique tree of
 a connected chordal claw-free graph $G$,
 computes the set of max cliques of an ordered copy $K(G)$ of~$G$.
\end{lemma}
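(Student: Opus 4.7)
The plan is to describe a logspace algorithm that, given the canon $K(S_G^R)$, outputs the set of max cliques of an ordered copy $K(G)$ of $G$ with vertex set $[|V|]$. By the remark preceding Lemma~\ref{lem:logalgmaxcliques} in the proof of Theorem~\ref{thm:mainresult}, it suffices to canonically number the vertices of $G$ by $[|V|]$ using only the canon, since edges can then be recovered as pairs lying together in some max clique.

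A first step is to read off, for each tree node $A$, the three numbers $c_0(A),c_1(A),c_2(A)$ encoded in the $\LO$-color of $A$. From the color definitions one sees that the size of $A$ is
\[
|A|\ =\ c_0(A)\ +\ \sum_{C\text{ child of }A} c_1(C)\ -\ c_2(A),
\]
and the number of vertices introduced at $A$ (i.e., contained in $A$ but not in the parent of $A$, with the convention $c_1(R):=0$) is $|A|-c_1(A)$. These values are logspace computable; summing $|A|-c_1(A)$ over the canonical DFS preorder of the tree assigns each node $A$ a contiguous block $[\ell_A,r_A]\subseteq[|V|]$ of vertex numbers it is responsible for introducing.

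Within each block I would subdivide the new vertices by descent type. For a star clique (which, by Corollary~\ref{cor:mind2KinderForkStar}, covers every node of degree at least $3$ that is not a fork clique, as well as every leaf) each vertex belongs to at most one child of $A$, so the block splits canonically into a prefix of $c_0(A)$ vertices that die at $A$ followed by blocks, ordered by the canonical order of children, of vertices that descend to each child $C_i$; inside each child-block the same scheme is applied. For a fork clique the color $c_2(A)>0$ forces every new vertex to have $\CM_v=\{A,A_1,A_2\}$, so its block is an undifferentiated run of $c_2(A)$ twins that appear in both children. Wherever distinct numbers are assigned to vertices sharing the same descent code, those vertices are true twins of $G$, so the resulting numbering is isomorphism-invariant.

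Once the canonical numbering is fixed, enumerating the max cliques is routine: for each tree node $A$ and each number $n\in[|V|]$ a logspace machine locates the unique ancestor $B$ of $A$ in whose block $n$ lies, recovers the descent code of $n$ by walking down from $B$, and tests whether this descent passes through $A$. The main obstacle will be verifying that the descent-based ordering is simultaneously logspace computable and isomorphism-invariant; invariance rests on identifying every tie-breaking choice with true-twin indistinguishability, and the logspace bound rests on Corollary~\ref{cor:LRECorderedstructures} together with the fact that counting, tree reachability, and block-lookup on an ordered structure all fall within logspace.
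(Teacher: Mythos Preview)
Your top-down preorder approach is genuinely different from the paper's bottom-up post-order approach, but as written it has real gaps.

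First, a concrete error: among the vertices introduced at a star clique $A$ (those in $A$ but not in its parent), the number that lie in no child of $A$ is $c_0(A)-c_1(A)$, not $c_0(A)$. For a star clique every vertex of $A\cap\text{parent}(A)$ is in no child of $A$, so all $c_1(A)$ of them are already counted in $c_0(A)$; they are not ``introduced at $A$'' and must be subtracted.

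Second, and more seriously, you never treat the degree-$2$ cliques that are neither star nor fork cliques, and this is exactly where your recursion ``inside each child-block the same scheme is applied'' needs justification. Suppose $C$ is such a clique with parent $A$ and child $D$. To subdivide the sub-block of vertices introduced at $A$ and passing through $C$, you must know how many of them continue to $D$, i.e.\ $|A\cap D|$. This is the intersection of two \emph{non-adjacent} max cliques and is not a color value. It \emph{is} recoverable, but only because Lemma~\ref{lem:path3maxcl} forces $C\subseteq A\cup D$ whenever some vertex lies in all three, yielding $|A\cap D|=c_1(C)+c_1(D)-|C|$; you never invoke this, and without it the recursion is ill-defined. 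The same issue recurs at every further level of descent through a chain of such cliques, and these chains can be long.

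The paper sidesteps both problems by working bottom-up in post-order and maintaining the invariant (Lemma~\ref{clm:bijectionh}, Property~\ref{propbijh2}) that within each constructed clique $B_{m_i}$ the numbers are ordered by $\#\anc_{A_i}$. This guarantees that the intersection of $B_{m_i}$ with its parent is always the contiguous top interval of $B_{m_i}$, so the algorithm never has to compute an intersection of non-adjacent max cliques: it simply peels off the top $c_1(m_i)$ numbers. Your descent-code scheme could probably be completed using Lemma~\ref{lem:path3maxcl} as above, but the paper's invariant is what makes the bookkeeping local and the logspace bound immediate.
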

\noindent
In the following we briefly sketch the algorithm. A detailed proof of Lemma~\ref{lem:logalgmaxcliques} follows afterwards.

The algorithm performs a post-or\-der tree traversal on the underlying tree of the canon
$K(S^R_G)=(U_K,V_K,E_K,M_K,\trianglelefteq_K,L_K,\leq_K)$ of the supplemented clique tree $S^R_G$.
Let $m_1,\dots,m_{|\CM|}$ be the post-or\-der traversal sequence.
Each vertex $m_k\in V_K$ of the canon $K(S^R_G)$ corresponds to a vertex, i.e., a max clique $A_k\in\CM$, in the supplemented clique tree $S^R_G$.
We call $A_1,\dots,A_{m_{|\CM|}}$ a \emph{transferred traversal sequence}.
For all $k\in \{1,\dots,|\CM|\}$, starting with $k=1$,
the algorithm constructs for $m_k\in V_K$ a copy $B_{m_k}\subseteq [|V|]$ of $A_k$.

From the information encoded in the colors of the vertices of $K(S^R_G)$,
we know the number of vertices in $A_k$ that are not in any max clique that occurs before $A_k$
in the transferred traversal sequence.
For these vertices, we add the smallest numbers of $[|V|]$ to $B_{m_k}$ that were not used before.
We also use the information in the colors
to find out how many vertices of $A_k$ are in a max clique $A_i$
that occurs before $A_k$ in the transferred traversal sequence,
and to determine what numbers these vertices were assigned to.
These numbers are added to $B_{m_k}$ as well.

We will see that the algorithm computes the max cliques $B_{m_1},\dots,B_{m_{k-1}}$ in logarithmic space.

\bigskip\medskip

\noindent
In the remainder of this section we prove Lemma~\ref{lem:logalgmaxcliques}.
We start with looking at the structure of the required algorithm, and focus on its basic idea.
Then we make necessary observations, and finally present the algorithm.
Afterwards, we prove its correctness and show that it only needs logarithmic space.

In the following let $G=(V,E)$ be a connected chordal claw-free graph and
$S^R_G$ be a supplemented clique tree of $G$.
Further, let $K(S^R_G)=(U_K,V_K,E_K,M_K,\trianglelefteq_K,L_K,\leq_K)$ be
the canon of $S^R_G$.
Without loss of generality, we assume that the set of basic color elements $M_K$ is $[0,|V|]$
and that $\trianglelefteq_K$ is the natural linear order $\leq_{[0,|V|]}$ on $[0,|V|]$.

The goal is to define the max cliques of an ordered copy of $G$.
We denote this ordered copy by $K(G)=(V_K',E_K',\leq_{K}')$,
and let $V_K'$ be the set $[|V|]$ and $\leq_{K}'$ be the natural linear order on $[|V|]$.

\subsection*{Post-Order Depth-First Tree Traversal}

The algorithm uses post-or\-der traversal (see, e.g.,~\cite{Sedgewick1998})
on the underlying directed tree of $K(S^R_G)$ to construct the max cliques of the canon~$K(G)$ of $G$.
Like pre-order and in-order traversal, post-or\-der traversal is a type of depth-first tree traversal, that
specifies a linear order on the vertices of a tree.

Note that the universe of the canon of the supplemented clique tree is linearly ordered.
Thus, we have a linear order on the children of a vertex, and we assume the children of a vertex to be given in that order.

In the following we summarize the logarithmic-space algorithm for \emph{depth-first traversal} described by Lindell in~\cite{Lindell:Tree-Canon}.
We start at the root.
For every vertex of the tree we have three possible moves:
\begin{itemize}
	\item \textbf{down}: go down to the first child, if it exists
	\item \textbf{over}: move over to the next sibling, if it exists
	\item \textbf{up}: buck up to the parent, if it exists
\end{itemize}
If our last move was \textbf{down}, \textbf{over} or there was no last move, which means we are visiting a new vertex,
then we perform the first move out of \textbf{down}, \textbf{over} or \textbf{up} that succeeds.
If our last move was \textbf{up}, then we are backtracking,
and we call \textbf{over} if it is possible or else \textbf{up}.
Note that at each step we only need to remember our last move and the current vertex.
Therefore, we only need logarithmic space for depth-first traversal.

The \emph{post-or\-der traversal sequence} consists of every vertex we visit during the depth-first traversal
in order of its last visit.
It follows that we obtain the post-or\-der traversal sequence by
successively adding all vertices visited during depth-first traversal
that are not followed by the move~\textbf{down}.
Thus, we can perform post-or\-der traversal in logarithmic space.%
\footnote{
We also obtain the \emph{post-order traversal} of an ordered directed tree $T$ with root $r$ recursively as follows:
Let $d$ be the out-degree of $r$.
For all $i\in\{1,\dots,d\}$, in increasing order,
perform a post-order traversal on the subtree rooted at child $i$ of the root. Afterward, visit $r$.
Note that this does not correspond to a logarithmic-space algorithm.
}

Let $m_1,\dots,m_{|\CM|}$ be the post-or\-der traversal sequence of the underlying directed tree of the canon $K(S^R_G)$.
We know that there exists an isomorphism $I$ between $K(S^R_G)$ and $S^R_G$.
For all $k\in [|\CM|]$ the isomorphism $I$ maps the vertex $m_k$ of $K(S^R_G)$
to a vertex, i.e., a max clique $A_k:=I(m_k)$, of the supplemented clique tree $S^R_G$.
Notice that the vertices $m_k$ and $A_k$ have the same color.
We call $A_1,\dots,A_{|\CM|}$ the traversal sequence \emph{transferred} by isomorphism~$I$.
The isomorphism $I$ also transfers the ordering of the children of a vertex.
A sequence $A_1,\dots,A_{|\CM|}$ is a \emph{transferred (post-or\-der) traversal sequence} if there exists an
isomorphism $I$ between $K(S^R_G)$ and~$S^R_G$, and $A_1,\dots,A_{|\CM|}$ is the traversal sequence transferred by isomorphism~$I$.
Figure~\ref{fig:17canon} shows an example
of a canon $K(S^R_G)$ and its post-or\-der traversal sequence $m_1,\dots,m_{|\CM|}$,\footnote{
In fact, Figure~\ref{fig:17acanon} shows the canon of the supplemented clique tree depicted in Figure~\ref{fig:SupplClTree}.
}
and the corresponding supplemented clique tree~$S^R_G$ and its transferred traversal sequence $A_1,\dots,A_{|\CM|}$.

\begin{figure}[htbp]\centering
\begin{subfigure}[b]{0.4\textwidth}\centering
  \clipbox{0.5em -1.5em 47em -0.5em}{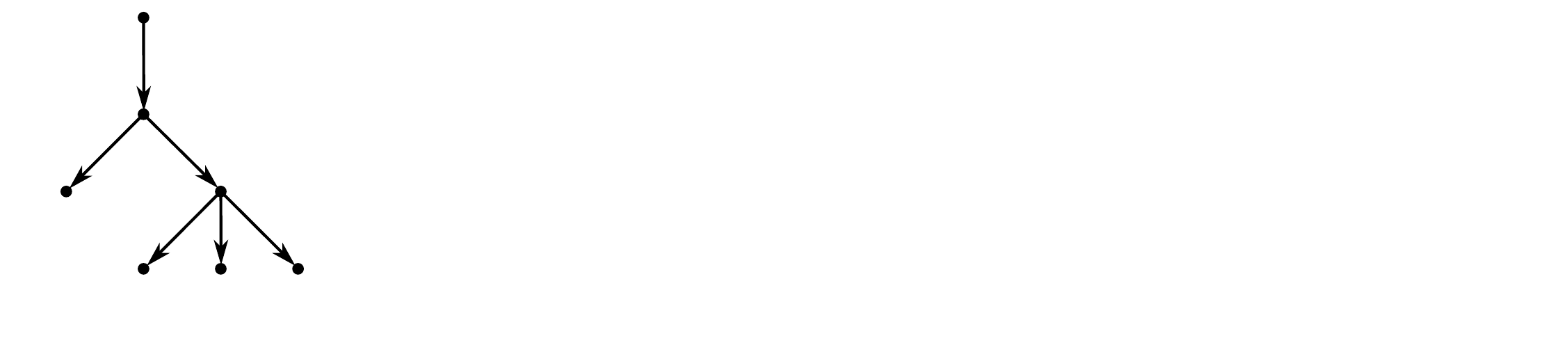}
  \caption{Canon $K(S^R_G)$ and its post-or\-der traversal sequence $m_1,\dots,m_{7}$}%
  \label{fig:17acanon}
\end{subfigure}
\hspace{0.05\textwidth}%
\begin{subfigure}[b]{0.45\textwidth}\centering
\vspace{0.3cm}
  \clipbox{0.5em -1.5em 47em -0.5em}{\scalebox{1.0}{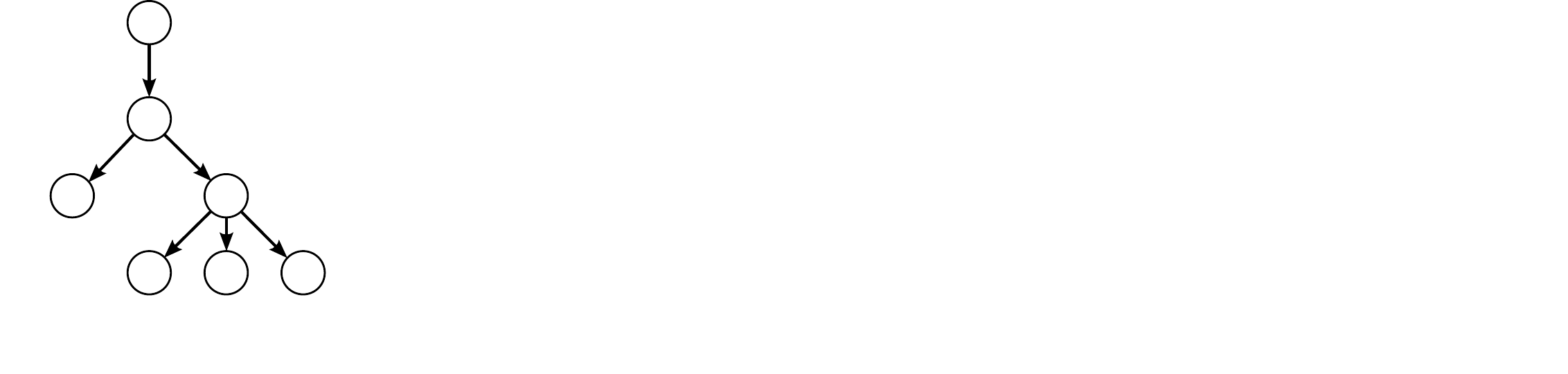}}
  \caption{The supplemented clique tree $S^R_G$ and a transferred traversal sequence $A_1,\dots,A_{7}$}%
  \label{fig:17bcanon}
\end{subfigure}
\caption{}%
\label{fig:17canon}
\end{figure}

\noindent
Clearly, in the post-or\-der traversal sequence of a tree, a proper descendant of a vertex $v$ occurs before the vertex $v$.
Regarding the supplemented clique tree $S^R_G$, this means:\\~\vspace{-1em}
\begin{observation}\label{obs:travsequenzancestor}
 Let $A_1,\dots, A_{|\CM|}$ be a transferred post-or\-der traversal sequence on $S^R_G$, and let $i,i'\!\in [|\CM|]$.
 If max clique $A_{i}$ is a proper descendant of max clique $A_{i'}$ in~$T^R_G$,
	then $i< i'\!$.
\end{observation}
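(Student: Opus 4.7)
The plan is to reduce the claim to the standard property of post-order depth-first traversal, namely that every proper descendant of a vertex is visited (for the last time) strictly before the vertex itself. Since the transferred traversal sequence is the image of the post-order sequence on the underlying tree of $K(S^R_G)$ under an isomorphism $I$ to $S^R_G$, and isomorphisms of directed trees preserve the ancestor/descendant relation, the result will follow almost immediately.

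In more detail, I would proceed as follows. First, let $m_1,\dots,m_{|\CM|}$ denote the post-order traversal sequence on the underlying directed tree of $K(S^R_G)$ and let $I$ be the isomorphism between $K(S^R_G)$ and $S^R_G$ with $A_k = I(m_k)$. Because $I$ is an isomorphism of directed trees, $A_i$ is a proper descendant of $A_{i'}$ in $T^R_G$ if and only if $m_i$ is a proper descendant of $m_{i'}$ in the underlying tree of $K(S^R_G)$. It therefore suffices to show that in any post-order traversal, the index of a proper descendant is strictly less than the index of its ancestor.

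For the post-order property itself, I would give a short argument based on the algorithmic description preceding the observation. During the depth-first traversal, the move \textbf{up} from a vertex $w$ to its parent $v$ is issued only after all of $w$'s subtree has been explored, and a vertex is added to the post-order sequence precisely when it is not followed by the move \textbf{down}. Hence when the traversal first returns to $v$ after exploring any subtree rooted at a child of $v$, every vertex of that subtree has already been added to the sequence, while $v$ is only added once the traversal backtracks upward from $v$ itself. Consequently $m_i$ appears before $m_{i'}$ in the post-order sequence whenever $m_i$ is a proper descendant of $m_{i'}$, giving $i < i'$.

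The main (and only) obstacle is just the precise bookkeeping for the post-order traversal, which is entirely standard. All the substantive content of the observation is already contained in the logarithmic-space traversal scheme recalled just above the statement, so the argument reduces to invoking that description together with the isomorphism $I$.
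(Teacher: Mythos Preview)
Your proposal is correct and matches the paper's approach: the paper simply states ``Clearly, in the post-order traversal sequence of a tree, a proper descendant of a vertex $v$ occurs before the vertex $v$'' and then presents the observation as an immediate consequence, without further proof. Your argument via the isomorphism $I$ and the standard post-order property is exactly the intended justification, just spelled out in more detail than the paper bothers with.
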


\subsection*{Intersections of Max Cliques with Preceding Max Cliques in Transferred Post-Or\-der Traversal Sequences}

We traverse the underlying directed tree of $K(S^R_G)$ in post-or\-der, and
we construct the max cliques of the canon $K(G)$ of $G$ during this post-or\-der traversal.
So for each vertex $m_k$ of the directed tree we construct a clique $B_{m_k}\subseteq [|V|]$.
The clique $B_{m_k}$ will be the max clique of $K(G)$ that corresponds to max clique $A_k$ of graph $G$.

In order to construct these cliques $B_{m_k}$ during the traversal of the underlying directed tree of $K(S^R_G)$,
we have to decide on numbers for all vertices that are supposed to be in such a clique.
The numbering happens according to
the post-or\-der traversal sequence.
The hard part will be to detect which vertices have already occurred in a clique
corresponding to a vertex $m_i$ we have visited before reaching $m_k$, and to determine the numbers they were assigned to.
Then we can choose new numbers for newly occurring vertices and reuse the numbers that correspond to vertices that have occurred before.
Thus, in the following we take a transferred post-or\-der traversal sequence $A_1,\dots,A_{|\CM|}$
and study the intersection of a max clique $A_k$ with max cliques that precede $A_k$ in the
transferred traversal sequence.

An important observation in this respect is
that if $A_k$ is a fork clique, then
the vertices in $A_k$ only occur in the two children
and the parent max clique of fork clique $A_k$ (Corollary~\ref{cor:forkCliqueDegree3}).
Thus, apart from the two children of $A_k$ the vertices in $A_k$
are not contained in any other max clique  previously visited in the transferred traversal sequence.
Further, each vertex in $A_k$ occurs in at least one child max clique of $A_k$.
Hence, each vertex in $A_k$ is contained in a max clique that was visited before.

If max clique $A_k$ is not a fork clique, then it has only one child or is a star clique (Lemma~\ref{lem:deg3starforkclique}).
Thus, the vertices in $A_k$ occur in no more than one child max clique of $A_k$.
Observation~\ref{lem:atmostonechildren+firstchild} shows
that each vertex $v\in A_k$ that occurs in a max clique that is visited before non-fork clique $A_k$
in the transferred traversal sequence
is either contained in exactly one child of $A_k$ or in the first child of a fork clique $A_l$ if $A_k$ is the second child of $A_l$.

\begin{observation}\label{lem:desc+firstchild}
	Let $A_1,\dots, A_{|\CM|}$ be a transferred post-or\-der traversal sequence of $S^R_G$.
	Let ${k\in[|\CM|]}$ and let $v\in A_k$. If there exists a $j<k$ such that $v\in A_j$
	and $A_j$ is not a descendant of $A_k$ in the underlying directed tree $T^R_G$ of $S^R_G$,
	then $A_j$ is the first and $A_k$ the second child of a fork clique.
\end{observation}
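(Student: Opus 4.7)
The plan is to locate a common ancestor of $A_j$ and $A_k$ in $T^R_G$ and then exploit the path structure of $T_G[\CM_v]$ together with the fork-clique machinery developed in Section~\ref{sec:cliquetree}. First I would observe that since $j<k$ in the post-order sequence and $A_j$ is not a descendant of $A_k$ by hypothesis, symmetrically $A_k$ cannot be a descendant of $A_j$ either (that would force $k<j$ by Observation~\ref{obs:travsequenzancestor}). Consequently the least common ancestor $a$ of $A_j$ and $A_k$ in $T^R_G$ is distinct from both, and there exist distinct children $c_j,c_k$ of $a$ such that $A_j$ lies in the subtree rooted at $c_j$ while $A_k$ lies in the subtree rooted at $c_k$.

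Because $v\in A_j\cap A_k$, both $A_j$ and $A_k$ belong to $\CM_v$, and by Lemma~\ref{lem:maxclpath} the induced subgraph $T_G[\CM_v]$ is a path. Since any path in $T_G$ from a vertex of the $c_j$-subtree to a vertex of the $c_k$-subtree must cross the edges $\{c_j,a\}$ and $\{a,c_k\}$, the connectedness of $T_G[\CM_v]$ forces $c_j,a,c_k\in \CM_v$ as well, and the segment $c_j\text{--}a\text{--}c_k$ appears inside this path.

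Next I would show that $a$ is a fork clique. In $T_G$ the vertex $a$ has at least two children, so $\deg_{T_G}(a)\geq 2$; and $a$ cannot be the root of $T_G^R$, for that would make $a=R$ a leaf of $T_G$ and contradict $\deg_{T_G}(a)\geq 2$. Hence $a$ also has a parent in $T_G^R$, so $\deg_{T_G}(a)\geq 3$. By Lemma~\ref{lem:deg3starforkclique}, $a$ is then either a star clique or a fork clique; but the presence of the vertex $v\in a$ in two distinct neighbors $c_j,c_k$ of $a$ rules out $a$ being a star clique, so $a$ must be a fork clique.

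Applying Lemma~\ref{lem:forkcliquelength3middle} to $v$ and $a$ gives $|\CM_v|=3$, so the segment $c_j\text{--}a\text{--}c_k$ exhausts the entire path $T_G[\CM_v]$, forcing $A_j=c_j$ and $A_k=c_k$. Thus $A_j$ and $A_k$ are actual children of the fork clique $a$. Finally, to determine the ordering: in a post-order traversal the whole subtree of each child is visited before moving on to the next child, so $j<k$ together with $A_j,A_k$ being in subtrees of different children of $a$ forces $c_j$ to precede $c_k$ in the order of children of $a$, i.e.\ $A_j$ is the first and $A_k$ the second child of $a$. The main subtlety to keep in mind is ruling out the degenerate case $a=R$, which is exactly where it matters that $R$ was chosen to be a leaf of the (undirected) clique tree $T_G$.
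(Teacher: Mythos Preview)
Your proof is correct and follows essentially the same route as the paper's: locate the least common ancestor, use the path structure of $T_G[\CM_v]$ to see $v$ lies in two distinct children of it, conclude it is a fork clique rather than a star clique, and then use that $|\CM_v|=3$ to pin down $A_j$ and $A_k$ as its two children in the required order. The only cosmetic difference is that where you unpack the argument that $a\neq R$ and hence $\deg_{T_G}(a)\geq 3$ in order to invoke Lemma~\ref{lem:deg3starforkclique}, the paper simply cites Corollary~\ref{cor:mind2KinderForkStar} (which already absorbs the ``$R$ is a leaf'' observation), and where you invoke Lemma~\ref{lem:forkcliquelength3middle} the paper reads $|\CM_v|=3$ directly off the definition of a fork clique.
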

\begin{proof}
	Let $A_1,\dots, A_{|\CM|}$ be a transferred post-or\-der traversal sequence of $S^R_G$.
	Let ${j,k\in[\CM]}$ with $j<k$ and let $v\in A_j\cap A_k$.
	Further, suppose that $A_j$ is not a descendant of $A_k$.
	As $j<k$, max clique $A_k$ also cannot be a proper descendant of $A_j$ by Observation~\ref{obs:travsequenzancestor}.
	Consequently, the smallest common ancestor $A_l$ of $A_j$ and $A_k$
	must be a proper ancestor of $A_j$ and $A_k$. Clearly, $A_l$ has at least two children.
	Corollary~\ref{cor:mind2KinderForkStar}
	yields that $A_l$ is either a star or a fork clique.
	According to the clique intersection property vertex $v$ is contained in $A_l$ and every max clique on the path between
	$A_j$ and $A_k$. Thus, $A_l$ must be a fork clique, and $T_G[\CM_v]$ is a path of length~$3$.
	Therefore, $A_j$ and $A_k$ are the children of fork clique $A_l$.
	Since $j<k$, max clique $A_j$ is the first and $A_k$ the second child of~$A_l$.
\end{proof}

\begin{observation}\label{lem:atmostonechildren+firstchild}
	Let $A_1,\dots, A_{|\CM|}$ be a transferred post-or\-der traversal sequence of $S^R_G$.
	Let ${k\in[|\CM|]}$. Suppose that $A_k$ is not a fork clique, and let $v\in A_k$.
	If there exists a $j<k$ such that $v\in A_j$,
	then there exists exactly one $i\in[|\CM|]$ such that $v\in A_i$ and
	\begin{enumerate}
		\item $A_i$ is a child of $A_k$ 	or
		\item $A_i$ is the first child of a fork clique and $A_k$ the second one.
	\end{enumerate}
\end{observation}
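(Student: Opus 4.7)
My plan is to establish existence and uniqueness separately, using the structural facts about $A_k$ (not being a fork clique) together with Observation~\ref{lem:desc+firstchild} and the characterization of fork cliques.

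For \emph{existence}, I would apply Observation~\ref{lem:desc+firstchild} to the given pair $(j,k)$ with $v\in A_j\cap A_k$. This yields two cases. Either $A_j$ is a descendant of $A_k$ in $T^R_G$, in which case I take $A_i$ to be the unique child of $A_k$ on the path from $A_k$ to $A_j$; then the clique intersection property forces $v\in A_i$, and condition~(1) holds. Or $A_j$ is itself the first child of a fork clique whose second child is $A_k$; then $A_i := A_j$ satisfies condition~(2).

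For \emph{uniqueness}, suppose $A_i$ and $A_{i'}$ are distinct, both contain $v$, and each satisfies (1) or (2). I would rule out the three sub-cases:

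\textbf{Both satisfy (1):} Then $v$ lies in two distinct children of $A_k$. Since $A_k$ has at least two children but is not a fork clique, Corollary~\ref{cor:mind2KinderForkStar} forces $A_k$ to be a star clique, so every vertex of $A_k$ lies in at most one neighbor of $A_k$ — a contradiction.

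\textbf{Both satisfy (2):} Then both $A_i$ and $A_{i'}$ are first children of fork cliques having $A_k$ as second child. But $A_k$ has at most one parent, so there is only one such fork clique and only one such first child — a contradiction.

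\textbf{One of each:} Say $A_i$ is a child of $A_k$ and $A_{i'}$ is the first child of a fork clique $A_l$ whose second child is $A_k$. Then $A_{i'}$ and $A_k$ are siblings in $T^R_G$ with common parent $A_l$, so by the clique intersection property $v\in A_l$. Hence $\CM_v\supseteq\{A_i,A_k,A_l,A_{i'}\}$, four pairwise-distinct max cliques. But $A_l$ is a fork clique containing $v$, and by Lemma~\ref{lem:forkcliquelength3middle} (applied via the fork triangle that every fork clique possesses) we have $|\CM_v|=3$ — a contradiction.

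The main obstacle is the mixed case: the argument requires extracting $v\in A_l$ from the clique intersection property and then invoking the tight characterization of $|\CM_v|$ for a vertex in a fork clique. Once one sees that combining (1) and (2) would force $v$ to lie in four max cliques while the fork-clique property caps $|\CM_v|$ at three, uniqueness follows immediately.
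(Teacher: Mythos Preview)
Your proof is correct and follows essentially the same approach as the paper: the same existence argument via Observation~\ref{lem:desc+firstchild} and the clique intersection property, and the same three-way case split for uniqueness. The only minor difference is that in the mixed case the paper obtains $|\CM_v|=3$ directly from the definition of a fork clique rather than via Lemma~\ref{lem:forkcliquelength3middle}, but this is a cosmetic distinction.
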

\begin{proof}
	Let $A_1,\dots, A_{|\CM|}$ be a transferred post-or\-der traversal sequence of $S^R_G$.
	Let ${j,k\in[\CM]}$ with $j<k$ and let $v\in A_j\cap A_k$.
	Suppose that $A_k$ is not a fork clique.
	If $A_j$ is a descendant of $A_k$, then there exists an  $i\in[|\CM|]$
	such that $v\in A_i$ and $A_i$ is a child of $A_k$ by the clique intersection property.
	If $A_j$ is not a descendant of $A_k$, then by Observation~\ref{lem:desc+firstchild} there exists an  $i\in[|\CM|]$, that is, $i=j$,
	such that $A_i$ is the first and $A_k$ the second child of a fork clique.
	Thus, there exists an $i\in[|\CM|]$ such that $v\in A_i$ and
	property~\ref{prop:obschild1} or~\ref{prop:obschild2} is satisfied.
	Now, let us assume there exist $i_1,i_2\in[|\CM|]$ with $i_1\not=i_2$ such that
	for all $m\in[2]$ we have $v\in A_{i_m}$ and
	\begin{enumerate}
		\item\label{prop:obschild1} $A_{i_m}$ is a child of $A_k$ 	or
		\item\label{prop:obschild2} $A_{i_m}$ is the first child of a fork clique and $A_k$ the second one.
	\end{enumerate}
	\noindent
	Clearly, $A_{i_1}$ and $A_{i_2}$ cannot be both the first child of a fork clique.

	Now, let us consider the case, where $A_{i_1}$ and $A_{i_2}$ are children of $A_k$.
	Since $A_k$ has at least two children and is not a fork clique, it
	must be a star clique by Corollary~\ref{cor:mind2KinderForkStar}.
	However, $v$ is contained in $A_{i_1}$, $A_k$ and $A_{i_2}$.
	Therefore, $A_k$ cannot be a star clique, a contradiction.

	It remains to consider the case where, w.l.o.g., $A_{i_1}$ is a child of $A_k$, and
	$A_{i_2}$ is the first child of a fork clique $A_l$ and $A_k$ the second one.
	As $v\in A_{i_2}$ and $v\in A_{i_1}$,
	the clique intersection property implies that $v\in A_k$ and $v\in A_l$.
	Since $v\in A_l$ and $|\CM_v|>3$, we obtain a contradiction to $A_l$ being a fork clique.
\end{proof}

\noindent
Now, let us summarize what we know about the intersection of a max clique with preceding max cliques in a transferred traversal sequence.
If $A_k$ is a fork clique, then we know the vertices of $A_k$ all occur in its two children, which
occur before $A_k$ within a transferred traversal sequence.
If $A_k$ is not a fork clique, then by Observation~\ref{lem:atmostonechildren+firstchild}
the vertices in $A_k$ that occur in max cliques before $A_k$ within a transferred traversal sequence
are precisely the vertices in the pairwise intersection of $A_k$ with its children,
and the intersection of $A_k$ with its sibling if $A_k$ is the second child of a fork clique.
Further, Observation~\ref{lem:atmostonechildren+firstchild} yields that these intersections are disjoint sets of vertices.

\subsection*{Algorithm to Construct the Cliques \texorpdfstring{$\boldsymbol{B_{m_j}}$}{Bmj}}

We now include the new knowledge about the intersection of max cliques with preceding max cliques in a transferred traversal sequence
into our construction of the sets $B_{m_j}$.
For the numbers in each clique $B_{m_j}$ where $m_j$ does not corresponds to the second child of a fork clique,
we maintain
the property that if a number $l\in B_{m_j}$ is contained in more ancestors of $B_{m_j}$ than a number $l'\in B_{m_j}$,
then $l>l'$.
Thus, if $B_{m_j}$ is a child of a clique $B_{m_{j'}}$, then the intersection $B_{m_j}\cap B_{m_{j'}}$ contains precisely the
$|B_{m_j}\cap B_{m_{j'}}|$ largest numbers of $B_{m_j}$.
In the following we present an algorithm that computes the sets~$B_{m_j}$.

During the algorithm, we need to remember or compute a couple of values:
At each step of our traversal, we let $\counter$ be the total number of vertices we have created so far.
We update this number \textit{after} visiting a vertex $m_k$ in the post-or\-der traversal sequence
$m_1,\dots,m_{|\CM|}$ of the underlying directed tree of $K(S_G^R)$.
Sometimes we need to recompute
the number of vertices created up until after the visit of a vertex $m_i$ with $i<k$.
We let $\counter(m_i)$ denote this number.
Further, we exploit the information contained in the color of a vertex $m$. We~let
\begin{itemize}
	\item $\new(m)$ be the number of vertices that are contained in the max clique represented by $m$
	and are not contained in any max cliques corresponding to children of~$m$,
	\item $\abovecut(m)$ be the number of vertices that are contained in the max clique represented by $m$
		and the max clique represented by the parent of $m$ (if $m$ is the root of the tree, then $\abovecut(m)$ will be $0$), and
	\item $\inbothchildren(m)$ be the number of vertices that are contained in the max clique corresponding to $m$
	and in at least two max cliques represented by children of $m$.
\end{itemize}

\noindent
We also need the following boolean values. Note that they can be easily obtained from the color of a vertex, as well.
\begin{itemize}
	\item $\isforkclique(m)$ which indicates whether $m$ corresponds to a fork clique, and
	\item $\isforkchild(m)$ which indicates whether $m$ is the second child of a vertex corresponding to a fork clique.
\end{itemize}

\noindent
With help of the above values, we can complete the algorithm.
Thus, let us describe the algorithm at a vertex $m$ during the post-or\-der traversal.
The algorithm distinguishes between the following cases.
For each case we list the numbers belonging to clique $B_m$, and indicate the values used to determine the numbers in~$B_m$.\smallskip

\begin{enumerate}[label=\textbf{\arabic*.},ref=\arabic*]
	\item\label{traversal1} \textbf{Node $\boldsymbol{m}$ corresponds to a fork
	clique ($\isforkclique(m)=\text{\normalfont{true}}$).}\smallskip\newline
		Let $m'$ be the first child of node $m$, and $m''$ be the second one.
		We determine $\counter(m')$, and since $\counter(m'')=\counter$, we already know $\counter(m'')$.
		Further, we need $\abovecut(m')$ and $\abovecut(m'')$, and $\inbothchildren(m)$.
		We let $B_m$ be the set of numbers in
		\begin{flalign*}
			\qquad\qquad\quad
			&[\counter(m')-\abovecut(m')+1,\ \counter(m')]\text{\quad and}&&\\
			&[\counter(m'')-\abovecut(m'')+\inbothchildren(m)+1,\ \counter(m'')].&&
		\end{flalign*}
		We do not increase $\counter$.\medskip\smallskip
	\item\label{traversal2} \textbf{Node $\boldsymbol{m}$ does not correspond to a fork clique ($\isforkclique(m)=\text{\normalfont{false}}$).}\smallskip\\
		Let $m_1,\dots,m_k$ be the children of $m$ where $k\geq 0$.
		Now for all $j\in[k]$ we determine $\isforkclique(m_j)$, and distinguish between the following two cases.
		\medskip
		\begin{enumerate}[label={(\alph*)},ref={(\alph*)}]
			\item\label{case-a} $\isforkclique(m_j)=\text{false}$:\newline
				We determine $\counter(m_j)$ and $\abovecut(m_j)$ and
				we add to $B_m$ the numbers in
				\begin{flalign*} \qquad\qquad\quad
					&[\counter(m_j)-\abovecut(m_j)+1,\ \counter(m_j)]&&
				\end{flalign*}
			\item\label{case-b} $\isforkclique(m_j)=\text{true}$:\newline
				Let $m_j'$ and $m_j''$ be the children of $m_j$.
				We add to $B_m$ the numbers in
				\begin{flalign*} \qquad\qquad\quad
					&[\counter(m_j')-\abovecut(m_j')+\inbothchildren(m_j)+1,\ \counter(m_j')]\text{\quad and}&&\\
					&[\counter(m_j'')-\abovecut(m_j'')+\inbothchildren(m_j)+1,\ \counter(m_j'')].&&
				\end{flalign*}
		\end{enumerate}\medskip

		\noindent
		Further, we determine $\isforkchild(m)$ and depending on the value of it, we do the following.
		\medskip
		\begin{enumerate}[label={(\alph*)},ref={(\alph*)},start=3]
			\item\label{case-c} $\isforkchild(m)=\text{false}$:\newline
				We increase $\counter$ by $\new(m)$, and add to $B_m$ the vertices in
				\begin{flalign*} \qquad\qquad\quad
					&[\counter-\new(m)+1,\ \counter].&&
				\end{flalign*}
			\item\label{case-d} $\isforkchild(m)=\text{true}$:\newline
				Let $p$ be the parent of $m$, and let $m'$ be the first sibling of $m$.
				We increase $\counter$ by $\new(m)-\inbothchildren(p)$.
				We add to $B_m$ the vertices in the intervals
				\begin{flalign*} \qquad\qquad\quad
					&[\counter(m')\!-\!\abovecut(m')\!+\!1,\counter(m')\!-\!\abovecut(m')\!+\!\inbothchildren(p)],&&\\
					&[\counter-\new(m)+\inbothchildren(p)+1,\ \counter].&&
				\end{flalign*}
		\end{enumerate}
\end{enumerate}\medskip\smallskip

\noindent In the following we illustrate the algorithm with an example.
\begin{example}
	The algorithm can be applied to the canon $K(S^R_G)$ depicted in Figure~\ref{fig:17acanon}.
Figure~\ref{fig:exampleBmi} shows the computed values at each step of the algorithm.
It also shows the cliques $B_{m_i}$ for all $i$.
\begin{figure}[htbp]
\centering
\begin{minipage}{0.53\textwidth}
\begin{tabular}{lllll}
	\textbf{$i$} & $m_i$ & Case & \textbf{$B_{m_i}$} & $\counter$ \\
	\toprule
	 && &  & $0$\\\midrule
	1 &$m_1$& 2\ref{case-c} & $[1,6]$ & $6$\\\midrule 
	2 &$m_2$& 2\ref{case-c} & $[7,9]$ & $9$\\	\midrule 
	3 &$m_3$& 2\ref{case-c} & $[10,11]$ & $11$\\	\midrule 
	4 &$m_4$& 2\ref{case-c} & $[12,16]$ & $16$\\	\midrule 
	5 &$m_5$& 2\ref{case-a} for $m_2$&  $[9,9]$ & \\ 
	 && 2\ref{case-a} for $m_3$ & $[11,11]$ & \\ 
	 && 2\ref{case-a} for $m_4$ & $[14,16]$ & \\ 
	 && 2\ref{case-d}  & $[4,4]\cup[17,19]$ & $19$\\\midrule 
	6 &$m_6$& 1 & $[4,6]\cup[19,19]$ & $19$\\	\midrule
	7 &$m_7$& 2\ref{case-b} for $m_6$ &  $[5,6]\cup[19,19]$ & \\ 
	 && 2\ref{case-c} &  $[20,22]$ & $22$ 
\end{tabular}
\end{minipage}
\begin{minipage}{0.42\textwidth}
  \clipbox{0em -1.5em 4em -0.5em}{\scalebox{0.95}{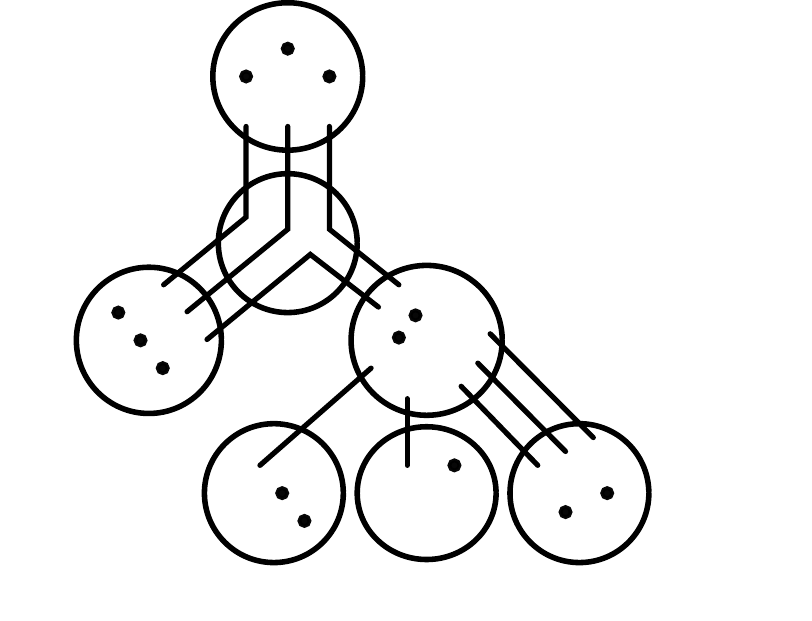}}
\end{minipage}
	\caption{Application of the algorithm to the example in Figure~\ref{fig:17acanon}}%
	\label{fig:exampleBmi}
\end{figure}
\end{example}

\subsection*{Correctness of the Algorithm}

We show that the presented algorithm returns the max cliques of an ordered copy of $G$.
In order to do this, we prove that there exists a bijection~$h$ between $V$ and $[|V|]$,
so that for all $k\in[|\CM|]$ we have $h(A_k)=B_{m_k}$.
Then $h$ is a graph isomorphism between $G$ and the graph $(V_K',E_K')$
where $V_K'=[|V|]$ and $\{v,v'\}\in E_K'$ iff $v\not=v'$ and there exists a $k\in[|\CM|]$ such that $v,v'\in B_{m_k}$.
Thus, $K(G)=(V_K',E_K',\leq_K')$, where $\leq_K'$ is the natural linear order on $[|V|]$,
is an ordered copy of $G$.

We show the existence of bijection $h$ with help of the lemma below.
The lemma is proved by induction along the post-or\-der traversal sequence.
First, we introduce definitions that are used in the lemma.

Let $T^R_G$ be the underlying directed clique tree of the supplemented clique tree $S^R_G$.
For all max cliques $A\in \CM$ and for all $v\in A$
we let $\#\anc_A(v)$ be the number of max cliques in $T^R_G$ that contain vertex $v$
and are an ancestor of $A$.
Clearly, for every vertex $v\in A$ the number $\#\anc_A(v)$ is at least $1$.
Let $A_1,\dots,A_{|\CM|}$ be a transferred traversal sequence.
For $i\in[|\CM|]$ and $c\in[2]$ let $S_i^c$ be the set of vertices $v$ of max clique $A_i$,
where $\#\anc_{A_i}(v)> c$.
Thus, if max clique $A_i$ has a parent max clique $P_i$ in $T^R_G$,
then $S_i^1$ is the set of vertices in $A_i\cap P_i$.
Hence, $\abovecut(m_i)=|S_i^1|$.
If again $P_i$ has a parent in $T^R_G$,
then $S_i^2$ is the subset of vertices of $A_i$ which are contained in $P_i$ and the parent of $P_i$.
For example, if $A_l$ is a fork clique with children $A_i$ and $A_j$, then
$A_l$ is the disjoint union of $S_i^1$ and $S_j^2$.
Further, if $A_{l'}$ is the parent max clique of fork clique $A_l$, then $A_{l'}$ is the disjoint union of $S_i^2$ and $S_j^2$.

\begin{lemma}\label{clm:bijectionh}
Let $m_1,\dots,m_{|\CM|}$ be the post-or\-der traversal sequence of the underlying directed tree of $K(S_G^R)$.
Further, let $B_{m_1},\dots,B_{{m_{|\CM|}}}$ be the cliques computed by the algorithm and
$A_1,\dots,A_{|\CM|}$ be a transferred traversal sequence.
Then, for all $l\in [|\CM|]$ there exists a  bijection $h_l$ between
$A_1\cup\dots\cup A_l$ and $[\counter(m_l)]$,
such that for all $i\in[l]$ we have
\begin{enumerate}
	\item\label{propbijh1} $h_l(A_i)=B_{m_i}$,
	\item\label{propbijh2} $\#\anc_{A_i}(v)\leq\#\anc_{A_i}(v')$ for all vertices $v,v'\in A_i$
	with $h_l(v)\leq h_l(v')$ if $A_i$ is neither a fork clique nor the second child of a fork clique,
	\item\label{propbijh3} $h_l(S_i^1)=[\counter(m_i)-\abovecut(m_i)+1,\ \counter(m_i)]$
	if $A_i$ is neither a fork clique nor the second child of a fork clique, and
	\item\label{propbijh4} $h_l(S_i^2)=[\counter(m_i)-\abovecut(m_i)+\inbothchildren(p_i) +1,\ \counter(m_i)]$
	if $A_i$ is the second child of a fork clique, where $p_i$ is the parent of $m_i$.
\end{enumerate}
\end{lemma}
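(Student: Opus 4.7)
The plan is to prove the lemma by induction on $l$. For the base case $l=1$, the vertex $m_1$ must be the leftmost leaf of the underlying tree, so $A_1$ has no children and the algorithm produces $B_{m_1} = [1, |A_1|]$ via Case 2(c). Define $h_1 \colon A_1 \to [|A_1|]$ by assigning numbers in increasing order of $\#\anc_{A_1}(v)$ (breaking ties arbitrarily). All four properties then hold immediately: Property \ref{propbijh4} is vacuous, Property \ref{propbijh3} holds because the $\abovecut(m_1) = |S_1^1|$ vertices of $A_1$ with more than one ancestor receive the top $\abovecut(m_1)$ numbers, and Properties \ref{propbijh1} and \ref{propbijh2} hold by construction.

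For the inductive step, I assume $h_{l-1}$ exists satisfying the four properties and extend it to $h_l$ by keeping $h_{l-1}(v)$ for every vertex $v \in A_l$ that lies in some $A_i$ with $i < l$, and by giving fresh numbers --- allocated in increasing order of $\#\anc_{A_l}$ --- to the remaining vertices of $A_l$. The main task is to verify $h_l(A_l) = B_{m_l}$ in each of the four cases of the algorithm. In Cases 2(a) and 2(c) this is straightforward: for a non-fork child $m_j$ of $m_l$, Observation \ref{lem:atmostonechildren+firstchild} together with Property \ref{propbijh3} of $h_{l-1}$ identifies $h_{l-1}(A_j \cap A_l) = h_{l-1}(S_j^1) = [\counter(m_j) - \abovecut(m_j) + 1, \counter(m_j)]$; the remaining $\new(m_l)$ vertices of $A_l$ then receive the top interval $[\counter(m_l) - \new(m_l) + 1, \counter(m_l)]$. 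Case 2(b) requires Property \ref{propbijh2} of each grandchild $m_{j'}$ of $m_l$ to locate, within $h_{l-1}(S_{j'}^1)$, the top $|S_{j'}^2| = \abovecut(m_{j'}) - \inbothchildren(m_j)$ numbers as $h_{l-1}(S_{j'}^2) = h_{l-1}(A_{j'} \cap A_l)$; the equality $S_{j'}^2 = A_{j'} \cap A_l$ uses the fork-clique structure (Lemma \ref{lem:forkcliquelength3middle}). Case 1 is analogous: Corollary \ref{cor:forkCliqueDegree3} gives $\new(m_l) = 0$, and the same $S_{j'}^2$-argument applied to the two children of $A_l$ itself yields the prescribed intervals.

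The main obstacle is Case 2(d), where $A_l$ is the second child of a fork clique $p$ and must inherit the $\inbothchildren(p)$ numbers shared with its first sibling $m'$. Here Observation \ref{lem:atmostonechildren+firstchild} together with Lemma \ref{lem:forkcliquelength3middle} identifies $A' \cap A_l$ with $S_{m'}^1 \setminus S_{m'}^2$: the shared vertices lie in the fork parent $p$ but in no higher ancestor. Property \ref{propbijh2} of $m'$ then forces these $\#\anc$-smaller vertices of $S_{m'}^1$ into its bottom portion, matching the first interval $[\counter(m') - \abovecut(m') + 1,\, \counter(m') - \abovecut(m') + \inbothchildren(p)]$ of Case 2(d), while the newly appearing $\new(m_l) - \inbothchildren(p)$ vertices of $A_l$ receive the fresh top interval. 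The subtle point is that the shared numbers are \emph{forced} rather than chosen: the algorithm's intervals are correct only because Properties \ref{propbijh2} and \ref{propbijh3} of $h_{l-1}$ are rigid enough to pin down the $S_{j'}^2$- and $S_{m'}^1 \setminus S_{m'}^2$-portions of $h_{l-1}$. Finally, to verify that Properties \ref{propbijh2}--\ref{propbijh4} survive for $i = l$, one invokes Lemma \ref{lem:path3maxcl}: whenever a vertex $v' \in A_l$ inherited from a child $A_j$ has $\#\anc_{A_l}(v') = 2$ (i.e.\ also lies in the parent $P$ of $A_l$), the lemma forces $A_l \subseteq A_j \cup P$, so every fresh vertex of $A_l$ must lie in $P$ and therefore also has $\#\anc_{A_l} \geq 2$; this is exactly what is needed to reconcile the algorithm's ``new-on-top'' allocation with the $\#\anc_{A_l}$-ordering demanded by Property \ref{propbijh2}.
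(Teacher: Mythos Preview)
Your approach is essentially the paper's: induction along the post-order sequence, extending $h_{l-1}$ by assigning fresh numbers to $A_l' = A_l \setminus \bigcup_{i<l} A_i$ in increasing order of $\#\anc_{A_l}$, and reading off the inherited portions via Properties~\ref{propbijh2}--\ref{propbijh4} of $h_{l-1}$. The case analysis for $h_l(A_l)=B_{m_l}$ is handled correctly, and your identification of $A_{j'}\cap A_l$ with $S_{j'}^2$ (Case~2(b)) and of $A'\cap A_l$ with $S_{m'}^1\setminus S_{m'}^2$ (Case~2(d)) via the fork-clique structure is right.

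There is, however, a genuine gap in your last paragraph. Your Lemma~\ref{lem:path3maxcl} argument only shows that if some inherited vertex has $\#\anc_{A_l}=2$ then every fresh vertex has $\#\anc_{A_l}\geq 2$. Property~\ref{propbijh2} demands more: for \emph{every} inherited $v$ and fresh $v'$ one needs $\#\anc_{A_l}(v)\leq\#\anc_{A_l}(v')$, so if an inherited vertex has $\#\anc_{A_l}=c$ you must show fresh vertices have $\#\anc_{A_l}\geq c$. Your argument does extend --- for each $k\in[1,c-1]$ apply Lemma~\ref{lem:path3maxcl} to the triple $A_j,A_l,P^{(k)}$ on the path $T_G[\CM_v]$ to get $A_l\subseteq A_j\cup P^{(k)}$, hence fresh vertices lie in every $P^{(k)}$ --- but as written it does not, and ``this is exactly what is needed'' is false for the $c\geq 3$ case. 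The paper instead invokes Corollary~\ref{cor:noMiddleIntersection} directly: if $\#\anc_{A_l}(v)>\#\anc_{A_l}(v')$ with $v$ inherited and $v'$ fresh, then $\CM_{v'}$ separates $\CM_v$, a contradiction.

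A second, related omission: the paper splits the verification of Properties~\ref{propbijh2}--\ref{propbijh3} into the star-clique case and the non-star-non-fork case, and this split is not merely cosmetic. When $A_l$ is a star clique with several children, the inherited numbers come from disjoint intervals whose relative order has nothing to do with $\#\anc$; Property~\ref{propbijh2} holds only because \emph{all} inherited vertices have $\#\anc_{A_l}=1$ (star-clique definition), which also forces $S_l^1\subseteq A_l'$ and hence gives Property~\ref{propbijh3} from the ordering on $A_l'$ alone. When $A_l$ is neither star nor fork it has exactly one child $A_{l-1}$, the inherited ordering is controlled by Property~\ref{propbijh2} for $A_{l-1}$, and $h_l(A_l)$ is a single contiguous interval, so Property~\ref{propbijh3} follows from Property~\ref{propbijh2}. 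Your single-case treatment does not account for the ordering among inherited vertices from different children, and your derivation of Property~\ref{propbijh3} implicitly assumes $h_l(A_l)$ is an interval, which fails for star cliques with more than one child.
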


\begin{proof}
Let $m_1,\dots,m_{|\CM|}$ be the post-or\-der traversal sequence,
$B_{m_1},\dots,B_{{m_{|\CM|}}}$ be the cliques computed by the algorithm and
$A_1,\dots,A_{|\CM|}$ be a transferred traversal sequence.
	We prove Lemma~\ref{clm:bijectionh} by induction on $l\in[0,|\CM|]$.
	Notice that $l=0$ is not included in the lemma, but we extend it to $l=0$.
	Although there does not actually exist a vertex $m_0$, we let $\counter(m_0)$ be $0$.
	This makes sense, since $0$ is the initial value of $\counter$.
	We let $h_0\colon \emptyset\to \emptyset$ be the empty mapping.
	Clearly, $h_0$ meets all the requirements.
	Now suppose $l>0$ and let there be a bijection $h_{l-1}$ with
	properties~\ref{propbijh1} to~\ref{propbijh4} for all $i\in[l-1]$.
	We show the existence of bijection $h_l$.

	First, let us consider the case where $m_l$ corresponds to a fork clique.
	Clearly, $A_l$ is a subset of the set of vertices occurring in $A_l$'s children, and $\counter(m_l)=\counter(m_{l-1})$.
	Thus, we let $h_l:=h_{l-1}$, and we know by inductive assumption that it is a bijection.
	By inductive assumption we also know that $h_l$ satisfies properties~\ref{propbijh1} to~\ref{propbijh4} for all $i<l$.
	Therefore, it remains to show these properties for $i=l$.
	As $A_l$ is a fork clique, and cannot be the second child of a fork clique,
	properties~\ref{propbijh2},~\ref{propbijh3} and~\ref{propbijh4} trivially hold for $i=l$.
	Thus, we only have to show that $h_l$ satisfies property~\ref{propbijh1} for $i=l$, that is, that $h_l(A_l)=B_{m_l}$.

	So let us prove that $h_l(A_l)=B_{m_l}$.
	Let $m_i$ and $m_j$ with $i<j<l$ be respectively the first and the second child of $m_l$.
	Since $m_i$ cannot correspond to a fork clique or to the second child of a fork clique,
	we have
	\begin{align*}
	h_l(S_i^1)&=[\counter(m_i)-\abovecut(m_i)+1,\ \counter(m_i)]
	 \intertext{by inductive assumption.
	Analogously, we know}
	h_l(S_j^2)&=[\counter(m_j)-\abovecut(m_j)+\inbothchildren(m_l) +1,\ \counter(m_j)]
	\end{align*}
	because the vertex $m_j$ corresponds to the second child of a fork clique.
	We obtain that $B_{m_l}=h_l(S_i^1)\cup h_l(S_j^2)$.
	As $A_l$ is a fork clique, $A_l$ is the disjoint union of $S_i^1$ and $S_j^2$.
	Hence, we have $B_{m_l}=h_l(A_l)$.

	Next, suppose $m_l$ is a vertex that does not correspond to a fork clique.
	By Observation~\ref{lem:atmostonechildren+firstchild} we know that there are
	$\new(m_l)$ vertices in
	$A_{l}':=A_{l}\setminus \bigcup_{i<l} A_{i}$ if $A_l$ is not the second child of a fork clique, and
	$\new(m_l)-\inbothchildren(m_{l+1})$ vertices in
	$A_{l}'$ if $A_l$ is the second child of a fork clique (then $m_{l+1}$ is the parent of $m_l$).
	Thus, $A_l'$ and the set $B_{m_l}'$ of newly occurring numbers in $B_{m_l}$ have the same cardinality.
	We let $h_l$ be an extension of $h_{l-1}$ that bijectively maps the vertices in $A_{l}'$ to the numbers in $B_{m_l}'$
	such that $h_l(v)\leq h_l(v')$ implies $\#\anc_{A_l}(v)\leq\#\anc_{A_l}(v')$ for all $v,v'\in A_l'$.
	Then $h_l$ is a bijection between $A_1\cup\dots\cup A_l$ and $[\counter(m_l)]$.
	By inductive assumption we already know that $h_l$ satisfies
	properties~\ref{propbijh1} to~\ref{propbijh4} for all $i<l$. Thus, we only need to show them for $i=l$.

	Let us show property~\ref{propbijh1}:
	Let $m_{i_1},\dots,m_{i_k}$ with $i_1<\cdots<i_k<l$ be the children of $m_l$.
	Further, if $m_l$ corresponds to the second child of a fork clique, then let $m_{i_0}$ be its sibling.
	Clearly, $i_0<i_1$.
	According to Observation~\ref{lem:atmostonechildren+firstchild}
	max clique $A_l$ is the disjoint union of $A_l'$ and the sets $A_l\cap A_{i_j}$ for $j\in [k]$
	if  $A_l$ is not the second child of a fork clique, and for $j\in [0,k]$ otherwise.
	Consequently, $h_l(A_l)$ is the disjoint union of $h_l(A_l')$ and $h_l(A_l\cap A_{i_j})$ for all feasible $j\leq k$.
	First, let us consider the children of $m_l$, that is, all $m_{i_j}$ with $j\in[k]$.
	For each child $m_{i_j}$ of $m_l$, we have $A_l\cap A_{i_j}=S_{i_j}^1$.
	Now suppose for the child  $m_{i_j}$, we have $\isforkclique(m_{i_j})=\text{false}$.
	Then max clique $A_{i_j}$ is neither a fork clique nor the second child of a fork clique.
	Therefore, we have ${h_l(A_l\cap A_{i_j})}=h_l(S_{i_j}^1)=h_{l-1}(S_{i_j}^1)=[\counter(m_{i_j})-\abovecut(m_{i_j})+1,\ \counter(m_{i_j})]$
	by inductive assumption.
	Next, let us assume $\isforkclique(m_{i_j})=\text{true}$.
	Then vertex $m_{i_j}$ corresponds to a fork clique.
	Let $m_{i}$ and $m_{i'}$ be the children of the vertex $m_{i_j}$.
	Since $m_{i'}$ corresponds to the second child of a fork clique, we know by inductive assumption that
	$h_l(S_{i'}^2)=h_{l-1}(S_{i'}^2)=
        {[\counter(m_{i'})-\abovecut(m_{i'})+\inbothchildren(m_{i_j}) +1,\ \counter(m_{i'})]}$.
	Further, $m_{i}$ corresponds neither to a fork clique nor to the second child of a fork clique.
	Consequently,
        $h_l(S_{i}^1)=h_{l-1}(S_{i}^1)={[\counter(m_i)-\abovecut(m_i)+1,\ \counter(m_i)]}$.
	The set $S_{i}^2$ contains exactly the vertices
	$v\in S_{i}^1$ with $\#\anc_{A_{i}}(v)\not=2$.
	Therefore, property~\ref{propbijh2} yields that
	$h_l(S_{i}^2)={[\counter(m_i)-\abovecut(m_i)+\inbothchildren(m_{i_j})+1,\ \counter(m_i)]}$.
	Clearly, since max clique $A_{i_j}$ is a fork clique, the set $h_l(A_l\cap A_{i_j})=h_l(S_{i_j}^1)$ is the disjoint union of
	the sets
	$h_l(S_{i}^2)$ and $h_l(S_{i'}^2)$.
	Now suppose the vertex $m_l$ corresponds to the second child of a fork clique, and let us consider $m_{i_0}$, the sibling of $m_l$.
	The vertex $m_{i_0}$ corresponds neither to a fork clique nor to the second child of a fork clique.
	Thus, we have
	\[h_l(S_{i_0}^1)=h_{l-1}(S_{i_0}^1)={[\counter(m_{i_0})-\abovecut(m_{i_0})+1,\ \counter(m_{i_0})]}.\]
	The set $A_l\cap A_{i_0}$ contains precisely the vertices
	$v\in S_{i_0}^1$ with $\#\anc_{A_{i_0}}(v)=2$,
	that is, the vertices that are contained in the parent $A_{l+1}$ of the max cliques $A_{i_0}$ and $A_{l}$
	and that are also contained in both of $A_{l+1}$'s children.
	As a consequence, 	property~\ref{propbijh2} implies that
    \begin{align*}
    h_l(A_l\cap A_{i_0}) &= [\counter(m_{i_0})-\abovecut(m_{i_0})+1, \\
	                     &\hspace{18pt} \counter(m_{i_0})-\abovecut(m_{i_0})+\inbothchildren(m_{l+1})],
    \end{align*}
	where the vertex $m_{l+1}$ is the parent of the two vertices $m_l$ and $m_{i_0}$.
	Finally, by the definition of the mapping $h_l$ we know that
	$h_l(A_l')={[\counter(m_l)-\new(m_l)+1,\ \counter(m_l)]}\hspace{1pt}$
    if the vertex $m_l$ does not correspond to the second child of a fork clique,
	and that
    $h_l(A_l')={[\counter(m_l)-\new(m_l)+\inbothchildren(m_{l+1})+1,\ \counter(m_l)]}$
    otherwise.
	Thus, we have shown that the disjoint union of $h_l(A_l')$ and the sets $h_l(A_l\cap A_{i_j})$ for all feasible $j\leq k$
	is exactly the set $B_{m_l}$.
	Hence, $h_l(A_l)=B_{m_l}$.

	We prove the remaining properties separately for star cliques and for
	max cliques that are neither star nor fork cliques.
	We first consider the case where $A_l$ is a star clique.
	Let us show property~\ref{propbijh2}. We have to prove that
	$\#\anc_{A_l}(v)\leq\#\anc_{A_l}(v')$ for vertices $v,v'\in A_l$
	with $h_l(v)\leq h_l(v')$ if $A_l$ is neither a fork clique nor the second child of a fork clique.
	Thus, suppose $A_l$ is a star clique that is not the second child of a fork clique.
	Let $A_{i_1},\dots,A_{i_k}$ with $i_1<\cdots<i_k$ be the children of $A_l$.
	As shown above $A_l$ is the disjoint union
	of $A_l'$ and $A_l\cap A_{i_j}$ for all $j\in[k]$.
	As $A_l$ is a star clique we know $\#\anc_{A_l}(v)=1$ for all $v\in A_l\cap A_{i_j}$ for  $j\in[k]$.
	Now let us consider $v,v'\in A_l$ with $h_l(v)\leq h_l(v')$.
	If $v\in A_l\setminus A_l'$ and $v'\in A_l$,
	we have $\#\anc_{A_l}(v)=1$ and therefore $\#\anc_{A_l}(v)\leq \#\anc_{A_l}(v')$.
	It remains to consider the case where $v\in A_l'$.
	Since ${h_l(v)\leq h_l(v')}$ and each number in $h(A_l')$ is greater
	than every number in $h(A_l\setminus A_{l}')$, we also have $v'\in A_l'$.
	Then $\#\anc_{A_l}(v)\leq \#\anc_{A_l}(v')$ follows directly from the construction of $h_l$.
	To show property~\ref{propbijh3} we suppose again that $A_l$ is a star clique
	that is not the second child of a fork clique.
	We have already seen that  $\#\anc_{A_l}(v)=1$ for all $v\in A_l\setminus A_l'$.
	Therefore, we have $S_l^1\subseteq A_l'$.
	Now $h_l(S_l^1)=[\counter(m_l)-\abovecut(m_l)+1,\ \counter(m_l)]$ follows directly from property~\ref{propbijh2}.
	It remains to show property~\ref{propbijh4}.
	This time, assume $A_l$ is a star clique that is the second child of a fork clique $A_{l+1}$.
	According to Observation~\ref{lem:atmostonechildren+firstchild}, all vertices in $A_l$ are either
	contained in a child max clique of $A_l$, in its sibling max clique, or in $A_l'$.
	We know $\#\anc_{A_l}(v)=1$ for all $v\in A_l$ that are also contained in a child of $A_l$,
	and $\#\anc_{A_l}(v)=2$ for $v\in A_l$ if and only if $v$ is also contained in
	the sibling max clique of $A_l$.
	Consequently, $S_l^2$ must be a subset of $A_l'$, and property~\ref{propbijh2} yields that
	$h_l(S_l^2)=[\counter(m_l)-\abovecut(m_l)+\inbothchildren(m_{l+1}) +1,\ \counter(m_l)]$.

	Now let us consider max cliques $A_l$ that are neither fork cliques nor star cliques.
	Then $A_l$ cannot be the parent or a child of a fork clique,
	as the neighbors of fork cliques are star cliques according to Corollary~\ref{lem:forkCliqueNeighbors}.
	Further, $A_l$ must have precisely one child and a parent,
	since $A_l$ has at most one child by Corollary~\ref{cor:mind2KinderForkStar} and max cliques of degree $1$ are trivially star cliques.
	To show property~\ref{propbijh2} let us consider $v,v'\in A_{l}$ with $h_{l}(v)\leq h_l(v')$.
	The child $A_{l-1}$ of max clique $A_l$ is neither a fork clique nor the second child of a fork clique.
	Thus, according to the inductive assumption we have
	$\#\anc_{A_{l-1}}(v)\leq \#\anc_{A_{l-1}}(v')$ for $v,v'\in A_{l-1}$.
	Further, if $v,v'\in A_l'=A_l\setminus A_{l-1}$, then
	$\#\anc_{A_l}(v)\leq \#\anc_{A_l}(v')$ follows directly from the construction of $h_l$.
	Since every number in $h(A_l')$ is greater than each number in $h(A_l\setminus A_{l}')$,
	it remains to consider $v,v'$ with $v\in A_l\setminus A_l'$ and $v'\in A_l'$.
	Let us assume that $\#{\anc_{A_l}(v)}>\#\anc_{A_l}(v')$ for such $v$ and $v'$.
	Then $\CM_{v'}$ is a separator of the path induced by $\CM_v$ in the clique tree of $G$,
	which is a contradiction to Corollary~\ref{cor:noMiddleIntersection}.
	Thus, $\#{\anc_{A_l}(v)}\leq\#\anc_{A_l}(v')$ for all $v,v'\in A_{l}$ with $h_{l}(v)\leq h_l(v')$.
	Next, let us show property~\ref{propbijh3}.
	We know that $S_{l-1}^1=A_l\cap A_{l-1}$.
	As $A_{l-1}$ is neither a fork clique nor the second child of a fork clique,
	we have
	$h_l(S_{l-1}^1)=[\counter(m_{l-1})-\abovecut(m_{l-1})+1,\ \counter(m_{l-1})]$
	by inductive assumption.
	Further, the set $h_l(A_l')$ is precisely the interval $[\counter(m_{l-1})+1,\ \counter(m_{l})]$.
	Hence, $h_l(A_l)$ is the interval $[\counter(m_{l-1})-\abovecut(m_{l-1})+1,\ \counter(m_{l})]$, and
	property~\ref{propbijh3} follows directly from property~\ref{propbijh2}.
	Finally, property~\ref{propbijh4} holds trivially since $A_l$ cannot be the second child of a fork clique.
\end{proof}

\begin{corollary}\label{cor:bijectionhmaxcliques}
Let $m_1,\dots, m_{|\CM|}$ be the post-or\-der traversal sequence of the underlying directed tree of $K(S_G^R)$.
Further, let $B_{m_1},\dots,B_{{m_{|\CM|}}}$ be the cliques computed by the algorithm and
$A_1,\dots,A_{|\CM|}$ be a transferred traversal sequence.
Then there exists a bijection~$h$ between $V$ and $[|V|]$,
so that for all $i\in[|\CM|]$ we have $h(A_i)=B_{m_i}$.
\end{corollary}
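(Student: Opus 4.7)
The plan is to apply Lemma~\ref{clm:bijectionh} at the final index $l = |\CM|$ and to verify that the domain and codomain of the resulting bijection coincide with $V$ and $[|V|]$, respectively. Concretely, I would set $h := h_{|\CM|}$, where $h_{|\CM|}$ is the bijection provided by Lemma~\ref{clm:bijectionh} between $A_1 \cup \dots \cup A_{|\CM|}$ and $[\counter(m_{|\CM|})]$. Property~\ref{propbijh1} of the lemma, applied uniformly to all $i \in [|\CM|]$, then gives the required equality $h(A_i) = B_{m_i}$.

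It remains to identify the domain and codomain. For the domain, I would note that every vertex $v \in V$ is contained in at least one maximal clique of $G$ (the singleton $\{v\}$ is a clique and extends to some element of $\CM$), so $\bigcup_{i=1}^{|\CM|} A_i = V$. For the codomain, I would argue that the natural number $\counter(m_{|\CM|})$ equals $|V|$: since $h_{|\CM|}$ is a bijection whose domain is $V$, its image $[\counter(m_{|\CM|})]$ must have cardinality $|V|$, forcing $\counter(m_{|\CM|}) = |V|$. Thus $h$ is a bijection from $V$ onto $[|V|]$ with the stated property.

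There is essentially no obstacle here; the work has all been done in Lemma~\ref{clm:bijectionh}, and the corollary merely extracts the top-level case. The only point one should take care of is verifying that no vertex of $G$ is left out of the union of the $A_i$, which is immediate from the definition of max cliques in a graph whose vertex set is nonempty.
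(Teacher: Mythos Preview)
Your proposal is correct and matches the paper's intent: the paper states the corollary without a separate proof, treating it as the immediate specialization of Lemma~\ref{clm:bijectionh} to $l=|\CM|$, which is exactly what you do. Your explicit verification that $\bigcup_i A_i = V$ and that $\counter(m_{|\CM|})=|V|$ (via the cardinality of the bijection) fills in the only details left implicit.
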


\noindent
Let $m_1,\dots, m_{|\CM|}$ be the post-or\-der traversal sequence of the underlying directed tree of $K(S_G^R)$,
and let $B_{m_1},\dots,B_{{m_{|\CM|}}}$ be the cliques computed by the algorithm.
We define the ordered graph $K(G)=(V_K',E_K',\leq_K')$ as follows:
We let $V_K'$ be the set $[|V|]$, relation $\leq_K'$ be the natural linear order on $[|V|]$, and
we let $\{v,v'\}\in E_K'$ if and only if $v\not=v'$ and there exists an $i\in[|\CM|]$ such that
$v,v'\in B_{m_i}$.

\begin{corollary}\label{cor:MaxCliquesOrderedCopy}
	The presented algorithm computes the max cliques of the ordered graph $K(G)=(V_K',E_K',\leq_K')$, which is an ordered copy of $G$.
\end{corollary}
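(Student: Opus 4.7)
The plan is to leverage Corollary~\ref{cor:bijectionhmaxcliques}, which provides a bijection $h \colon V \to [|V|]$ with $h(A_i) = B_{m_i}$ for every $i \in [|\CM|]$, and derive the corollary as a short bookkeeping argument on top of it. What remains is to check that the edge relation $E_K'$, which was defined purely in terms of co-occurrence in the computed sets $B_{m_i}$, is exactly the $h$-image of $E$, and to observe that the $B_{m_i}$'s are then the max cliques of $K(G)$.

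First, I would verify that $h$ is a graph isomorphism between $G$ and $(V_K',E_K')$. In the forward direction, if $\{u,v\} \in E$ then $u$ and $v$ are contained in some common max clique $A_i$ of $G$; by $h(A_i) = B_{m_i}$ we get $h(u), h(v) \in B_{m_i}$ with $h(u) \neq h(v)$, hence $\{h(u),h(v)\} \in E_K'$ by definition. Conversely, if $\{a,b\} \in E_K'$, the definition of $E_K'$ gives some $i$ with $a,b \in B_{m_i} = h(A_i)$, so $h^{-1}(a), h^{-1}(b)$ are distinct vertices in the clique $A_i$, and thus $\{h^{-1}(a), h^{-1}(b)\} \in E$. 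Both directions rely only on the standard fact that every edge of $G$ lies in some max clique.

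Once $h$ is known to be an isomorphism, max cliques are preserved, so $\{B_{m_i}\}_{i \in [|\CM|]} = \{h(A_i)\}_{i \in [|\CM|]}$ is exactly the family of max cliques of $(V_K',E_K')$. This confirms that the algorithm outputs the max cliques of $K(G)$. Since $\leq_K'$ is by construction the natural linear order on $V_K' = [|V|]$, the triple $K(G) = (V_K',E_K',\leq_K')$ is an ordered copy of $G$.

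I do not expect any substantive obstacle here: all the structural work is already carried out inside Lemma~\ref{clm:bijectionh} and packaged by Corollary~\ref{cor:bijectionhmaxcliques}. The only genuinely new ingredient is the trivial observation that, through the bijection $h$, the ``pair lies in some $B_{m_i}$'' relation on $[|V|]$ matches the ``pair lies in some $A_i$'' relation on $V$, which is the edge relation of $G$.
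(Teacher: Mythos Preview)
Your proposal is correct and follows essentially the same approach as the paper: both invoke Corollary~\ref{cor:bijectionhmaxcliques} to obtain the bijection $h$, verify that $h$ is a graph isomorphism by matching the ``lies in a common $A_i$'' and ``lies in a common $B_{m_i}$'' characterizations of the respective edge relations, and conclude that the $B_{m_i}$ are the max cliques of the ordered copy $K(G)$.
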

\begin{proof}
Let $m_1,\dots, m_{|\CM|}$ be the post-or\-der traversal sequence,
$B_{m_1},\dots,B_{{m_{|\CM|}}}$ be the cliques computed by the algorithm and
$A_1,\dots,A_{|\CM|}$ be a transferred traversal sequence.
By Corollary~\ref{cor:bijectionhmaxcliques}, there exists a bijection~$h$ between $V$ and $[|V|]$,
so that for all $i\in[|\CM|]$ we have $h(A_i)=B_{m_i}$.
Then $h$ is a graph isomorphism between $G$ and the graph $(V_K',E_K')$,
because for all $v,v'\in V$:
\begin{align*}
 &\text{There is an edge between $v$ and $v'$ in $G$.}\\
 \iff\quad
 &\text{There exists an $i\in [|\CM|]$ such that $v,v'\in A_{i}$ and $v\not=v'$.}\\
 \iff\quad
 &\text{There exists an $i\in [|\CM|]$ such that $h(v),h(v')\in B_{m_i}$ and $h(v)\not=h(v')$.}\\
  \iff\quad
 &\text{There is an edge between $h(v)$ and $h(v')$ in $(V_K',E_K')$.}
\end{align*}
Consequently, $K(G)$ is an ordered copy of $G$ and the computed cliques $B_{m_1},\dots,B_{{m_{|\CM|}}}$ are max cliques.
\end{proof}

\subsection*{Analysis of Space Complexity}

Finally we show that the presented algorithm only needs logarithmic space.
This finishes the proof of Lemma~\ref{lem:logalgmaxcliques}, that is,
this shows that there exists a log\-a\-rith\-mic-space algorithm that, given the canon $K(S_G^R)$ of a supplemented clique tree of
a connected chordal claw-free graph $G$, computes the set of max cliques of an ordered copy $K(G)$ of~$G$.

\begin{proof}[Proof of Lemma~\ref{lem:logalgmaxcliques}]
According to Corollary~\ref{cor:MaxCliquesOrderedCopy},
the presented algorithm computes the max cliques of an ordered copy of $G$,
given the canon $K(S_G^R)$ of a supplemented clique tree of
a connected chordal claw-free graph $G$.
It remains to show that the algorithm only needs logarithmic space.
In the following, we analyze the space required by the algorithm.

During the depth-first traversal,
we need to remember the current vertex, the last move and $\counter$.
As we want to visit the vertices in post-or\-der,
we also compute the next move at each vertex.
If it is not \textbf{down}, then we visit the current vertex for the last time and it belongs
to the  post-or\-der traversal sequence.
Clearly, post-or\-der depth-first traversal is possible in logarithmic space.

At each vertex $m$, we distinguish between different cases and compute the partial intervals that form $B_m$.
In order to do this, we need the
values $\new(m')$, $\abovecut(m')$, $\inbothchildren(m')$,
$\isforkclique(m')$, $\isforkchild(m')$ and $\counter(m')$
for certain vertices $m'\!$.
Note that we do not need to remember any of these values. We can recompute them whenever we need them.

For each vertex $m'$, the values $\new(m')$, $\abovecut(m')$ and $\inbothchildren(m')$
can be determined in logarithmic space. We obtain these values directly from the color of~$m'\!$.
Further, $\isforkclique(m')$ can be computed in logarithmic space for every $m'\!$.
Fork cliques are the only kind of max cliques that contain a vertex which is also
contained in (at least) two child max cliques. (Observation~\ref{lem:atmostonechildren+firstchild}).
Thus, we can use the value $\inbothchildren(m')$ to determine whether a vertex $m'$ corresponds to a fork clique,
that is, whether  $\isforkclique(m')$ is $\text{true}$.
The value $\isforkchild(m')$ can be computed in logarithmic space,
by deciding whether $m'$ is the second child of a vertex corresponding to a fork clique.

For every $m'\!$, we can recompute $\counter(m')$ in logarithmic space by performing a new post-or\-der traversal.
Let us look at the value $\counter$ after visiting a vertex $m''\!$ during this new post-or\-der traversal:
If $\isforkclique(m'')$ is $\text{true}$, $\counter$ does not change.
If $\isforkclique(m'')$ is $\text{false}$, then depending on the value $\isforkchild(m'')$, the value $\counter$ is increased
by $\new(m'')$ or by $\new(m'')-\inbothchildren(p'')$ where $p$ is the parent vertex of $m''\!$.
Hence, a new post-or\-der traversal allows us to recompute $\counter(m')$.

We can conclude that the presented algorithm only needs logarithmic space.
Hence, there is a log\-a\-rith\-mic-space algorithm that, given the canon $K(S_G^R)$ of a supplemented clique tree of
a connected chordal claw-free graph $G$, computes the set of max cliques of an ordered copy of~$G$.
\end{proof}

\section{Implications}\label{sec:implications}
In the previous section, we have shown that the class of chordal claw-free graphs admits $\LREC_=$-definable canonization.
This result has interesting consequences for descriptive complexity theory and computational graph theory.
We present these consequences in this section.

The following corollary provides a logical characterization of $\LOGSPACE$ on the class of chordal claw-free graphs.
It is an implication of Theorem~\ref{thm:mainresult} and Proposition~\ref{prop:capturingLREC}.
\begin{corollary}\label{cor:LRECcapturesL}
 $\LREC_=$ captures $\LOGSPACE$ on the class of chordal claw-free graphs.
\end{corollary}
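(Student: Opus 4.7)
The plan is to derive this corollary immediately by combining two results already established in the paper: Theorem~\ref{thm:mainresult}, which states that the class of chordal claw-free graphs admits $\LREC_=$-definable canonization, and Proposition~\ref{prop:capturingLREC}, which says that any graph class admitting $\LREC_=$-definable canonization is a class on which $\LREC_=$ captures $\LOGSPACE$. There is essentially no new mathematical content here; the corollary is a one-line chaining of these facts.

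Concretely, I would first invoke Theorem~\ref{thm:mainresult} to fix a (parameterized) $\LREC_=$-canonization $\Theta$ of $\CHCL$. Then, given any subclass $\CD \subseteq \CHCL$ that is $\LOGSPACE$-decidable, I would appeal to Corollary~\ref{cor:LRECorderedstructures} (which says that $\LREC_=$ captures $\LOGSPACE$ on ordered graphs) to obtain an $\LREC_=$-sentence $\psi$ defining $\CD$ on the class of ordered copies produced by $\Theta$. Finally, I would pull back $\psi$ under the parameterized transduction $\Theta$ using Proposition~\ref{prop:transduction-lemma} (closure of $\LREC_=$ under parameterized $\LREC_=$-transductions) to obtain an $\LREC_=$-sentence defining $\CD$ on $\CHCL$. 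The converse direction — that every $\LREC_=$-definable class on $\CHCL$ is $\LOGSPACE$-decidable — follows from the known fact that $\LREC_=$-formulas can be evaluated in logarithmic space.

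Since all of this packaging is exactly what Proposition~\ref{prop:capturingLREC} already asserts, the cleanest presentation is simply to cite Theorem~\ref{thm:mainresult} and Proposition~\ref{prop:capturingLREC} in one sentence. There is no real obstacle; the genuine work of the paper sits in Theorem~\ref{thm:mainresult}, and this corollary is its advertised payoff for descriptive complexity on $\CHCL$.
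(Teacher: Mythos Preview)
Your proposal is correct and matches the paper's approach exactly: the paper states explicitly that this corollary is an implication of Theorem~\ref{thm:mainresult} and Proposition~\ref{prop:capturingLREC}, with no additional argument given. Your expanded sketch of how Proposition~\ref{prop:capturingLREC} works internally is accurate but unnecessary here, as you yourself note.
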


\noindent
Since $\LREC_=$ is contained in $\FPC$~\cite{GGHL12}, Theorem~\ref{thm:mainresult} also implies that
there exists an $\FPC$-can\-on\-iza\-tion of the class of chordal claw-free graphs.
As a consequence (see~\cite{ebbflu99}, e.g.),
we also obtain a logical characterization of $\PTIME$ on the class of chordal claw-free graphs:
\begin{corollary}
 $\FPC$ captures $\PTIME$ on the class of chordal claw-free graphs.
\end{corollary}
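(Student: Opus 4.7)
The plan is to derive the corollary from Theorem~\ref{thm:mainresult} using only ingredients that are already available in the literature; no new combinatorial work is needed. The structure of the argument parallels the derivation of Corollary~\ref{cor:LRECcapturesL} from Theorem~\ref{thm:mainresult} and Proposition~\ref{prop:capturingLREC}, but with $\LREC_=$ and $\LOGSPACE$ replaced by $\FPC$ and $\PTIME$.

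First I would invoke Theorem~\ref{thm:mainresult} to obtain a parameterized $\LREC_=$-can\-on\-iza\-tion of the class $\CHCL$ of chordal claw-free graphs. The key syntactic fact I would then quote is that $\LREC_=$ is contained in $\FPC$, which is established in~\cite{GGHL12}. Consequently, the same tuple of formulas that witnesses the $\LREC_=$-can\-on\-iza\-tion can be read as a parameterized $\FPC$-trans\-duc\-tion, so $\CHCL$ admits $\FPC$-de\-fin\-a\-ble canonization.

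Second, I would apply the $\FPC$-analogue of Proposition~\ref{prop:capturingLREC}: if a class $\CC$ of graphs admits $\FPC$-de\-fin\-a\-ble canonization, then $\FPC$ captures $\PTIME$ on $\CC$. This analogue rests on two standard facts. On the one hand, by the Immerman--Vardi theorem, $\FP$ (and hence $\FPC$) captures $\PTIME$ on ordered structures. On the other hand, $\FPC$ is closed under (parameterized) $\FPC$-trans\-duc\-tions, so every $\FPC$-sentence that defines a $\PTIME$-property on ordered graphs can be pulled back along the $\FPC$-can\-on\-iza\-tion of $\CC$ to an $\FPC$-sen\-tence over $\CC$ (using parameters where necessary and quantifying existentially over suitable parameter tuples, just as in the proof of Proposition~\ref{prop:capturingLREC}). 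This transfers the capturing result from ordered graphs to $\CC$.

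There is essentially no obstacle in this argument: it is a direct composition of Theorem~\ref{thm:mainresult}, the inclusion $\LREC_=\subseteq\FPC$, and the standard transfer of a capturing result along definable canonization. The only point that requires a brief verbal justification is that the $\FPC$-version of Proposition~\ref{prop:capturingLREC} follows by exactly the same proof as its $\LREC_=$-version, since everything used there (closure under parameterized transductions, definability of simple arithmetic, and capturing of the target complexity class on ordered structures) holds for the pair $(\FPC,\PTIME)$ as well as for $(\LREC_=,\LOGSPACE)$.
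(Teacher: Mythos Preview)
Your proposal is correct and follows essentially the same route as the paper. The paper derives the corollary by observing that $\LREC_=\subseteq\FPC$~\cite{GGHL12}, so Theorem~\ref{thm:mainresult} yields an $\FPC$-canonization of $\CHCL$, and then cites the standard fact (see~\cite{ebbflu99}) that $\FPC$-definable canonization implies that $\FPC$ captures $\PTIME$ on the class; your more explicit justification of this last step via Immerman--Vardi and closure under parameterized transductions is simply a spelled-out version of what the paper leaves to the reference.
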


\noindent
Because of $\LREC_=$'s log\-a\-rith\-mic-space data complexity,
Theorem~\ref{thm:mainresult} further yields the two subsequent corollaries.
These corollaries allow us to reclassify the computational complexity of
graph canonization and the graph isomorphism problem on the class of chordal claw-free graphs.

\begin{corollary}\label{cor:logspacecanon}
    There exists a log\-a\-rith\-mic-space canonization algorithm for the class of chordal claw-free graphs.
\end{corollary}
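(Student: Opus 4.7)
The plan is to derive the logarithmic-space canonization algorithm by combining Theorem~\ref{thm:mainresult} with the fact, used several times in the paper, that $\LREC_=$-formulas can be evaluated in logarithmic space. Theorem~\ref{thm:mainresult} gives a parameterized $\LREC_=[\{E\},\{E,\leq\}]$-trans\-duc\-tion $\Theta(\bar{x})$ that canonizes every chordal claw-free graph $G$: for every such $G$ there is at least one tuple $\bar{p} \in G^{\bar{x}}$ with $(G,\bar{p}) \in \Dom(\Theta(\bar{x}))$, and for every such $\bar{p}$ the structure $\Theta[G,\bar{p}]$ is an ordered copy of $G$. Since $\LREC_=$-formulas are evaluable in $\LOGSPACE$, given $G$ together with any specific $\bar{p}$, a logarithmic-space machine can decide $(G,\bar{p}) \in \Dom(\Theta(\bar{x}))$ and, in case of membership, compute any desired entry of the universe or of the relations of $\Theta[G,\bar{p}]$.

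Next I would turn this into a proper canonization algorithm. The tuple $\bar{x}$ has a fixed constant arity, so there are only polynomially many candidate tuples $\bar{p}$, and they can be enumerated in logarithmic space. The algorithm iterates through these candidates and outputs the lexicographically smallest ordered copy in the (nonempty) collection $\{\Theta[G,\bar{p}] : (G,\bar{p}) \in \Dom(\Theta(\bar{x}))\}$. To avoid storing the candidate outputs, the minimum is constructed bit by bit: after committing a prefix of the output, determine each next bit by scanning all still-feasible $\bar{p}$ and calling the logarithmic-space $\LREC_=$-evaluator to produce the corresponding bit of $\Theta[G,\bar{p}]$. All bookkeeping (current bit index, current candidate $\bar{p}$, current minimum bit) fits in $O(\log n)$ space.

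The only point that needs a sentence of justification is that the result really is a canon, i.e.\ depends only on the isomorphism type of $G$. This is immediate from the fact that the transduction is defined in $\LREC_=[\{E\}]$ and is therefore isomorphism-invariant: for any isomorphism $\phi\colon G \to G'$, membership in $\Dom(\Theta(\bar{x}))$ transfers from $\bar{p}$ to $\phi(\bar{p})$, and the ordered copies $\Theta[G,\bar{p}]$ and $\Theta[G'\!,\phi(\bar{p})]$ are isomorphic as ordered structures, hence equal. Consequently the set of produced ordered copies, and therefore its lex-minimum, is the same for $G$ and $G'$. I expect no significant obstacle here; the entire corollary is essentially a matter of cashing in the logarithmic-space complexity of $\LREC_=$ against the definable canonization provided by Theorem~\ref{thm:mainresult}, together with the standard lex-minimum trick for eliminating parameters.
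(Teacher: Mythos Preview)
Your proposal is correct and follows exactly the reasoning the paper intends: the paper does not give a detailed proof of this corollary but simply states that it follows from Theorem~\ref{thm:mainresult} together with the logarithmic-space data complexity of $\LREC_=$. Your argument fleshes out the one step the paper leaves implicit, namely how to eliminate the parameters of the parameterized canonization, and the lex-minimum trick you use is the standard way to do this in logarithmic space. One minor wording point: ``isomorphic as ordered structures, hence equal'' is slightly imprecise, since the universes of $\Theta[G,\bar{p}]$ and $\Theta[G'\!,\phi(\bar{p})]$ are literally different sets; what you mean (and what suffices) is that their natural encodings as bit strings coincide, which indeed follows because linearly ordered isomorphic structures have a unique isomorphism.
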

\begin{corollary}\label{cor:logspaceisom}
    On the class of chordal claw-free graphs, the graph isomorphism problem can be computed in logarithmic space.
\end{corollary}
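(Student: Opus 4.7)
The plan is to reduce the isomorphism problem on chordal claw-free graphs to graph canonization, which by Corollary~\ref{cor:logspacecanon} is computable in logarithmic space. Concretely, given two chordal claw-free graphs $G_1$ and $G_2$ on $n$ vertices as input, I would compute their canons $K(G_1)$ and $K(G_2)$ and accept if and only if the two ordered graphs are literally identical. Since a canonization mapping sends isomorphic structures to isomorphic canons, and the canons are ordered structures so isomorphic ordered copies must be identical, this yields a correct isomorphism test.

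The main task is to verify that everything fits into logarithmic space. First, I would invoke Corollary~\ref{cor:logspacecanon} to obtain a logarithmic-space procedure $\mathcal{A}$ that on input of a chordal claw-free graph outputs its canon. Second, to compare $K(G_1)$ and $K(G_2)$ I would iterate over all ordered pairs $(i,j)\in [n]^2$ of vertices of the canons; for each such pair, invoke $\mathcal{A}$ as a subroutine on $G_1$ to determine whether $\{i,j\}\in E(K(G_1))$, invoke $\mathcal{A}$ again on $G_2$ to determine whether $\{i,j\}\in E(K(G_2))$, and reject if the two answers differ. Accept if all pairs match and the vertex counts agree. The two loop counters and the $O(\log n)$ workspace used by each invocation of $\mathcal{A}$ together require only logarithmic space, using the standard fact that logarithmic-space algorithms compose under subroutine calls without blowing up the space bound (the output of $\mathcal{A}$ is not stored in full; individual output bits are recomputed on demand).

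The only subtle point — and the one I expect to need the most care — is that Corollary~\ref{cor:logspacecanon} gives a canonization algorithm in the sense that its \emph{output}, viewed as a total function, is logarithmic-space computable; to use it as a subroutine that answers single-bit queries about the canon, I rely on the standard closure of $\LOGSPACE$ under composition with logarithmic-space functions (implicit logspace computability of the output). Once that is in hand, correctness follows from the defining property of a canonization mapping ($G_1\cong G_2$ iff $K(G_1)\cong K(G_2)$ iff $K(G_1) = K(G_2)$, the last equivalence holding because both are ordered), and the logarithmic space bound follows by inspection.
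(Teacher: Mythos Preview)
Your proposal is correct and matches the paper's (implicit) approach: the paper states this corollary without proof, deriving it directly from the logarithmic-space canonization algorithm of Corollary~\ref{cor:logspacecanon} together with the standard reduction of isomorphism testing to canon comparison. Your careful treatment of the composition-of-logspace-functions issue is more detailed than anything the paper spells out, but the underlying argument is the same.
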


\section{Conclusion}\label{sec:conclusion}

Currently, there exist hardly any logical characterizations of $\LOGSPACE$ on non-trivial natural classes of unordered
structures. The only ones previously presented are that $\LREC_=$ captures $\LOGSPACE$ on (directed) trees and interval graphs~\cite{GGH+11,GGHL12}.
By showing that $\LREC_=$ captures $\LOGSPACE$ also on the class of chordal claw-free graphs,
we contribute a further characterization of $\LOGSPACE$ on an unordered graph class.
It would be interesting to investigate
further classes of unordered structures such as the class of planar graphs or classes of graphs of
bounded treewidth. The author conjectures that $\LREC_=$ captures $\LOGSPACE$ on the class of all planar graphs that are equipped with an embedding.

We also make a contribution to the investigation of $\PTIME$’s characteristics on restricted classes of graphs.
In this paper, we prove that $\FPC$ captures $\PTIME$ on the class of chordal claw-free graphs.
Thus, the class of chordal claw-free graphs can be added to the (so far) short list of graph classes that are not closed under taking minors
and on which $\PTIME$ is captured.

Our main result, which states that the class of chordal claw-free graphs admits $\LREC_=$-definable canonization,
does not only imply that $\LREC_=$ captures $\LOGSPACE$ and $\FPC$ captures $\PTIME$ on this graph class, but also
that there exists a logarithmic-space canonization algorithm for the class of chordal claw-free graphs.
Hence, the isomorphism problem for this graph class is solvable in logarithmic space.

\section*{Acknowledgements.}
The author wants to thank Nicole Schweikardt and the reviewers for
helpful comments that contributed to improving the paper.

\bibliographystyle{alpha}
\bibliography{csl.bib}

\begin{thebibliography}{GGHL12}

\bibitem[BP93]{blairpeyton}
J.R.S. Blair and B.~Peyton.
\newblock An introduction to chordal graphs and clique trees.
\newblock In A.~George, J.R. Gilbert, and J.W.H. Liu, editors, {\em Graph
  Theory and Sparse Matrix Computation}, pages 1--29. Springer New York, 1993.

\bibitem[Bun74]{buneman}
P.~Buneman.
\newblock A characterisation of rigid circuit graphs.
\newblock {\em Discrete Mathematics}, 9:205--212, 1974.

\bibitem[CFI92]{caifurimm92}
J.~Cai, M.~F{\"u}rer, and N.~Immerman.
\newblock An optimal lower bound on the number of variables for graph
  identification.
\newblock {\em Combinatorica}, 12:389--410, 1992.

\bibitem[EF99]{ebbflu99}
H.-D. Ebbinghaus and J.~Flum.
\newblock {\em Finite Model Theory}.
\newblock Springer-Verlag, 1999.

\bibitem[EI00]{eteimm00}
K.~Etessami and N.~Immerman.
\newblock Tree canonization and transitive closure.
\newblock {\em Information and Computation}, 157(1--2):2--24, 2000.

\bibitem[Fag74]{fag74}
R.~Fagin.
\newblock Generalized first--order spectra and polynomial--time recognizable
  sets.
\newblock In R.M. Karp, editor, {\em Complexity of Computation, SIAM-AMS
  Proceedings 7}, pages 43--73, 1974.

\bibitem[Gav74]{gavril}
F.~Gavril.
\newblock The intersection graphs of subtrees in trees are exactly the chordal
  graphs.
\newblock {\em Journal of Combinatorial Theory, Series {B}}, 16(1):47--56,
  1974.

\bibitem[GGHL11]{GGH+11}
M.~Grohe, B.~Gru{\ss}ien, A.~Hernich, and B.~Laubner.
\newblock L-recursion and a new logic for logarithmic space.
\newblock In {\em CSL}, pages 277--291, 2011.

\bibitem[GGHL12]{GGHL12}
M.~Grohe, B.~Gru{\ss}ien, A.~Hernich, and B.~Laubner.
\newblock L-recursion and a new logic for logarithmic space.
\newblock {\em Logical Methods in Computer Science}, 9(1), 2012.

\bibitem[GM99]{GroheM99boundedtreewidth}
M.~Grohe and J.~Mari{\~{n}}o.
\newblock Definability and descriptive complexity on databases of bounded
  tree-width.
\newblock In {\em Proceedings of the 7th International Conference on Database
  Theory (ICDT)}, pages 70--82, 1999.

\bibitem[Gro98]{Grohe98planar}
M.~Grohe.
\newblock Fixed-point logics on planar graphs.
\newblock In {\em Proceedings of the 13th Annual {IEEE} Symposium on Logic in
  Computer Science ({LICS})}, pages 6--15, 1998.

\bibitem[Gro08]{Grohe08a}
M.~Grohe.
\newblock Definable tree decompositions.
\newblock In {\em Proceedings of the 23rd Annual {IEEE} Symposium on Logic in
  Computer Science ({LICS})}, pages 406--417, 2008.

\bibitem[Gro10a]{Grohe10linegraphs}
M.~Grohe.
\newblock Fixed-point definability and polynomial time on chordal graphs and
  line graphs.
\newblock In A.~Blass, N.~Dershowitz, and W.~Reisig, editors, {\em Fields of
  Logic and Computation, Essays Dedicated to {Y}uri {G}urevich on the Occasion
  of His 70th Birthday}, pages 328--353, 2010.

\bibitem[Gro10b]{gro10}
M.~Grohe.
\newblock Fixed-point definability and polynomial time on graphs with excluded
  minors.
\newblock In {\em Proceedings of the 25th Annual {IEEE} Symposium on Logic in
  Computer Science (LICS)}, 2010.

\bibitem[Gro13]{groheDC}
M.~Grohe.
\newblock Descriptive complexity, canonisation, and definable graph structure
  theory.
\newblock
  \url{http://lii.rwth-aachen.de/images/Mitarbeiter/pub/grohe/cap/all.pdf},
  2013.
\newblock [Online; accessed 2017-03-31].

\bibitem[Gru17a]{chordclaw2017}
B.~Gru{\ss}ien.
\newblock Capturing logarithmic space and polynomial time on chordal claw-free
  graphs.
\newblock In {\em Proceedings of the 26th {EACSL} Annual Conference on Computer
  Science Logic ({CSL})}, pages 26:1--26:19, 2017.

\bibitem[Gru17b]{diss}
B.~Gru{\ss}ien.
\newblock {\em Capturing Polynomial Time and Logarithmic Space using Modular
  Decompositions and Limited Recursion}.
\newblock PhD thesis, Humboldt-Universit{\"a}t zu Berlin, 2017.

\bibitem[Gru17c]{G17lics}
B.~Grußien.
\newblock Capturing polynomial time using modular decomposition.
\newblock In {\em Proceedings of the 32nd Annual {ACM/IEEE} Symposium on Logic
  in Computer Science ({LICS})}, 2017.

\bibitem[Imm86]{imm86}
N.~Immerman.
\newblock Relational queries computable in polynomial time.
\newblock {\em Information and Control}, 68:86--104, 1986.

\bibitem[Imm87]{imm87}
N.~Immerman.
\newblock Languages that capture complexity classes.
\newblock {\em {SIAM} Journal on Computing}, 16:760--778, 1987.

\bibitem[Lau10]{Laubner10}
B.~Laubner.
\newblock Capturing polynomial time on interval graphs.
\newblock In {\em Proceedings of the 25th Annual {IEEE} Symposium on Logic in
  Computer Science ({LICS})}, pages 199--208, 2010.

\bibitem[Lau11]{laubner11diss}
B.~Laubner.
\newblock {\em The Structure of Graphs and New Logics for the Characterization
  of Polynomial Time}.
\newblock PhD thesis, Humboldt-Universit{\"a}t zu Berlin, 2011.

\bibitem[Lin92]{Lindell:Tree-Canon}
S.~Lindell.
\newblock A logspace algorithm for tree canonization.
\newblock In {\em Proceedings of the 24th Annual ACM Symposium on Theory of
  Computing (STOC)}, pages 400--404, 1992.

\bibitem[Rei05]{rei05}
O.~Reingold.
\newblock Undirected {ST}-connectivity in log-space.
\newblock In {\em Proceedings of the 37th Annual {ACM} Symposium on Theory of
  Computing (STOC)}, pages 376--385, 2005.

\bibitem[Sed02]{Sedgewick1998}
R.~Sedgewick.
\newblock {\em Algorithms in Java: {P}arts 1-4}.
\newblock Addison-Wesley, 3rd edition, 2002.

\bibitem[Var82]{var82}
M.Y. Vardi.
\newblock The complexity of relational query languages.
\newblock In {\em Proceedings of the 14th Annual {ACM} Symposium on Theory of
  Computing (STOC)}, pages 137--146, 1982.

\bibitem[Wal72]{walter}
J.R. Walter.
\newblock {\em Representations of Rigid Cycle Graphs}.
\newblock PhD thesis, Wayne State University, 1972.

\end{thebibliography}

\end{document}